\newtheorem{theorem}{Theorem}
\newtheorem{definition}{Definition}
 \newtheorem{proposition}{Proposition}
\newcommand{\ALGCOM}[1]{\COMMENT{\emph{#1}}}
\newcommand{\ALGCOMR}[1]{\hfill \COMMENT{\emph{#1}}}
\newcommand{\Q}[1]{\cQ\left( #1 \right)}
\newcommand{\tabref}[1]{Table~\ref{#1}}
\newcommand{\figref}[1]{Figure~\ref{#1}}
\newcommand{\algref}[1]{Algorithm~\ref{#1}}
\newcommand{\secref}[1]{Section~\ref{#1}}
\newcommand{\betain}[1]{\bsbeta_{#1,\rm{in}}}
\newcommand{\betaout}[1]{\bsbeta_{#1,\rm{out}}}
\DeclareMathOperator*{\argmin}{argmin}
\def\T{{\rm T}}
\def\etad{\eta_*}
\def\alphahat{\hat{\bsalpha}}
\def\alphabar{\bar{\bsalpha}}
\def\Gu{G_{\alphahat(\bu)}}
\def\abar{\bar{\bsalpha}}
\def\Rm{\cR_{\bm}}
\def\RQ{\Rm^{\cQ}}
\def\Rmone{\Rm |_{u_0=1}}
\def\RQone{\RQ |_{u_0=1}}
\def\co{{\rm co}}
\def\interior{{\rm int}}
\def\nQ{n_{\cQ}}
\def\tf{t_{\rm f}}
\def\xL{x_{\rm{L}}}
\def\xR{x_{\rm{R}}}
\def\gmax{\gamma_\text{max}}
\def\rseq{\{ r_\ell \}}
\begin{document}

\begin{frontmatter}

\title{Adaptive change of basis in entropy-based moment closures for linear
kinetic equations}
\date{\today}
\author[umd-ece]{Graham W. Alldredge \corref{cor} 
\fnref{rwth-address,other-support}}
\ead{alldredge@mathcces.rwth-aachen.de}
\address[umd-ece]{Department of Electrical and Computer Engineering
  \& Institute for Systems Research,
  University of Maryland
  College Park, MD 20742 USA
}
\cortext[cor]{Corresponding author.}
\fntext[rwth-address]{Present address: Center for Computational Engineering
Science,
RWTH-Aachen University,
Aachen, Germany}
\author[oak-ridge]{Cory D. Hauck\fnref{cory-support}}
\ead{hauckc@ornl.gov}
\address[oak-ridge]{Computational Mathematics Group,
				Computer Science and Mathematics Division,
				Oak Ridge National Laboratory,
				Oak Ridge, TN 37831 USA }
\author[umd-cs]{Dianne P. O'Leary\fnref{other-support}}
\ead{oleary@cs.umd.edu}
\address[umd-cs]{Department of Computer Science,
  University of Maryland
  College Park, MD 20742 USA}
\author[umd-ece]{Andr{\'e} L.~Tits\fnref{other-support}}
\ead{andre@umd.edu}
  
\fntext[cory-suppport]{This author's research was sponsored by the Office of
Advanced Scientific Computing Research and performed at the Oak Ridge National
Laboratory, which is managed by UT-Battelle, LLC under Contract No.
De-AC05-00OR22725.}
\fntext[other-support]{Supported by the U.S. Department of Energy, under Grant
DESC0001862.}


\begin{abstract}
Entropy-based ($\M_N$) moment closures for kinetic equations are defined by a
constrained optimization problem that must be solved at every point in a
space-time mesh, making it important to solve these optimization problems
accurately and efficiently. We present a complete and practical numerical
algorithm for solving the dual problem in one-dimensional, slab geometries. The
closure is only well-defined on the set of moments that are realizable from a
positive underlying distribution, and as the boundary of the realizable set is
approached, the dual problem becomes increasingly difficult to solve due to
ill-conditioning of the Hessian matrix.  To improve the condition number of the
Hessian, we advocate the use of a change of polynomial basis, defined using a
Cholesky factorization of the Hessian, that permits solution of problems nearer
to the boundary of the realizable set. We also advocate a fixed quadrature
scheme, rather than adaptive quadrature, since the latter introduces unnecessary
expense and changes the computationally realizable set as the quadrature
changes. For very ill-conditioned problems, we use regularization to make the
optimization algorithm robust. We design a manufactured solution and demonstrate
that the adaptive-basis optimization algorithm reduces the need for
regularization. This is important since we also show that regularization slows,
and even stalls, convergence of the numerical simulation when refining the
space-time mesh. We also simulate two well-known benchmark problems. There we
find that our adaptive-basis, fixed-quadrature algorithm uses less
regularization than alternatives, although differences in the resulting numerical
simulations are more sensitive to the regularization strategy than to the choice
of basis.

\end{abstract}

\begin{keyword}
convex optimization \sep realizability \sep
kinetic theory \sep transport \sep entropy-based closures \sep moment equations
\end{keyword}

\end{frontmatter}


\section{Introduction}

Moment methods are commonly used to derive reduced models of kinetic transport.
Rather than fully resolve the kinetic distribution in phase space, moment models
instead track the evolution of a finite number of weighted velocity averages, or
moments of the distribution. Exact equations for these moments inevitably
require missing information about the unknown kinetic distribution that must be
approximated via a closure. Entropy-based closures approximate the full kinetic
distribution by an ansatz that solves a constrained, convex optimization
problem.   In the context of radiative transport \cite{Minerbo-1978,
Dubroca-Feugas-1999}, these models are commonly referred to as $\M_N$ (after
G.N.~Minerbo), where $N$ is the order of the highest-order moments of the model;
see \cite{AHT10} for additional references. These moment models preserve many
fundamental properties of the kinetic description, including positivity, entropy
dissipation, and hyperbolicity \cite{Levermore-1996}.

The primary drawback of the entropy-based approach is computational cost:
at least one optimization problem must be solved at every point on a 
space-time grid and, except for some $M_1$ cases, the optimization must be 
done numerically.  For some moments, the associated optimization problem can be
particularly expensive to solve.  These 
moments lie near the boundary of the \textit{realizable set}, defined as the set of vectors 
that are moments of a positive distribution. For realizable moments near the boundary,
the optimization algorithm may require a large number of iterations to converge (or may not converge at
all) and the solution will be sensitive to small changes in the moments. 
(Indeed, in some contexts, there even are realizable moments for which the
optimization problem has no
solution~\cite{Junk-1998,Junk-2000,Hauck-Levermore-Tits-2008,Schneider-2004}.)

It is most common to solve each optimization problem via
the associated convex dual.
For a smooth entropy function, a standard Newton method was proposed in
\cite{Mead-Papanicolaou-1984}.   
In this context, the difficulty in solving optimization problems for 
moments near the realizable boundary is characterized by an ill-conditioned
Hessian for the dual
objective function.  This matrix is a weighted integral in momentum 
space of a distribution of
rank-one matrices.  It becomes rank deficient (or nearly so) because the
weight function is an approximation of the underlying kinetic density, and for
moments near the realizable boundary, its mass will be concentrated around a
small number of directions in momentum space.  This is a common occurrence in
radiation applications. The sensitivity is further exacerbated by the necessity
of using an inexact quadrature and finite-precision arithmetic to approximate
the integrals. The contributions to the quadrature may effectively be zero for
most of the quadrature points, causing the computed Hessian to be singular.

The singularity in the dual Hessian has been addressed in different ways, 
in particular in the context
of some variations of Newton's method.
In \cite{Turek-1988}, the author takes
advantage of the structure of the Hessian, whose entries are themselves moments
of a known distribution that changes at each iteration.  Using orthogonal
polynomials with respect to this distribution (which are found using a standard
three-point recursion relation \cite[Ch. 22]{abramowitz-stegun-1964}), the
author is able to invert the ill-conditioned Hessian in a stable way. (See
\cite{Wheeler-1974} for an efficient algorithm to evaluate the recursion
coefficients.) More recently, orthogonal polynomials were used in the
multi-dimensional implementation found in \cite{Abramov-2009,Abramov-2009-2}.
There the author applies a BFGS quasi-Newton method (see, e.g.,~\cite[Ch.
6]{NW:06}) and, when the approximate Hessian becomes ill-conditioned, a
Gram-Schmidt procedure is applied to change into a polynomial basis for which
the Hessian is the identity.\footnote{In contrast to our problem,
ill-conditioning in \cite{Abramov-2009, Abramov-2009-2} occurs because the
moments are integrals over an unbounded domain,  so the Hessian is dominated by
moments corresponding to the highest order polynomials \cite{Junk-1998}.}
In \cite{Decarreau-Hilhorst-Lemarichal-Navaza-1992}, a penalized version of the
primal problem was introduced in order to handle nonrealizable moments.  This
modification amounts to Tikhonov regularization of the dual problem, which
also reduces ill-conditioning of the Hessian 
for realizable moments that are near the realizable boundary.
In \cite{Vie-Laurent-Massot-2012}, two modifications to the
Newton method for the solution of the dual problem
were introduced.  The first, which is only practical for a relatively
small number of moments, is to generate an initial guess for the Newton solver
by interpolating values from look-up tables. The second, which is only
practical in one-dimension, is a root finding algorithm to guide the
placement of nodes in the adaptive quadrature used to evaluate
the objective function and its derivatives. 
Finally, in \cite{AHT10}, where a damped Newton method is used, ill-conditioning
of the Hessian near the realizable boundary is avoided in two ways.
First, adaptive quadrature is used to better capture the support of the 
Hessian weight function and thereby increase the number of significant
rank-one contributions.
Second, a regularization method is introduced to move the moments away 
from the realizable boundary.
These two remedies are automatically invoked as needed, since
manual intervention is impractical.

Several specialized techniques for solving the dual problem have also been
suggested. In \cite{Borwein-Huang-1995, Huang-1996}, the authors 
show that, in the case of the Maxwell-Boltzmann entropy, the 
solution of the dual problem is the unique solution of a certain
finite set of linear equations. 
Unfortunately, setting up this linear system requires the 
knowledge of additional moments that are not available to
the closure. In \cite{Bandyopadhyay-Bhattacharya-Biswas-Parthapratim-Drabold-2005},
the authors solve the dual problem by means of a coordinate descent method, 
also known
as Bregman's balancing method \cite{Fang-Rajasekera-Tsao-1997,Bregman-1967},
where each sub-problem is solved using a multiple algebraic reconstruction
technique \cite{Gordon-Bender-Herman-1970,Fang-Rajasekera-Tsao-1997}.
Coordinate descent, however, is known to converge rather slowly
(see, e.g.,~\cite[p.230]{NW:06}).

In this paper, along the lines of~\cite{AHT10}, we employ a damped 
Newton method and investigate ways to better handle hard optimization 
problems near the realizable boundary.
We note that adaptive quadrature complicates realizability, so we do not use it.
Further, we show that regularization affects accuracy and therefore should 
be avoided
whenever possible. For this, we adopt the change-of-basis procedure introduced
in \cite{Abramov-2009,Abramov-2009-2}, albeit with a different implementation.
We focus on the Maxwell-Boltzmann entropy in the one-dimensional
setting on a bounded domain, although our methodology is applicable to general
smooth entropies and moments defined over bounded domains of arbitrary
dimension.

As in \cite{Abramov-2009, Abramov-2009-2}, we observe that the change of
basis makes the optimization more stable and effectively removes the need for an
adaptive quadrature.  Regularization is still required for robustness but is
invoked far less frequently, resulting in noticeable improvements in accuracy
in manufactured solution simulations we performed. 
We perform a series of numerical tests to quantify the overall performance of
the algorithm, to assess the interplay between the change of basis and the
regularization, and to determine efficient stopping criteria for the
optimization.  Our tests include a new manufactured solution and two well-known
benchmarks for transport in slab geometries.

The organization of the paper is as follows.  In Section \ref{sec:background},
we recall the one-dimensional kinetic equation, the derivation of entropy-based
moment models, and the issue of realizability that is central to the challenges
of implementation.  In Sections \ref{sec:alg} and
\ref{sec:implementation-details}, we present our ideas for solving the moment
closure problem numerically, including the adaptive change of basis,
the use of fixed quadrature, and the strategy for regularization. In Section
\ref{sec:results}, we give results of numerical experimentation, which analyze
the efficiency of different strategies combining the change of basis and the
regularization procedure.  We also examine the effects of several parameters
on algorithm efficiency.  Experiments include single optimization problems
which explore the realizable boundary, accuracy tests using manufactured
solutions, and two common benchmarks tests.  Finally, we draw conclusions in
Section \ref{sec:conclusions}.


\section{The Closure Problem}
\label{sec:background}
In this section, we provide a brief introduction to the closure problem,
following the detailed presentation in \cite{AHT10,Hauck-2011}.
We consider the migration of particles with unit
speed that are absorbed by or scattered isotropically off of a background
material medium with slab geometry.  The particle system is characterized by
a non-negative kinetic density $F=F(x,\mu,t)$ that is governed by a kinetic
transport equation
\begin{equation}
\p_t F + \mu \p_x F + \sig{t}F = \frac{\sig{s}}{2}\vint{F} \,,
\label{eq:transport}
\end{equation}
supplemented by appropriate boundary and initial conditions. The independent
variables in \eqref{eq:transport} are the scalar coordinate $x \in (\xL,\xR)$
along the direction perpendicular to the slab, the cosine $\mu\in [-1,1]$ of the
angle between the $x$-axis and the direction of particle travel, and time $t$.
Interactions with the material are characterized by non-negative variables
$\sig{s}(x)$, $\sig{a}(x)$, and $\sig{t}(x) := \sig{s}(x)+\sig{a}(x)$ which are
the scattering, absorption, and total cross-sections, respectively. For the
purposes of this paper, these cross-sections are assumed to be isotropic, i.e.,
independent of $\mu$. The angle brackets on the right-hand side of
\eqref{eq:transport} denote integration over $\mu$, i.e., for any integrable
function $g=g(\mu)$,
\begin{equation}
\vint{g} := \int^{1}_{-1} g(\mu) \, d\mu \:.
\end{equation}

Moment models for \eqref{eq:transport} are systems of partial differential
equations of the form
\begin{equation}
 \label{eq:moment_system_closed}
   \p_t \bu
   + \p_x  \bff(\bu)
   + \sig{t} \bu
   = \sig{s} Q \bu \:.
\end{equation}
Solutions $\bu= [u_0, u_1 \ldots, u_N]^\T : \bbR \times (0,\infty) \to
\bbR^{N+1}$ of \eqref{eq:moment_system_closed} provide an approximation to the
moments of $F$ with respect to linearly independent functions of $\mu$, i.e.,
$\bu(x,t) \simeq \vint{\bm F(x,\cdot,t)}$, where $\bm(\mu) = [m_0(\mu),
m_1(\mu), \ldots, m_N(\mu)]^\T.  $ While other choices are possible, we follow
standard practice \cite{Lewis-Miller-1984} and set $m_\ell$ to be the $\ell^{\rm
th}$ Legendre polynomial, normalized such that $\vint{m_\ell m_{\ell'}} = 2
\delta_{\ell,\ell'} /2(\ell+1)$. With this choice, the $(N+1) \times (N+1)$
matrix $Q$ is given by $Q_{\ell,\ell'}=\delta_{\ell,\ell'}\delta_{{\ell}, 0}$,
so that $Q\bu = [u_0, 0, \ldots, 0]^\T$.

The flux $\bff$ is determined by a closure.  For entropy-based models,
\begin{equation}
 \bff(\bu) := \Vint{\mu \bm \Gu}
\label{eq:flux}
\end{equation}
whenever $\bu$ is ``realizable'' (defined below),
where $\Gu$ is an ansatz for the underlying kinetic distribution
and solves the constrained, strictly convex optimization problem
\begin{equation}
  \minimize_{g \in L_+^{1}(d \mu)} ~ \Vint{\eta(g)} \qquad
  \mbox{subject to}~ \Vint{\bm g} = \bu \:.
\label{eq:primal}
\end{equation}
Here the kinetic entropy density $\eta \colon \bbR \to \bbR$ is strictly
convex and $L_+^{1}(d \mu)$ is the set of all non-negative, integrable
functions with respect to the Lebesgue measure $d \mu$.

The ansatz $\Gu$ belongs to a family of functions that are parameterized by
$\bsalpha \in \bbR^{N+1}$ and take the form $G_{\bsalpha}(\mu) =
\etad'(\bsalpha^\T \bm(\mu))$, where $\etad \colon \bbR \to \bbR$ is the
Legendre dual of $\eta$ and prime denotes differentiation. The Lagrange
multipliers $\alphahat(\bu)$ solve the unconstrained, strictly convex, dual
problem
\begin{equation}
 \alphahat(\bu) = \argmin_{\bsalpha \in \bbR^{N+1}} \left\{
  \Vint{\etad(\bsalpha^\T \bm)} - \bsalpha^\T \bu \right\}.
\label{eq:dual}
\end{equation}
(See \cite{Levermore-1996}  for more details.)  For the purposes of this paper,
we focus on the Maxwell-Boltzmann entropy $\eta(z) = z\log(z)-z$. Thus $\etad(y)
= \etad'(y) = e^{y}$ and
\begin{equation}
 G_{\bsalpha} = \exp(\bsalpha^\T\bm).
\label{eq:mbansatz}
\end{equation}

Problem~\eqref{eq:dual} does have a (unique) solution whenever $\bu$ is 
realizable, in the following sense.
\begin{definition}
A vector $\bv \in \bbR^{N+1}$ is said to be {\em realizable} (with respect to
$\bm$) if there exists a function $g \in L_+^1(d\mu)$ such that $\vint{\bm g} = \bv$.
The set of all realizable vectors is denoted by $\Rm$.
\end{definition}
The set $\Rm$ is an open, pointed, convex cone, and in the one-dimensional
setting is characterized by the positive-definiteness of Hankel matrices
\cite{Shohat-Tamarkin-1943}.  For the model problem considered here,
$\alphahat(\bu)$ is a diffeomorphism from $\Rm$ onto $\bbR^{N+1}$.  (See
\cite{Junk-1998,Borwein-Lewis-1991,Mead-Papanicolaou-1984}.)  Moments on the
boundary of realizability $\p \Rm$ are uniquely realized by atomic
measures---i.e., on the boundary of realizability, the kinetic distribution is a
sum of delta functions \cite{Curto-Fialkow-1991}.

A numerical method for solving~(\ref{eq:moment_system_closed}) must preserve $\Rm$.  To
this end, a finite-volume kinetic scheme was introduced in \cite{AHT10}, which
takes the semi-discrete form
\begin{equation}
 \p_t \bu_j + \frac{\Vint{\mu \bm G_{j + 1/2}} - \Vint{\mu \bm G_{j - 1/2}}}{\dx} + \sig{t} \bu_j =
  \sig{s} Q \bu_j \:,
  \label{eq:semidisc}
  \end{equation}
where $\bu_j$, for $j \in \{ 1, \ldots , N_x\}$, approximates the cell average
$\bu(x,t)$ over an interval $I_j =(x_{j-1/2},x_{j+1/2}) \subset (\xL,\xR)$ and
$G_{j\pm1/2}$ is an approximation of the entropy ansatz at the cell edge
$x_{j\pm1/2}$ based on a linear reconstruction of $G_j = G_{\alphahat(\bu_j)}$
and a standard minmod-type limiter. Time integration is performed using the
second-order strong-stability-preserving Runge-Kutta (SSP-RK2) method
\cite{Gottlieb-Shu-Tadmor-2001}, also known as Heun's method or the improved
Euler method. This is a two-stage method and thus requires the dual problem
\eqref{eq:dual} to be solved twice for every unknown in space and time. SSP
integrators are used because, under appropriate conditions, they preserve convex
sets.  We let $\bu^n_j$ denote the numerical solution at time step $n$ in cell
$j$ for $n \in \{0, \ldots , N_t\}$. The boundary conditions are implemented by
prescribing realizable moments in  ghost cells indexed by $j \in \{-1, 0, N_x +
1, N_x + 2 \}$ at each stage of the Runge Kutta method.


\section{Basics of the Optimization}
\label{sec:alg}

We focus in this section on 
components for
efficiently solving
the dual problem \eqref{eq:dual}.  
Because the objective is smooth, unconstrained, and strictly convex, we use
Newton's method, stabilized by an Armijo backtracking line search \cite{armijo}.
Our optimization algorithm computes an approximation
$\alphabar$ to the true solution $\alphahat$.
The dual objective function $f:\bbR^{N+1} \to \bbR$ is
\begin{equation}
 f(\bsalpha) := \Vint{G_{\bsalpha}} - \bsalpha^\T \bu \,.
\label{eq:f}
\end{equation}
Its gradient $\bg:\bbR^{N+1} \to \bbR^{N+1}$ and Hessian $H:\bbR^{N+1} \to
\bbR^{(N+1)\times  (N+1)}$
are given by (recall~(\ref{eq:mbansatz}))
\begin{equation}
 \bg(\bsalpha) := \Vint{\bm G_{\bsalpha} } - \bu 
 \quand
 H(\bsalpha) := \Vint{\bm \bm^\T G_{\bsalpha}} \:,
 \label{eq:grad}
\end{equation}
and the Newton direction $\bd(\bsalpha)$ solves the linear system $ H(\bsalpha)\bd(\bsalpha) = -\bg(\bsalpha)$.

Our optimization algorithm has four important components:  an adaptive change of
basis to improve the conditioning of the Hessian, appropriate stopping criteria,
a fixed quadrature set for approximating integrals, and a regularization method
used for very ill-conditioned problems. We now consider each of these
components.

\subsection{Adaptive change of basis}
\label{subsec:change-of-basis}

Following \cite{Abramov-2009}, we apply a change of basis to improve the
condition number of the Hessian.  Specifically, when expressed in the new basis,
the Hessian at the current iterate becomes the identity matrix.  In
\cite{Abramov-2009}, a BFGS algorithm is used, and the change of basis is
invoked only when the condition number of the approximate Hessian is greater
than a certain threshold.  Here we use a damped Newton method, and we invoke
such change of basis at every iteration.

At iteration $k$, let $S_{k}$ be an invertible matrix which determines a
new polynomial basis $\bp_{k} = S_{k} \bm$ and let $T_{k} = S_{k}^{{-1}}$.  If
$\bsalpha_{k}$ is the dual variable at iteration $k$ with respect to basis 
$\bm$, then
let $\betain{k} = T^{\T}_{k-1}\bsalpha_{k}$ be the dual variable at iteration
$k$ with respect to 
basis $\bp_{k-1}$ 
and $\betaout{k} = T^{\T}_{k}\bsalpha_{k}$ be the dual variable at
iteration $k$ after changing to the new basis $\bp_{k}$.

Define a new objective $f_{k}:\bbR^{N+1}\to\bbR$ by 
\begin{equation}
 f_{k}(\bsbeta) := f(S_k^\T \bsbeta)
  = \vint{\exp(\bsbeta^\T S_{k} \bm)} - \bsbeta^\T S_{k} \bu,
\end{equation}
so that $f(\bsalpha) = f_{k}(T_{k}^\T\bsalpha)$ for all $\bsalpha$.
Then $f_{k}$ is strictly convex with gradient and (positive-definite) Hessian
\begin{equation}
 \bg_k(\bsbeta) 
  = S_{k} \bg (S_{k}^{\T}\bsbeta)
  \quand
   H_{k}(\bsbeta) 
  = S_{k} H(S_{k}^{\T}\bsbeta) S_{k}^{\T} \,.
\label{eq:gradHM}
\end{equation}
The Newton step $\bd_{k}$ for  $f_{k}$ at $\bsbeta$ solves $H_{k}(\bsbeta)
\bd_{k}(\bsbeta) =  -\bg_k(\bsbeta)$.

Clearly $H_{k}(\betaout{k})=I$ if and only if $T_{k}$ factors $H(\bsalpha_{k})$,
i.e., $H(\bsalpha_{k}) \equiv H(S_{k}^{\T}\betaout{k}) = T_{k}T_{k}^{\T}$, in
which case the Newton direction with respect to $\bp_k$ coincides with the
steepest
descent direction:
\begin{equation}
\label{eq:dMk}
 \bd_{{k}}(\betaout{k}) = -\bg_{{k}}(\betaout{k}) 
  = \bu_{k} - \Vint{\bp_k \exp(\betaout{k}^{\T} \bp_k)} \,,
\end{equation}
where $\bu_{k} = S_k \bu = \Vint{\bp_k \Gu}$ is the moment vector expressed
in the $\bp_k$ basis.
Furthermore, $\bp_k$ is orthonormal with respect to the weight 
$G_{\bsalpha_k}=\exp(\bsalpha_k^\T\bm)=\exp(\betaout{k}^{\T} \bp_k)$,
since
\begin{equation}
\vint{\bp_k \bp_k^\T \exp(\bsalpha_k^{ \T} \bm)}
= S_k \vint{\bm \bm^\T \exp(\bsalpha_k^{ \T} \bm)} S^{\T}_k
= S_k H(\bsalpha_k) S^{\T}_k
= H_k(\betaout{k}) = I.
\end{equation}

If $H(\bsalpha_{k})$ is ill-conditioned, then a direct
computation and application of $T_{k}$ may be inaccurate. It is more stable to
change bases iteratively. 
To this end, let $L_{k}$ be any matrix such that
\begin{equation}
H_{k-1}(\betain{k}) = L_{k} L_{k}^\T.
\label{eq:H=LL'}
\end{equation}  Using this formula, it is a simple
exercise to show that $T_{k} = T_{k-1}L_{k}$ factors $H(\bsalpha_{k})$,
that $\betaout{k} = L_k^{T} \betain{k}$, and that 
$\bu_{k} = L_k^{-1} \bu_{k-1}$.

In exact arithmetic, this change of basis has no effect on the sequence of
Newton iterates $\bsalpha_k$.  Using inexact arithmetic, however, we 
observed (see section~\ref{sec:results} below)
that when $\bu$ is 
near the realizability boundary, under the proposed change of basis,
the stability of the iteration is greatly
improved: the Hessian matrix $H(\bsalpha_k)$ in the original basis
is highly ill-conditioned, so performing matrix computations with $L_k$ and
$T_k$---whose condition numbers are the square root of those of $H_{k -
1}(\betain{k})$ and $H(\bsalpha_k)$ respectively---instead of with
$H(\bsalpha_k)$ should reduce errors. Furthermore, when the Hessian matrix in
the original basis is poorly conditioned, the computed Newton direction may
even fail to be a direction of descent for the objective function.  
In contrast, in the new coordinate system, the step is taken
in the direction of the negative of the computed gradient.  Even in inexact
arithmetic this computed step is quite likely to have a negative inner product
with the true gradient and thus be a descent direction. 

\subsection{Stopping the Newton iteration}
\label{subsec:stopping}

Following \cite{AHT10}, our stopping criterion involves two conditions:
\begin{equation}
 \|\bg(\bsalpha_k)\|_{2} \leq \tau  \quand
 \exp\left(5\|\bd(\bsalpha_k)\|_1\right) \leq 1+\veps_\gamma \:.
\label{eq:stopping}
\end{equation}
In view of \eqref{eq:grad}, the first condition bounds the Euclidean distance
between $\bu$ and the moments of the candidate ansatz
$G_{\bsalpha_k}$. Moreover, because the spectral radius of the Jacobian of
$\bff$ is bounded
by one \cite{Olbrant-Hauck-Frank-2012}, it also bounds the error in the flux
$\bff$ (see \eqref{eq:flux}, \eqref{eq:mbansatz}):
\begin{equation}
\| \bff(\bu) - \Vint{\mu \bm G_{\bsalpha_k}} \|_2
\leq \sup_{\bv \in \Rm} \left \| \frac{\p \bff}{\p \bv} (\bv) \right \|_2
 \| \bu - \Vint{ \bm G_{\bsalpha_k}}  \|_2 \le \|\bg(\bsalpha_k)\|_{2}
\end{equation}

The second condition estimates an upper bound on 
\begin{equation}
\gamma(\mu) := G_{\bsalpha_k}/{\Gu}
= G_{\bsalpha_k - \alphahat(\bu)},
\label{eq:gamma}
\end{equation}
the ratio of the ansatz associated with a current iterate $\bsalpha_k$ to the ansatz
of the solution $\alphahat(\bu)$. The purpose of this condition (see Theorem 
\ref{thm:invariance} below) is to maintain realizability of the moments
generated by the kinetic scheme in Section \ref{sec:background}. Following
\cite{AHT10}, we use the Newton direction $\bd(\bsalpha_k)$ to approximate
$\bsalpha_k - \alphahat(\bu)$, but rather than the two-norm estimate used in
\cite{AHT10}, we use a tighter estimate to bound $G_{\bd(\bsalpha_k)}$:
\begin{equation}
 \max_{\mu \in [-1,1]} G_{\bd(\bsalpha_k)} 
 = \max_{\mu \in [-1,1]} \exp \left( \bd(\bsalpha_k)^\T \bm \right)
  \leq \exp\left(\|\bd(\bsalpha_k)\|_1\right)\, ,
\label{eq:L1bnd}
\end{equation}
where we have used the fact that 
$\max_{\mu} |m_i| = 1$ for all $i$. Adding a safety factor of $5$ gives
\eqref{eq:stopping}, which ensures with high confidence that
\begin{equation}
\label{eq:gamma<1+eps}
\gamma(\mu) \leq 1+\veps_\gamma \:.
\end{equation}

\subsection{Fixed Curtis-Clenshaw quadrature}
\label{subsec:quad}

The integrals in the objective function (cf.~\eqref{eq:f}) and its gradient and
Hessian (cf.~\eqref{eq:grad}) cannot, in general, be computed explicitly.
Therefore a numerical quadrature rule must be used. Let $\cQ$ be a quadrature
rule defined for functions $g:[-1,1] \to \bbR$ by
\begin{equation}
 \cQ(g) = \sum_{i = 1}^{\nQ} w_i g(\mu_i),
\end{equation}
where the quadrature nodes $\{\mu_i\}_{i = 1}^{\nQ}$ and the quadrature weights
$\{w_i\}_{i = 1}^{\nQ}$ are chosen so that $\cQ(g)$ approximates $\Vint{g}$.
For numerical computations, $\Vint{\cdot}$ should always be understood as $\Q{\cdot}$.  The specific meaning should be clear from the context.
We define the {\em $\cQ$-realizable set}
\begin{equation}
\label{eq:RQ}
 \RQ := \left\{ \bu \,\left|\, \bu = \sum_{i = 1}^{\nQ} w_i \bm(\mu_i) f_i,\,
  f_i > 0  \right. \right\}  \:.
\end{equation}
Note that $\RQ$ is a {\em strict} (polytopic) subset of $\Rm$ and that, like
$\Rm$, it is an open, pointed, convex cone.  In particular \begin{equation}
\label{eq:RQgen}
 \RQ = \left\{ \bu \,\left|\, \bu = c\bv\,, c > 0\,, \bv \in \RQ|_{u_0=1}
\right. \right\}\,.
\end{equation}

In \cite{AHT10}, an adaptive quadrature was used to reduce the condition number
of the Hessian.  However, the use of an adaptive quadrature
introduces serious numerical difficulties. For example, 
the $Q$-realizable set changes with the choice of quadrature
nodes.  Thus an iterate that is realizable can suddenly become non-realizable when the
quadrature changes, and this forces the use of artificial techniques like
regularization, discussed below, in order to continue the computation.  In
contrast, a fixed quadrature $\cQ$ makes it easy to keep the numerical solution
within the $\cQ$-realizable set.

\begin{theorem}
Let $\bu_j^{n+1}$, $j\in \{1,\ldots,N_x\}$, be defined via the kinetic
scheme in Section \ref{sec:background}, with time-step restriction
\begin{equation}
 \gmax \frac{\dt}{\dx}\frac{\theta + 2}{2} + \sig{t}\dt < 1 \:,
\label{eq:cfl}
\end{equation}
and let $\gmax$ be  the maximum value of $\gamma(\mu_i)$ (cf. \eqref{eq:gamma})
over all quadrature nodes, spatial cells, and stages of the Runge-Kutta method
used to integrate the kinetic scheme in time.  If $\bu^n_j \in \RQ$ for $j\in
\{-1,\ldots,N_x+2\}$ and if the moments in the ghost cells are in $\RQ$ at each
stage of the Runge-Kutta scheme, then $\bu^{n+1}_j \in \RQ$ for $j \in
\{1,\ldots,N_x\}$.
\label{thm:invariance}
\end{theorem}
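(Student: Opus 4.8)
The plan is to show that each updated cell value $\bu_j^{n+1}$ can be written as a convex-cone combination (nonnegative coefficients, at least one positive on each quadrature direction) of the vectors $\{\bm(\mu_i)\}$, so that it lands in $\RQ$ by the very definition \eqref{eq:RQ}. Since SSP-RK2 is a convex combination of forward-Euler stages and $\RQ$ is a convex cone (see \eqref{eq:RQgen}), it suffices to prove the invariance for a single forward-Euler step of the semi-discrete scheme \eqref{eq:semidisc}; the two-stage update then follows because a convex combination of points in the cone $\RQ$ remains in $\RQ$. So I would first reduce to the one-stage statement and fix attention on a generic cell $j$.

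Next I would expand the forward-Euler update. Writing the edge fluxes $\Vint{\mu\bm G_{j\pm1/2}}$ in terms of the reconstructed ansätze, each edge contribution is itself an integral (in practice a $\cQ$-sum) of $\mu\,\bm(\mu)$ against a \emph{positive} weight $G_{j\pm1/2}(\mu)$, where $G_{j\pm1/2}$ comes from the minmod-limited linear reconstruction of the $G_j = G_{\alphahat(\bu_j)}$; crucially $G_j(\mu_i) = \gamma(\mu_i)\,\Gu[\bu_j](\mu_i)$ and the stopping criterion \eqref{eq:stopping} guarantees $\gamma(\mu_i)\le\gmax$ at every quadrature node, which is the only place the factor $\gmax$ enters. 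I would then collect, for each quadrature node $\mu_i$, the total coefficient multiplying $\bm(\mu_i)$ in $\bu_j^{n+1}$: it has the schematic form $(1-\sig{t}\dt)f_i^{\,j} - \tfrac{\dt}{\dx}\mu_i(\text{edge weights})$ plus the nonnegative scattering term $\sig{s}\dt\,Q\bu_j$ (which only adds to the $u_0$ component and does not threaten realizability, or more carefully is itself a positive multiple of $\bm(\mu)\equiv$ the isotropic moment). The standard kinetic-scheme manipulation—splitting the upwind flux difference and using $|\mu_i|\le 1$ together with the $(\theta+2)/2$ factor that accounts for the slope-reconstruction overshoot—shows that each such coefficient is bounded below by a positive multiple of $f_i^{\,j}$ (or of the neighboring $f_i^{j\pm1}$) provided \eqref{eq:cfl} holds with $\gmax$ in place of $1$. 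Hence every coefficient is strictly positive and $\bu_j^{n+1}\in\RQ$.

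The main obstacle is the bookkeeping around the limited linear reconstruction: one must verify that the reconstructed edge weight $G_{j\pm1/2}$, evaluated at each quadrature node, remains nonnegative and is controlled by the cell-center values $G_j(\mu_i)$ up to the factor $(\theta+2)/2$ appearing in \eqref{eq:cfl} — this is where the minmod limiter and the slope parameter $\theta$ must be used, and it is the step most sensitive to the precise definition of the reconstruction in Section \ref{sec:background}. I expect this to mirror the argument in \cite{AHT10}, the difference being purely that we work on the \emph{fixed} $\cQ$-realizable set $\RQ$ rather than a quadrature-dependent one, so the combination we exhibit uses exactly the nodes $\{\mu_i\}$ that define $\RQ$ and the conclusion $\bu_j^{n+1}\in\RQ$ is immediate once positivity of all coefficients is established. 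Finally I would remark that the ghost-cell hypothesis is exactly what makes the boundary cells $j=1$ and $j=N_x$ fit into the same argument as interior cells.
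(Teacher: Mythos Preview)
Your proposal is correct and matches the paper's approach: the paper does not give an independent proof but simply states that the argument is a trivial modification of the proof of Theorem~2.5 in \cite{AHT10}, the only change being that one works in the fixed $\cQ$-realizable set $\RQ$ rather than in $\Rm$ with a varying quadrature. You have correctly identified both the structure of that argument (reduction to forward Euler via SSP convexity, nodewise positivity of the coefficients multiplying $\bm(\mu_i)$, the role of the minmod limiter in producing the $(\theta+2)/2$ factor, and the ghost-cell hypothesis for boundary cells) and the nature of the modification.
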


\noindent The proof of Theorem \ref{thm:invariance} is a trivial modification of
the proof of Theorem 2.5 in \cite{AHT10}.

It has also been observed
that the size of the error in many adaptive quadrature rules does not decrease
monotonically with the number of points and, furthermore, that the number of
points required to satisfy a tolerance criterion is often much larger than the
number of points needed for an accurate evaluation of the integral
\cite{Lyness-1983}.  These issues lead to a considerable increase in the
complexity and computational time of the optimization algorithm.  
Moreover, as shown in \cite{AHT10},
refining the quadrature does not help if the exact Hessian is ill-conditioned.
With the adaptive basis, the
condition number of the Hessian is kept under control by iteratively changing
the polynomial basis.  Thus we opt to use a fixed quadrature and avoid the above
numerical complications.

Although the dual problem \eqref{eq:dual} has a solution for all $\bu \in \Rm$,
the use of a quadrature-based {\em approximation} of the dual objective function
means there will be a solution if and
only if $\bu \in \RQ$.  Consequently, it is important to choose quadratures $Q$
for which $\Rm \backslash \RQ$ is small.  In light of \eqref{eq:RQgen}, the
following characterization of $\RQone$ helps guide this choice.
\begin{proposition}
For any quadrature $\cQ$ using positive weights $w_i$,
\begin{equation}
 \left. \RQ \right|_{u_0=1} = \interior\, \co \{\bm(\mu_i) \}_{i = 1}^{\nQ}
\,,
\end{equation}
where $\co$ indicates the convex hull and $\interior$ the interior.
\end{proposition}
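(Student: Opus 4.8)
The plan is to unfold the definition of $\RQone$ into the set of strict convex combinations of the node images $\{\bm(\mu_i)\}_{i=1}^{\nQ}$, and then invoke the elementary fact that such strict combinations fill out exactly the interior of their convex hull.

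First I would rewrite the definition \eqref{eq:RQ}: a vector $\bu \in \RQ$ has the form $\bu = \sum_{i=1}^{\nQ} w_i f_i\, \bm(\mu_i)$ with all $f_i > 0$, and since every $w_i > 0$ the substitution $\lambda_i := w_i f_i$ is a bijection between tuples of positive reals. Because the first component $m_0$ of $\bm$ is the constant $1$, the normalization $u_0 = 1$ is equivalent to $\sum_i \lambda_i = 1$, so
\begin{equation}
 \RQone = \Bigl\{\, \textstyle\sum_{i=1}^{\nQ} \lambda_i\, \bm(\mu_i) \ \Bigm|\ \lambda_i > 0,\ \textstyle\sum_{i=1}^{\nQ}\lambda_i = 1 \,\Bigr\};
\end{equation}
that is, $\RQone$ is exactly the set of points of $\co\{\bm(\mu_i)\}$ that admit a representation in which all barycentric coefficients are strictly positive.

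It remains to identify this set with $\interior\,\co\{\bm(\mu_i)\}$, which is the standard characterization of the relative interior of a polytope described by its vertices; I would prove it by two inclusions. For the ``$\subseteq$'' direction: given $\bu = \sum_i \lambda_i \bm(\mu_i)$ with every $\lambda_i > 0$ and any hull point $\bv = \sum_i \nu_i \bm(\mu_i)$, the point $\bu + \epsilon(\bu - \bv) = \sum_i\bigl((1+\epsilon)\lambda_i - \epsilon\nu_i\bigr)\bm(\mu_i)$ has coefficients that still sum to $1$ and remain nonnegative for every $\epsilon \le \min_j \lambda_j$ (finiteness of the index set is what makes this minimum positive); by the segment characterization of relative interior, $\bu$ lies in it. For the ``$\supseteq$'' direction: if $\bu$ is in the relative interior, take the centroid $\bar\bu := \nQ^{-1}\sum_i \bm(\mu_i)$, which lies in the hull, so there is $t > 1$ with $(1-t)\bar\bu + t\bu$ still in the hull; rearranging gives $\bu = t^{-1}\bigl((1-t)\bar\bu + t\bu\bigr) + (1 - t^{-1})\bar\bu$, a convex combination carrying strictly positive weight $1 - t^{-1}$ on $\bar\bu$, so every barycentric coefficient of $\bu$ is at least $(t-1)/(t\,\nQ) > 0$, so that $\bu \in \RQone$.

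The one genuine point of care --- and essentially the only content beyond routine unfolding --- is the meaning of $\interior$. Every $\bm(\mu_i)$ lies in the affine hyperplane $\{u_0 = 1\} \cong \bbR^N$, so $\co\{\bm(\mu_i)\}$ has empty interior in $\bbR^{N+1}$; hence $\interior$ here must be read relative to that hyperplane, equivalently as the relative interior of the hull. In the relevant case where $\cQ$ has at least $N+1$ distinct nodes this coincides with the honest interior within $\{u_0 = 1\}$: since $\{1, m_1, \dots, m_N\}$ is a Chebyshev system on $[-1,1]$, any $N+1$ of the $\bm(\mu_i)$ are affinely independent, so $\co\{\bm(\mu_i)\}$ is $N$-dimensional. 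For coarser node sets the identity remains valid with $\interior$ understood as relative interior. Beyond this bookkeeping I do not anticipate a real obstacle.
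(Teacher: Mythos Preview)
Your proof is correct and follows essentially the same route as the paper's: both reduce $\RQone$ to the set of strictly positive convex combinations of the $\bm(\mu_i)$ via the substitution $\lambda_i = w_i f_i$ and then identify this set with the interior of the convex hull. The paper simply asserts that last identification without further argument, whereas you supply an explicit two-inclusion proof and also flag the relative-interior subtlety that the paper leaves implicit.
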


\begin{proof}
Let $\bu \in \RQone$.  Then, from~\eqref{eq:RQ}, $\bu=\sum\lambda_i\bm(\mu_i)$,
with $\lambda_i := w_i f_i > 0$.  Also, $\sum \lambda_i = 1$ since $u_0=1$ and
$m_0 \equiv 1$, and therefore $\bu \in \interior \, \co \{\bm(\mu_i) \}$. On the
other
hand, if $\bu \in \interior \, \co \{\bm(\mu_i) \}$, there must exist scalars
$\lambda_i$ such that $\bu=\sum\lambda_i\bm(\mu_i)$, with $\lambda_i>0$ and
$\sum \lambda_i = 1$. Choosing $f_i := \lambda_i / w_i$ shows that $\bu \in
\RQone$.
\end{proof}

\begin{figure}
 \centering
 \subfigure[Using four-point Gauss-Legendre quadratures on $[-1,0{]}$ and
  ${[}0,1{]}$.]{ 
 \label{subfig:RmQgl}
  \includegraphics[scale=0.36,viewport=130 200 520 500]{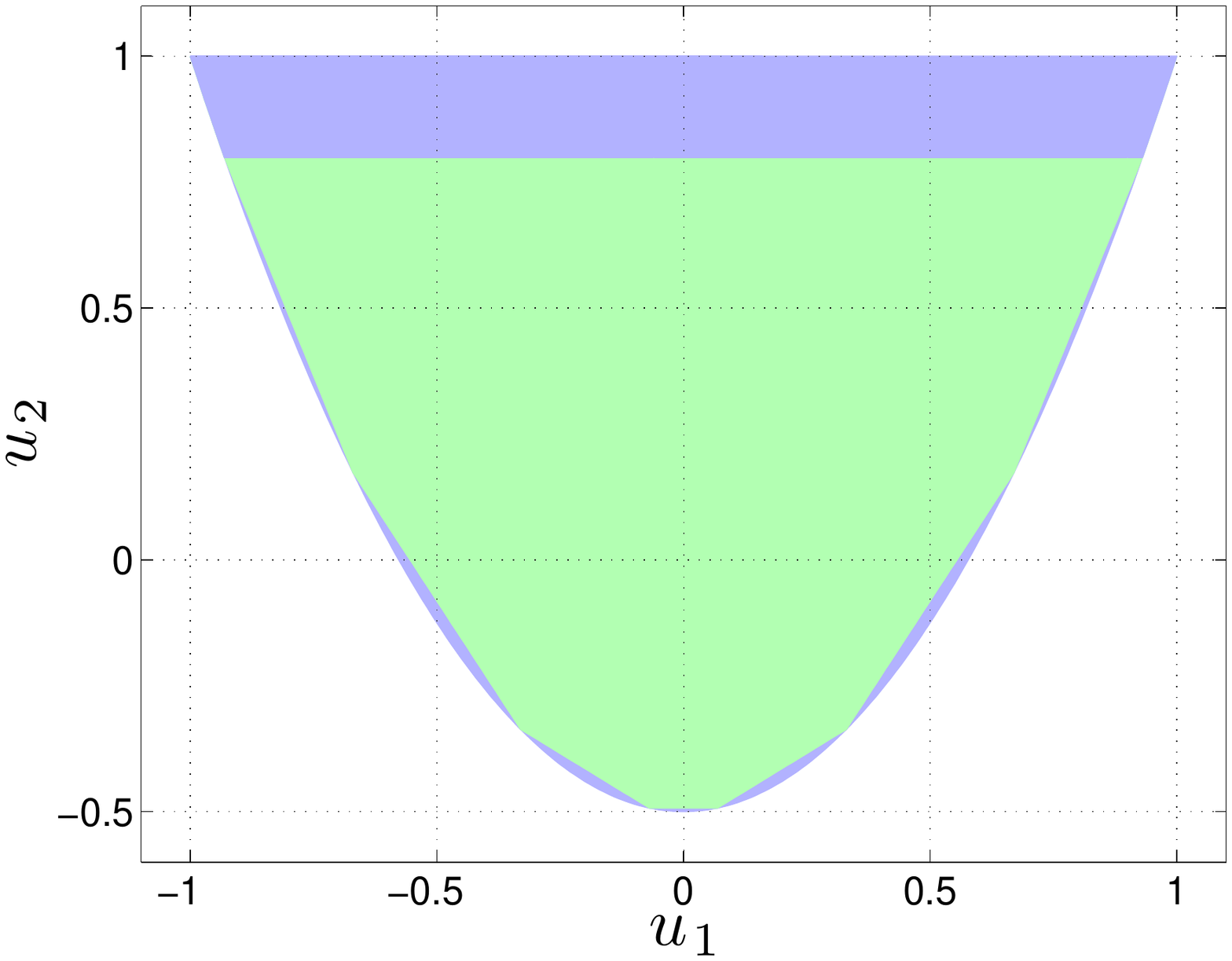}}
 \hspace{1.3cm}
 \subfigure[Using three-point Curtis-Clenshaw quadratures on $[-1,0{]}$ and ${[}0,1{]}$.] { \label{subfig:RmQcc}
  \includegraphics[scale=0.36,viewport=130 200 520 500]{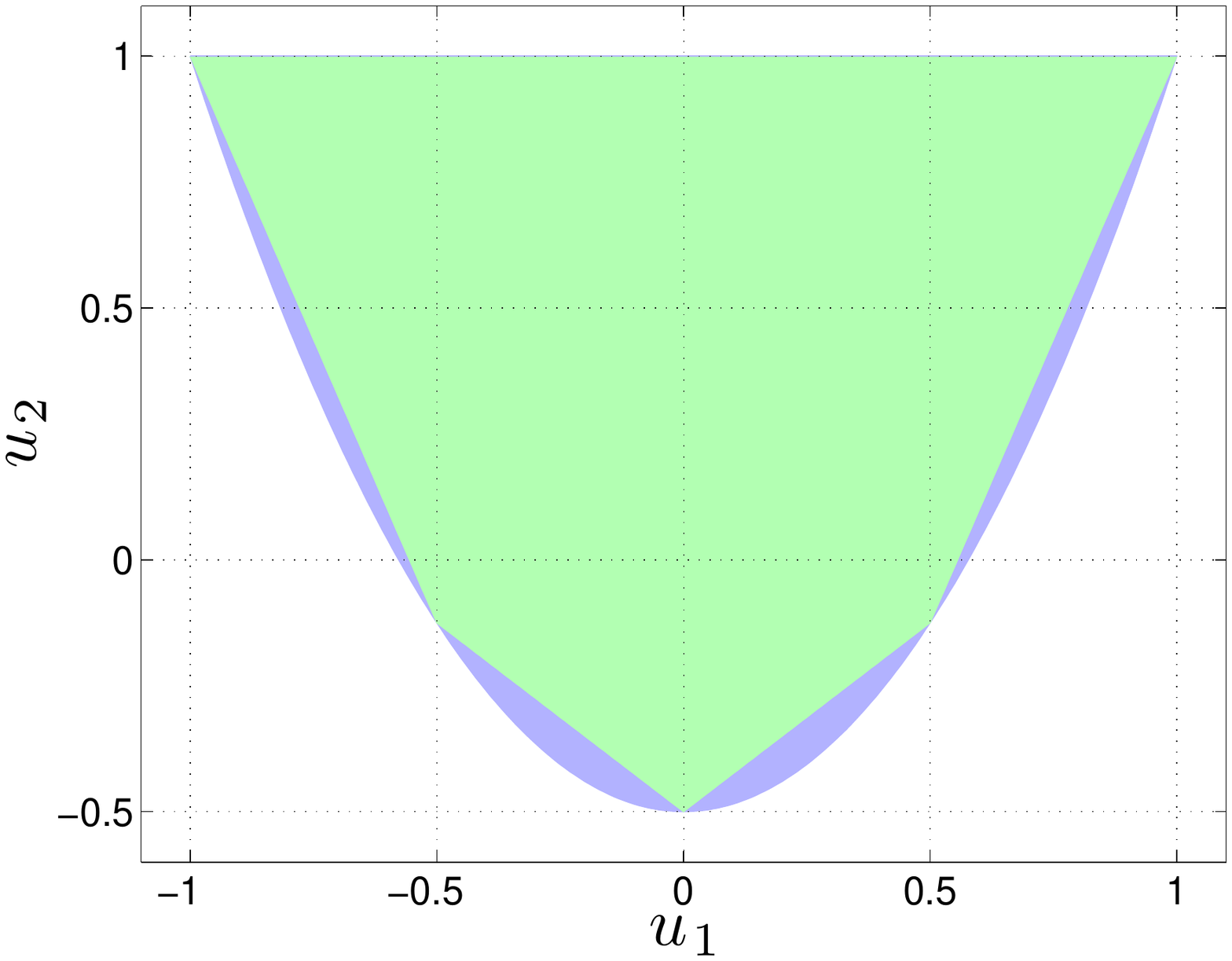}}
 \caption{Illustrating $\RQ|_{u_0=1}$ for $\M_2$.  The green indicates
  $\RQone$, and the blue indicates $\Rmone \backslash \RQone$.
  Ideally, the two sets would coincide. }
 \label{fig:RmQ}
\end{figure}

It remains to select a quadrature rule. \figref{fig:RmQ} shows examples of
$\RQ$ in the $\M_2$ case ($\bm = [1, \mu, \frac12
(3\mu^2 - 1)]^T$) using low-order Curtis-Clenshaw and Gauss-Legendre 
quadrature rules. For this choice of $\bm$,
$\Rmone$ is flat on the top, but curved at the bottom. 
If the endpoints $\mu = \pm 1$ are nodes of $\cQ$, as in the case of
Curtis-Clenshaw quadrature, the entire flat portion at the top is in
$\partial\RQone$ (cf. \figref{subfig:RmQcc}). The Gauss-Legendre
quadrature, on the other hand, does not contain the endpoints and, as a
consequence, leaves large regions of $\Rm$ out of $\RQ$ (cf.
\figref{subfig:RmQgl}). With Curtis-Clenshaw, $\RQ$ contains all realizable
moments for which $|u_1|/u_0$ is arbitrarily close to one. Such moments occur in
many situations, including the plane source benchmark problem simulated in
Section \ref{sec:results}. These observations motivate our use of
Curtis-Clenshaw quadrature rather than Gauss-Legendre quadrature which was used,
for example, in~\cite{AHT10,Mead-Papanicolaou-1984}.

\subsection{Regularization}
  
Even with the adaptive basis, there are realizable moments for which the damped
Newton method does not converge in a reasonable number of iterations.  
Indeed $H_{k-1}(\betain{k})$ can be highly ill-conditioned even though in the previous step
$H_{k-1}(\betaout{k-1})=I$.  In extreme situations, this may cause the factorization in \eqref{eq:H=LL'}
to fail because of round-off errors or to be so inaccurate that
the next Newton step is not a descent direction.  In the latter case, it is possible to 
refactor $H_{k-1}(\betain{k})$ (see Section \ref{subsec:chol-basis}) in order to find a descent 
direction.  However, refactorization may be required many times.

To address this issue, we
employ the regularization scheme introduced in \cite{AHT10}. The regularization
strategy is simple: moments $\bu$ for which the dual problem \eqref{eq:dual} is
deemed  too difficult to solve (by some prescribed criteria) are replaced by
nearby moments $\bv(\bu,r)$ that are further away from the boundary.  These
nearby moments are computed by taking the convex combination of $\bu$ with the
moments of the isotropic distribution with the same particle density:
\begin{equation} \bv(\bu,r)
:= (1 - r)\bu + rQ \bu \label{eq:v(r)}
\end{equation}
where $0 < r \ll 1$.  When $\bu$ is near $\p \Rm$, $\alphahat(\bv(\bu,r))$ is
typically much easier to compute than $\alphahat(\bu)$, even for small values of
$r$.  

The regularization procedure does introduce errors of order $r$ in the numerical
solution.  However, with adaptive change of basis, the need for regularization
is reduced relative to the fixed-basis method (which was used in \cite{AHT10}).


\section{Implementation Issues}
\label{sec:implementation-details}

Now that we have a broad outline for our optimization algorithm we discuss the
implementation details. The complete algorithm is presented in \algref{alg:r}.

\begin{algorithm}
\caption{The optimization algorithm with regularization.}
\label{alg:r}
\begin{algorithmic}
 \REQUIRE $\bu \in \Rm \subset \bbR^{N+1},\: \betain{0} \in
  \bbR^{N+1},\: T_{-1} \in \bbR^{(N+1) \times (N+1)}$
 \ALGCOM{$\bu$ is assumed to be in the Legendre basis $\bm$, while
  $\betain{0}$ is assumed to be in the $\bp_{-1} = T_{-1}^{-1}\bm$ basis}
 \newline \hspace*{-0.215in} \textbf{Parameters:}
  $\tau > 0$,
  $\veps_\gamma > 0$,
  $k_0 \in \bbN$,
  $\xi \in (0, 1/2)$,
  $\chi \in (0, 1)$,
  $k_{\max} \in \bbN$,
  $\rseq \subset [0, 1]$, an increasing sequence starting at zero,
  $\veps > 0$,
  $\cQ$, a quadrature rule.
 \STATE $r_{\max} \gets \max \rseq$
 \FOR{$r \in \rseq$}
  \STATE $P_{-1} \gets T_{-1}^{-1}\bm$
  \STATE $\bv \gets ((1 - r)\bu + r Q \bu)$
  \STATE $\bv_{-1} \gets T_{-1}^{-1}\bv$
  \STATE $f_0 \gets \Q{\exp\left(\betain{0}^{\T} \bp_{-1} \right)}
   - \betain{0}^{\T} \bv_{-1}$
  \FOR{$k \in \{0, 1, 2, \ldots , k_{\max}\}$}
   \STATE $[{\tt chol\_flag}, \betaout{k},\, \bv_k,\, \bg_k,\, P_k,\, T_k]
    \gets {\tt change\_basis}(\betain{k}, \bv_{k-1}, P_{k-1},
    T_{k-1})$
   \IF{ {\tt chol\_flag} $=$ false and
    $r = r_{\max}$}
    \RETURN failure to converge
   \ENDIF
   \IF{($k > k_0$ or
    {\tt chol\_flag} $=$ false) and
    $r < r_{\max}$}
    \STATE \ALGCOM{Exit the inner \emph{for} loop so that
     $r$ is increased.}
    \STATE \textbf{break for} 
   \ELSE
    \STATE $e_k \gets \|\Q{\bm \exp(\betaout{k}^{\T} T_k^{-1} \bm)}
     - \bv\|$
    \STATE $\bd_k \gets -\bg_k $
    \IF{$e_k < \tau$ and $\exp(5\|T_k^{-\T}\bd_k\|_1) < 1 +
     \veps_{\gamma}$}
     \STATE $\abar \gets T_k^{-\T} \betaout{k}$
     \RETURN $\abar,\:  T_k$
    \ELSE
     \STATE $\zeta_k \gets 1$
     \STATE $\betain{k+1} \gets \betaout{k}$
      \WHILE{$\zeta_k > \veps \| \betaout{k} \| /\| \bd_k \|$}
       \STATE $f \gets \Q{\exp\left(\left(\betaout{k} + \zeta_k
        \bd_k\right)^{\T} \bp_k \right)} - \left(\betaout{k} + \zeta_k
        \bd_k\right)^{\T} \bv_k$
       \IF{$f \le f_k + \xi\zeta_k \bg_k^T\bd_k$}
        \STATE $\betain{k+1} \gets \betaout{k} + \zeta_k \bd_k$
        \STATE $f_{k + 1} \gets f$
        \STATE \textbf{break while}
       \ENDIF
       \STATE $\zeta_k \gets \chi \zeta_k$
      \ENDWHILE
    \ENDIF
   \ENDIF
  \ENDFOR
 \ENDFOR
 \STATE \RETURN failure to converge
\end{algorithmic}
\end{algorithm}

\subsection{Defining the orthonormal basis}
\label{subsec:chol-basis}
There are many different orthogonal polynomial bases with respect to
$G_{\bsalpha} d \mu$.
In \cite{Abramov-2009}, orthogonal bases are computed
in which each basis polynomial
has the same degree.  Indeed, because in \cite{Abramov-2009} the
integration domain is unbounded, components of the dual variable associated
with basis polynomials of higher polynomial degrees are much more sensitive
than those associated to lower degrees.
In contrast, the domain of $\mu$ in the current application is bounded; hence a
triangular basis $\bp_k = [p_{k,0},p_{k,1},\ldots p_{k,N}]^T$, where the polynomial 
$p_{k,\ell}$ has degree $\ell$, does not present
the same numerical difficulties.  Rather, in this case, such a basis is
preferable since it leads to simpler matrix operations. In addition, since
${p_{k,0}}$ is a constant, orthogonality of the basis $\bp_k$ with respect to
$G_{\bsalpha_k}$ implies that
$\vint{p_{k,\ell} G_{\bsalpha_k}}=0$ for $\ell >0$.  This simplifies the computation of
the gradient $\bg_k$ (see \eqref{eq:dMk}).

The simplest way to maintain a triangular basis is to 
let $L_k$ in~(\ref{eq:H=LL'}), which defines the iterative change
of basis, be the Cholesky factor of $H_{k-1}(\betain{k})$, i.e.,
be lower triangular and positive-definite.  
The steps to orthonormalize the
basis using the Cholesky factor
are given in Algorithm
\ref{alg:change-of-basis}, which at each iteration computes the new multipliers
$\betaout{k}$, moments $\bu_k$, gradient $\bg_k$ of the dual objective function,
the  $(N+1) \times \nQ$ matrix $P_k$ of values of the basis polynomials at the
quadrature nodes, and the change of basis $T_k$. Once a triangular basis is initialized, the successive bases $\bp_k$ remain
triangular, and an initial triangular orthogonal basis is easily available:  If
we start at the multipliers $\bsalpha_{\rm iso}=(\log(u_0/2), 0, \ldots ,
0)^{\T}$ associated with the isotropic distribution, then the Legendre
basis~$\bm$ (which indeed is triangular) is a natural choice because it is
orthogonal with respect to the isotropic ansatz $G_{\bsalpha_{\rm iso}}$.

\begin{algorithm}[tb]
\caption{The {\tt change\_basis} steps
used to produce an orthonormal basis using the Cholesky factorization.}
\label{alg:change-of-basis}
\begin{algorithmic}
 \REQUIRE $\bsbeta_{\rm in} \in \bbR^{N+1},\:
          \bu_{\rm in} \in \bbR^{N+1},\:
          P_{\rm in} \in \bbR^{(N + 1) \times \nQ},\:
          T_{\rm in} \in \bbR^{(N + 1) \times (N + 1)}$

 \STATE \ALGCOM{The Hessian is initially in the $\bp_{\rm in}$
  basis,\footnotemark and values of these polynomials at the quadrature nodes
  are stored in $P_{\rm in}$.}
 \STATE $H \gets \Q{\bp_{\rm in} \bp_{\rm in}^\T
  \exp(\bsbeta_{\rm in}^\T \bp_{\rm in})}$.
 \STATE $(L, {\tt chol\_flag}) \gets {\tt chol}(H)$
  \ALGCOMR{{\tt chol\_flag} is \emph{false} if the Cholesky factorization
   fails}
 \IF{{\tt chol\_flag} = false}
  \RETURN {\tt chol\_flag}, 0, 0, 0, 0, 0
 \ENDIF
 \STATE $P_{\rm out} \gets L^{-1} P_{\rm in}$
 \STATE $T_{\rm out} \gets T_{\rm in} L$ 
 \STATE $\bsbeta_{\rm out} \gets L^\T \bsbeta_{\rm in}$
 \STATE $\bu_{\rm out} \gets L^{-1} \bu_{\rm in}$
 \STATE $\bg_{\rm out} \gets (p_{0,{\rm out}}\Q{\exp\left( \bsbeta_{\rm out}^\T
  \bp_{\rm out} \right)},0,\ldots,0)^\T - \bu_{\rm out}$
 \RETURN {\tt chol\_flag},\: $\bsbeta_{\rm out},\: \bu_{\rm out},\: \bg_{\rm
  out},\: P_{\rm out},\: T_{\rm out}$
\end{algorithmic}
\end{algorithm}
\footnotetext{For clarity, we use $\bp_{\rm in}$ to refer to the basis whose evaluations are
stored in $P_{\rm in}$. Actual calculations are performed using $P_{\rm in}$.}

In exact arithmetic and when applied to the same original basis, Cholesky and
modified Gram-Schmidt (used in~\cite{Abramov-2009}) both yield the same 
new basis, up to multiplication of individual basis polynomials by $\pm 1$.
We chose the Cholesky method because it is less computationally expensive than
Gram-Schmidt and, in fact, would be recommended in computing the Newton step
even
if no change of basis was performed. Considering only the highest-order terms,
the Cholesky method uses $\nQ N^2 / 2$ multiplications to form the Hessian, $N^3
/ 6$ multiplications to factor the Hessian, and $\nQ N^2 / 2$ multiplications
to update the array storing the evaluation of the basis polynomials at the
quadrature nodes.  In contrast, the
modified Gram-Schmidt method does not form the Hessian but instead
requires $\nQ N^2$ multiplications to evaluate the necessary inner products 
and $\nQ N^2 / 2$ multiplications to update the array storing the
evaluation of the basis polynomials at the quadrature nodes.   For the numerically
computed Hessian to have full rank, it is necessary that $\nQ \geq N+1$.  The benefit 
of using Cholesky increases as $\nQ$ increases.

While the Gram-Schmidt algorithm is somewhat more stable numerically, in our
experience, the difference is negligible, partly because when the Cholesky
computation is inaccurate, our algorithm (Algorithm~\ref{alg:r}) 
automatically reorthogonalizes by recomputing the Cholesky factor.  
Indeed, suppose 
the line search fails, that is, at some iteration $k$, it backtracks all 
the way to $\betaout{k}$.  Since we did not compute the 
Hessian $H_k(\betaout{k})=L_k^{-1}H_{k-1}(\betaout{k})L_k^{-\T}$ 
but rather assumed it was the identity,
this may mean that it
was not as close to the identity as expected. 
If we simply let the algorithm proceed to iteration $k + 1$ with $\betain{k+1} =
\betaout{k}$, the next step is to compute $H_k(\betain{k+1})$.
Notice that $H_k(\betain{k+1})=H_k(\betaout{k})$ is
exactly the matrix we had assumed was identity.  Now we actually compute it and
its Cholesky factor and then use this new Cholesky factor to define a new basis
$\bp_{k + 1}$---which should be closer to orthonormal---and a new search
direction $\bd_{k + 1}$. 
For numerical results see
Section \ref{subsubsec:static-results}, in particular
\tabref{tab:near-boundary-chol-vs-GS}, below.

Other choices for changing the basis can be generated via the singular value
decomposition (SVD).  To wit, if $U \Lambda U^\T$ is the SVD of
$H_{k-1}(\betain{k})$ and $O$ is any $(N+1) \times (N+1)$ orthogonal matrix,
then $L_k=U \Lambda^{1/2} O^T$ satisfies  \eqref{eq:H=LL'}.  (For example, the
choice $O=U$ makes $L_k$ the symmetric square root of $H_{k-1}(\betain{k})$.) We
have found that the change-of-basis defined using Cholesky factorization
performs just as well as that defined by the SVD with $O=I$ in the sense that
the number of problems solvable without regularization is nearly the same.
Therefore, since the SVD gives a non-triangular basis, is more expensive, and is
also harder to parallelize, we conclude that the Cholesky factorization is a
better choice for our problem.

\subsection{Computing the stopping criterion}

At each iteration, the two conditions used in the stopping criterion in
\eqref{eq:stopping} are computed in the original Legendre basis $\bm$. We use
this basis because it is physically relevant for the kinetic equation and using
the original basis for flux calculations is simpler to implement in a parallel
setting because the adaptive basis varies with each spatial cell.

The calculation of the gradient in the Legendre basis for the stopping criterion
can
be as simple as $\bg(\bsalpha_k) = T_k \bg_{k}(\betaout{k})$, a calculation that
takes only $(N+1)^2$ multiplications. Alternatively, we can compute the gradient
by first switching back to the multipliers in the Legendre basis by computing
$\bsalpha_k = S_k^\T \betaout{k}$ and then
 \begin{equation}
  \bg(\bsalpha_k) = \Vint{\exp(\bsalpha_k^\T \bm)} - \bu .
 \end{equation}
This computation is significantly more expensive (requiring $(N + 1 + 2\nQ)(N +
1)$ multiplications, to leading order, and $\nQ$ exponential evaluations). 
While in exact arithmetic the results are identical, the latter has the
advantage of consistency: the same $\bsalpha_k$ is used in the computation of
the flux in \eqref{eq:semidisc}.

The estimated upper bound on $\gamma$ in the stopping criterion
(\ref{eq:stopping}), whose computation amounts to estimating the maximum value
of the polynomial $\bd_k(\betaout{k})^\T \bp_k = \bd(\bsalpha_k)^\T \bm$, can be
computed in either basis.  The Newton direction $\bd_k(\betaout{k})$ can be
converted back to basis $\bm$ using $T_k$ to use \eqref{eq:L1bnd} directly, or
we can modify \eqref{eq:L1bnd} to use the one-norm of $\bd_k(\betaout{k})$ and
the maximum of the basis polynomials in $\bp_k$ on the quadrature nodes,
$\max_{ij} |p_i(\mu_j)|$ (see Theorem \ref{thm:invariance}). We chose the former
though we did not notice a significant difference between the two options in the
performance of the optimizer.

\subsection{Returning to the Legendre basis}

Solving an optimization problem in a changing basis requires careful
bookkeeping.  In \algref{alg:change-of-basis} we choose to update both the
matrix $T_k$
defining the change of basis and the 
$(N+1)\times\nQ$ matrix $P_k$ of basis polynomial values at
quadrature points, even though it is only strictly necessary to update one of
them.%
\footnote{Indeed, at every iteration, each matrix can be obtained from the
other ($P_k = T_k^{-1} M$ (where $M$ holds the values
of the original basis polynomials $\bm$ at the quadrature nodes), and $T_k^{-1}
= P_k M^\T (M M^\T)^{-1}$), and each matrix can be incrementally updated using
$L_k$.
}

Firstly, we choose to update $P_k$ because it is used repeatedly at
each iteration in quadratures during the line search. We choose to update $T_k$
as well (at a cost of $N^3$ multiplications per iteration) because it makes
the computations for the stopping criterion simpler, and this extra cost had
negligible effects on the total computation time for our implementation.



\subsection{Regularization}
\label{sect:regularization}

The algorithm in \cite{AHT10} deems an optimization problem `too difficult'
when the adaptive quadrature routine requires more points to estimate the
objective function than a user-prescribed limit.  In such cases, the
regularization parameter $r$ in (\ref{eq:v(r)}) is increased.

In Algorithm \ref{alg:r}, we instead increase the regularization parameter when
the optimization has not converged after $k_0$ iterations. As a result, in our
implementation, regularization is used less frequently than in \cite{AHT10}.  As
in \cite{AHT10}, we assume that $r_{\max}$, the highest value of $r$ used by the
algorithm, is such that all problems can be solved.  Therefore, when $r$ reaches
$r_{\max}$, we continue the optimization past $k_0$ iterations. While we have
been able to construct moments for which $r_{\max} = 10^{-4}$ is not large
enough to produce a numerically solvable optimization problem (see Section
\ref{subsubsec:static-results} below), we have never found problems this hard in
any of the benchmark simulations, where that value of $r_{\max}$ was used.

\section{Numerical Results}
\label{sec:results}

In this section we report on a series of numerical experiments we performed to
assess the performance of \algref{alg:r}.  These include experiments
with
(i) static problems for a fixed set of moments,
(ii) computation of a manufactured solution, and
(iii) the simulation of two well-known benchmark problems in radiative
transport.

Unless otherwise noted, we use the following parameter values:
\begin{equation}
 \begin{array}{|rcll|}
 \hline
  \tau &=& 10^{-9}\,, \quad & \mbox{upper bound
    for}\:\|\bg(\bsalpha_k)\| \: \mbox{in the stopping criterion,} \\
    \hline
  \veps_\gamma &=& 0.01\,, \quad & \mbox{upper bound on}\;\gmax-1 
   \;\mbox{to maintain realizability,} \\
   \hline
  \theta & = & 2.0\,, \quad & \mbox{slope limiting parameter,}\\ 
  \hline
  \chi & = & 0.5\,, \quad & \mbox{line search step size decrease parameter,}\\ 
  \hline
  \xi & = & 10^{-3}\,, \quad & \mbox{line search sufficient decrease
   parameter,}\\
   \hline
  \rseq & = & \{ 0, 10^{-8}, 10^{-6}, 10^{-4} \} & \mbox{sequence of
   regularization parameters to try}\\
  \hline
  k_{\max}&=& 200\,, \quad & \mbox{maximum number of iterations,} \\
  \hline
  \veps &=& 2^{-52}\,, \quad & \mbox{parameter used in line-search
   termination.} \\
  \hline
 \end{array} \nonumber
\end{equation}
When simulating \eqref{eq:semidisc}, we set 
\begin{equation}
 \dt = \frac{0.95}{1 + \veps_\gamma} \frac{2}{\theta + 2} \dx\,,
\label{eq:dt}
\end{equation}
which, in view of~(\ref{eq:gamma<1+eps}), satisfies the time-step 
restriction \eqref{eq:cfl}.%
\footnote{For all cases considered here $\sig{t} = O(1)$.  Thus the effect of $\sig{t}$ on the CFL condition in \eqref{eq:cfl} is accounted for by the ``safety factor" $0.95$.}

The initial multipliers for the optimization algorithm at $t = 0$ are those
corresponding to the isotropic distribution, $\betain{0} = (\log(u_0/2), 0,
\ldots, 0)^\T$, and the initial basis for each problem is the Legendre basis
$\bm$, so that $T_{-1} = I$, $P_{-1} = M$. At later times we begin each
optimization with the final multipliers and basis from the spacial cell's
optimization problem at the previous time step. Consequently, $T_k$ is always
lower-triangular, allowing us to use the gradient formula in Algorithm
\ref{alg:change-of-basis}.

As discussed in \secref{subsec:quad}, we use Curtis-Clenshaw quadrature to
approximate all angular integrals.  Following \cite{AHT10}, we approximate each
half interval $\mu \in [-1, 0]$ and $\mu \in [1, 1]$ separately since (due to
upwinding) integrals at cell edges in the numerical scheme have different forms
for each half-interval.  Except for \secref{subsubsec:expadapt}, we use an equal
number of quadrature points (i.e. $\nQ/2$) on each half-interval.

\subsection{Static Results} \label{subsec:static-results}
 
We perform two experiments using static problems---that is, problems for
which the moments are chosen, as opposed to being generated by the solution of a
partial differential equation.

\subsubsection{Adaptive vs. Fixed Basis with Different Quadratures}
\label{subsubsec:expadapt}

In our first experiment, we use the following $\M_{15}$ moment vector that was
encountered in \cite{AHT10}: \begin{align}
 &\begin{tabular}{r@{.}lr@{.}lr@{.}lr@{.}l}
    $\bu \, = \, [$  1&0, &
       0&837872568, &
       0&572819692, &
       0&294071376, \\
       0&079519254, &
      -0&034894762, &
      -0&060428124, &
      -0&037077987, \\
      -0&006145576, &
       0&009337451, &
       0&007920869, &
       0&000075451, \\
      -0&004350212, &
      -0&002832808, &
       0&001074657, &
       0&003022835$]^\T$ 
  \end{tabular} \label{eq:u15}
\end{align}
to compare the fixed- and adaptive-basis methods. In \figref{fig:old-vs-new}, we
show the results from attempting to solve the dual problem with this value of
$\bu$ using a fixed-basis method (in the Legendre basis) and the adaptive-basis
method of \algref{alg:r}.  In each case, the initial multiplier vector
corresponds to the isotropic distribution and no regularization is used, i.e.
$\rseq = \{ 0 \}$. Each square pixel in \figref{fig:old-vs-new} displays the
number of iterations required in \algref{alg:r} for a particular choice of
quadrature. White pixels indicate that the optimization was unable to converge
within 200 iterations.

For the adaptive basis method, the algorithm converges for every tested
quadrature with at least $45$ nodes on $[-1, 0]$ in $13$--$64$ iterations. This
suggests that $45$ Curtis-Clenshaw nodes are needed to describe the structure in
the ansatz on $\mu \in [-1, 0]$ near the solution $\alphahat$. The fixed basis
method, on the other hand, is highly unpredictable and frequently does not
converge within 200 iterations.

\begin{figure}
 \centering
 \subfigure[Using a fixed Legendre basis.]
  {\label{subfig:old}
  \includegraphics[width = .4\textwidth]
   {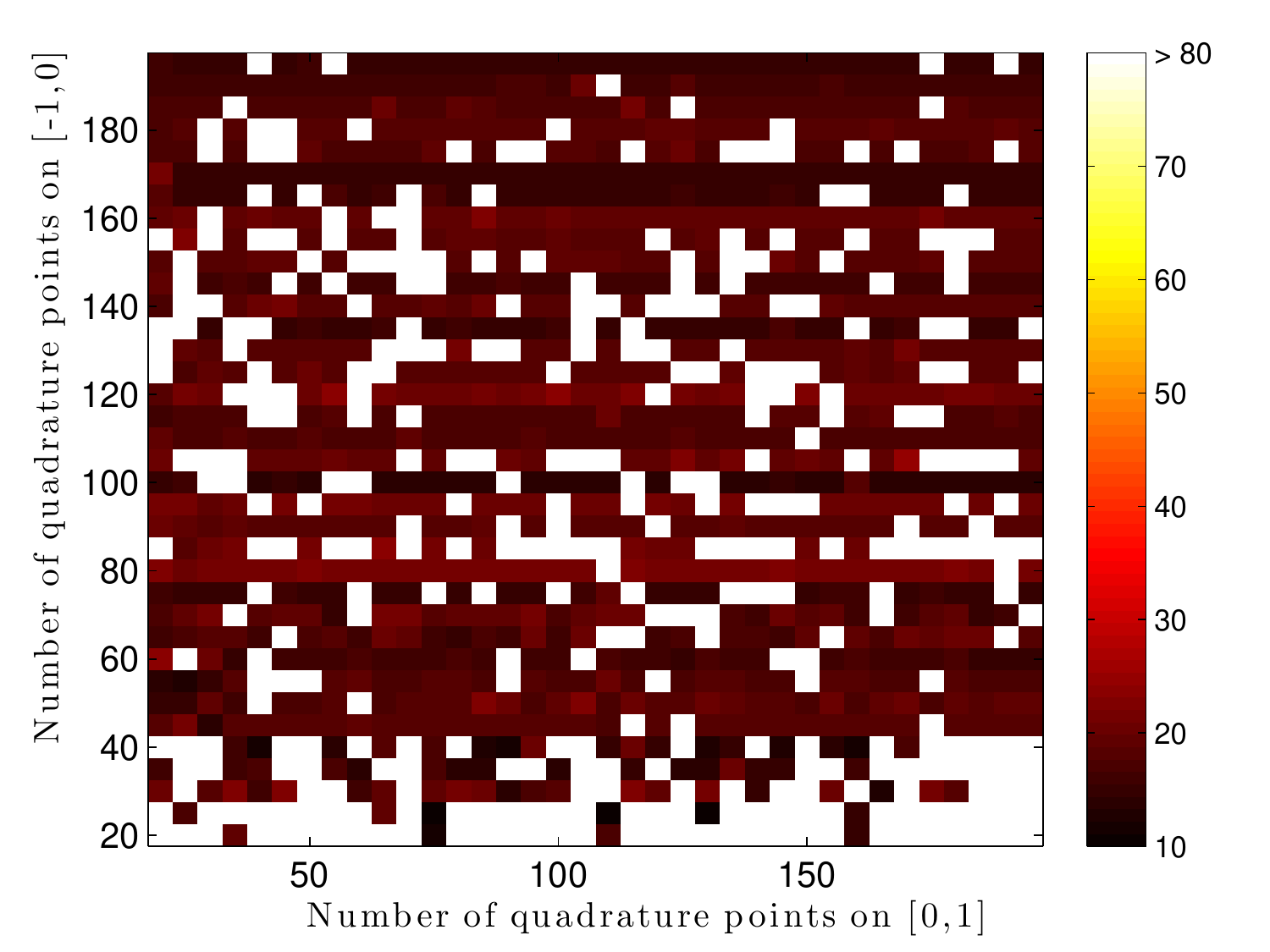}}
 \hspace{.2in}
 \subfigure[Using the adaptive-basis method.]{\label{subfig:new}
  \includegraphics[width = .4\textwidth]
   {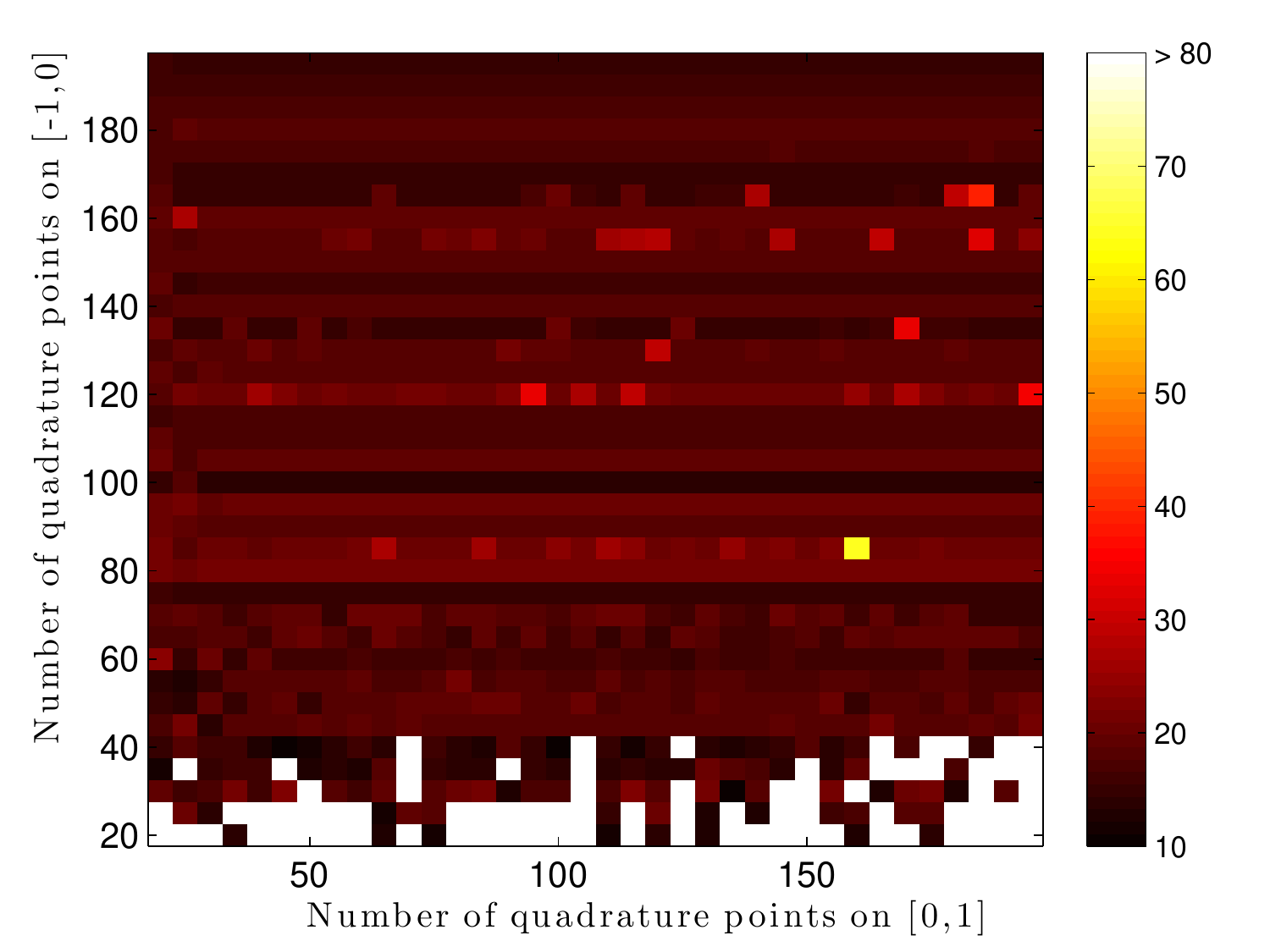}}
 \caption{Number of iterations used by the fixed-basis method 
and the adaptive-basis method on the moments given in \eqref{eq:u15} 
using Curtis-Clenshaw quadratures with a varying number of nodes.} 
 \label{fig:old-vs-new}
\end{figure}

\subsubsection{Approaching the Boundary of Realizability}
\label{subsubsec:static-results}

Next we show that, compared to a fixed basis, the adaptive-basis method allows
us to solve optimization problems closer to the realizable boundary $\p \Rm$.
For $N \ge 2m$, moments given by
\begin{equation}
 \bu = \Vint{\bm \sum_{i = 1}^m c_i \delta(\mu - \nu_i)}
  = \sum_{i = 1}^m c_i \bm(\nu_i)
\label{eq:boundary-moment}
\end{equation}
(where $c_i \ge 0$ and $\nu_i \in [-1, 1]$ ) lie on $\p \Rm$
\cite{Curto-Fialkow-1991}.
For $\ell \in \{0,1,2,\ldots \}$, let 
$r_{\ell} = 2^{-\ell}$.  
Then for any $\bu \in \p\Rm$, the sequence of regularized moments
(cf. \eqref{eq:v(r)})
\begin{equation}
 \bv(\bu,r_{\ell}) = (1 - r_{\ell}) \bu + r_{\ell} Q \bu \,,
  \quad \ell \in \{0,1,2,\ldots \}\:,
\label{eq:sequence-to-boundary}
\end{equation}
approaches $\p \Rm$ as $\ell \rightarrow \infty$. Let $\ell_{\rm A}$ be the
largest value of $\ell$ such that the adaptive-basis method converges with input
moment $\bv(\bu,r_{\ell})$ for all $\ell \le \ell_{\rm A}$, and let $\ell_{\rm
F}$ be defined in a similar way for the fixed-basis method. Then $r_{\ell_{\rm
A}}$ and $r_{\ell_{\rm F}}$ give an indication 
of how much regularization is needed to solve different problems near the
realizable boundary.

We performed experiments to compute $\ell_{\rm A}$ and $\ell_{\rm F}$ with moment
vectors of length $13$ ($N = 12$) that are generated via
\eqref{eq:boundary-moment} using a combination of $m = 6$ delta functions. 
Table~\ref{tab:u-boundary} gives the strengths $c_i$ and 
locations $\nu_i$ of the
delta functions generating example moments $\bu^{(1)}, \ldots, \bu^{(6)}$, each
of which lies on $\p \Rm$. The moments $\bu^{(1)}, \ldots, \bu^{(4)}$ are chosen
to show the effects of changing the strengths $c_i$ and the distance between
locations of the delta functions, while $\bu^{(5)}$ and $\bu^{(6)}$ are chosen
to illustrate the case when quadrature nodes are co-located with the delta
functions generating the moments on the boundary.
The delta functions generating $\bu^{(5)}$ are located at the 4-th, 7-th, 10-th,
13-th, 14-th, and 15-th nodes of the 20-point Curtis-Clenshaw quadrature
over the interval $[-1, 0]$;  for $\bu^{(6)}$, the locations $\nu_1$, $\nu_3$,
$\nu_4$, $\nu_5$, and $\nu_6$ correspond to the 4-th, 9-th, 12-th, 15-th, and
17-th nodes, respectively, of the same quadrature, while $\nu_2$ is the 54-th
node of the 153-point Curtis-Clenshaw quadrature over the interval $[-1, 0]$.

Values of $r_{\ell_{\rm A}}$ and $r_{\ell_{\rm F}}$ are displayed in Table
\ref{tab:near-boundary12}.  These values are computed using several different
quadratures sizes.  The values of $\nQ$ are chosen so that the quadratures are
nested.  (The nodes of the Curtis-Clenshaw quadrature of order $2n - 1$ include
all nodes of the quadrature of order $n$.)  The table shows that the
adaptive-basis method nearly always uses a smaller regularization parameter than
the fixed-basis method, in many cases by two orders of magnitude.
\begin{table}
\caption{Boundary moments used for tests in Table \ref{tab:near-boundary12}
below.}
\begin{center}
\begin{tabular}{cc|c|c|c|c|c|c}
\label{tab:u-boundary}
 & $i$ & 1 & 2 & 3 & 4 & 5 & 6 \\ \midrule\midrule
\multirow{2}{*}{$\bu^{(1)}$} & $\nu_i$  & 0.2 & 0.3 & 0.4 & 0.5 & 0.6 & 0.7 \\
 & $c_i$ & 0.167 & 0.167 & 0.167 & 0.167 & 0.167 & 0.167 \\ \midrule
\multirow{2}{*}{$\bu^{(2)}$} & $\nu_i$  & 0.2 & 0.3 & 0.4 & 0.5 & 0.6 & 0.7 \\
 & $c_i$ & 0.0833 & 0.0833 & 0.0833 & 0.333 & 0.0833 & 0.333 \\ \midrule
\multirow{2}{*}{$\bu^{(3)}$} & $\nu_i$  & 0.2 & 0.4 & 0.6 & 0.88 & 0.89 & 0.9 \\
 & $c_i$ & 0.0833 & 0.0833 & 0.0833 & 0.333 & 0.0833 & 0.333 \\ \midrule 
\multirow{2}{*}{$\bu^{(4)}$} & $\nu_i$  & -0.8 & -0.5 & -0.1 & 0.59999 & 0.6 &
0.8 \\
 & $c_i$ & 0.167 & 0.167 & 0.167 & 0.167 & 0.167 & 0.167 \\ \midrule
\multirow{2}{*}{$\bu^{(5)}$} & $\nu_i$  & -0.94 & -0.773 & -0.541 & -0.299 &
-0.227 & -0.161 \\
 & $c_i$ & 0.417 & 0.0417 & 0.0417 & 0.0417 & 0.417 & 0.0417 \\ \midrule 
\multirow{2}{*}{$\bu^{(6)}$} & $\nu_i$  & -0.94 & -0.729 & -0.623 & -0.377 &
-0.161 & -0.0603 \\
 & $c_i$ & 0.167 & 0.167 & 0.167 & 0.167 & 0.167 & 0.167
\end{tabular}
\end{center}
\end{table}
\begin{table}
\caption{Values of the smallest regularization parameters
giving convergence of \algref{alg:r} for the moments in Table
\ref{tab:u-boundary}: $r_{\ell_{\rm{A}}}$ (adaptive basis) and
$r_{\ell_{\rm{F}}}$ (fixed basis).}
\begin{center}
\begin{tabular}{ccc|cc|cc|cc|cc|cc}
\label{tab:near-boundary12}
 & \multicolumn{2}{c}{$\bu^{(1)}$} & \multicolumn{2}{c}{$\bu^{(2)}$} &
\multicolumn{2}{c}{$\bu^{(3)}$} & \multicolumn{2}{c}{$\bu^{(4)}$} &
\multicolumn{2}{c}{$\bu^{(5)}$} & \multicolumn{2}{c}{$\bu^{(6)}$} \\
\cmidrule(r){2-3} \cmidrule(r){4-5}   \cmidrule(r){6-7}
\cmidrule(r){8-9} \cmidrule(r){10-11}   \cmidrule(r){12-13}
$\nQ$ & $r_{\ell_{\rm{A}}}$ & $r_{\ell_{\rm{F}}}$ & $r_{\ell_{\rm{A}}}$ &
$r_{\ell_{\rm{F}}}$ & $r_{\ell_{\rm{A}}}$ & $r_{\ell_{\rm{F}}}$ &
$r_{\ell_{\rm{A}}}$ & $r_{\ell_{\rm{F}}}$ & $r_{\ell_{\rm{A}}}$ &
$r_{\ell_{\rm{F}}}$ & $r_{\ell_{\rm{A}}}$ & $r_{\ell_{\rm{F}}}$ \\
\midrule
 40 & 1.5e-5 & 1.2e-4 & 1.2e-4 & 1.2e-4 & 3.1e-2 & 3.1e-2 & 1.2e-1 &
1.2e-1 & 3.1e-2 & 1.5e-5 & 7.5e-9 & 6.0e-8 \\
 78 & 1.2e-7 & 1.2e-7 & 6.0e-8 & 1.5e-5 & 7.6e-6 & 1.5e-5 & 1.6e-2 &
1.6e-2 & 1.9e-9 & 3.8e-6 & 7.5e-9 & 7.8e-3 \\
154 & 1.2e-7 & 1.5e-5 & 3.0e-8 & 1.6e-2 & 2.4e-7 & 3.8e-6 & 3.9e-3 &
1.6e-2 & 3.0e-8 & 1.9e-6 & 7.5e-9 & 1.5e-5 \\
306 & 1.2e-7 & 3.1e-5 & 1.2e-7 & 7.6e-6 & 2.4e-7 & 1.9e-6 & 9.8e-4 &
3.9e-3 & 2.4e-7 & 1.5e-5 & 7.5e-9 & 1.9e-6 \\
610 & 1.2e-7 & 3.1e-5 & 7.6e-6 & 9.5e-7 & 4.8e-7 & 6.1e-5 & 2.4e-4 &
3.9e-3 & 9.5e-7 & 2.4e-4 & 7.5e-9 & 4.8e-7
\end{tabular}
\end{center}
\end{table}

We repeated the boundary-moment tests of Table \ref{tab:near-boundary12} using
the modified Gram-Schmidt method instead of Cholesky factorization, in order to
assess whether the added stability enables the solution of dual problems for
moments closer to the realizable boundary. The results in Table
\ref{tab:near-boundary-chol-vs-GS} show that the modified Gram-Schmidt algorithm
does not provide a significant advantage.

\begin{table}
\caption{Same as Table \ref{tab:near-boundary12} above, but here we compare the
use of Cholesky factorization (C) to that of modified Gram-Schmidt (GS) in the
adaptive-basis method.}

\begin{center}
\begin{tabular}{ccc|cc|cc|cc|cc|cc}
\label{tab:near-boundary-chol-vs-GS}
 & \multicolumn{2}{c}{$\bu^{(1)}$} & \multicolumn{2}{c}{$\bu^{(2)}$} &
\multicolumn{2}{c}{$\bu^{(3)}$} & \multicolumn{2}{c}{$\bu^{(4)}$} &
\multicolumn{2}{c}{$\bu^{(5)}$} & \multicolumn{2}{c}{$\bu^{(6)}$} \\

\cmidrule(r){2-3} \cmidrule(r){4-5}   \cmidrule(r){6-7}
\cmidrule(r){8-9} \cmidrule(r){10-11}   \cmidrule(r){12-13}

$\nQ$ & $r_{\ell_{\rm{C}}}$ & $r_{\ell_{\rm{GS}}}$ & $r_{\ell_{\rm{C}}}$ &
$r_{\ell_{\rm{GS}}}$ & $r_{\ell_{\rm{C}}}$ & $r_{\ell_{\rm{GS}}}$ &
$r_{\ell_{\rm{C}}}$ & $r_{\ell_{\rm{GS}}}$ & $r_{\ell_{\rm{C}}}$ &
$r_{\ell_{\rm{GS}}}$ & $r_{\ell_{\rm{C}}}$ & $r_{\ell_{\rm{GS}}}$ \\

\midrule

 40 & 1.5e-5 & 7.6e-6 & 1.2e-4 & 1.2e-4 & 3.1e-2 & 3.1e-2 & 1.2e-1 &
1.2e-1 & 3.1e-2 & 1.2e-10 & 7.5e-9 & 7.5e-9 \\
 78 & 1.2e-7 & 1.2e-7 & 6.0e-8 & 6.0e-8 & 7.6e-6 & 7.6e-6 & 1.6e-2 &
1.6e-2 & 1.9e-9 & 9.3e-10 & 7.5e-9 & 7.5e-9 \\
154 & 1.2e-7 & 1.2e-7 & 3.0e-8 & 6.0e-8 & 2.4e-7 & 4.8e-7 & 3.9e-3 &
3.9e-3 & 3.0e-8 & 1.2e-7  & 7.5e-9 & 7.5e-9 \\
306 & 1.2e-7 & 1.2e-7 & 1.2e-7 & 1.2e-7 & 2.4e-7 & 1.2e-7 & 9.8e-4 &
9.8e-4 & 2.4e-7 & 4.8e-7  & 7.5e-9 & 7.5e-9 \\
610 & 1.2e-7 & 1.2e-7 & 7.6e-6 & 1.2e-7 & 4.8e-7 & 4.8e-7 & 2.4e-4 &
2.4e-4 & 9.5e-7 & 4.8e-7  & 7.5e-9 & 7.5e-9
\end{tabular}
\end{center}
\end{table}

\subsection{Results on a manufactured-solution testbed}

In our third test, we assess the effect of regularization on the accuracy of the
solution to the moment system. In general, this is difficult to measure since
the true solution is generally unknown, and we are unable to compute
high-resolution approximations without regularization. As an alternative,
we use the method of manufactured solutions
\cite{Knupp-Salari-2000,Roache-2002}.  Following this approach, we solve
numerically the system
\begin{equation}
 \p_t \bu + \p_x \bff(\bu) = \p_t \bw + \p_x \bff(\bw) \,,
\label{eq:moment-sys-source}
\end{equation}
where $\bw$ is a specified target solution. For simplicity, we set
\begin{equation} \bw(x, t) := \Vint{\bm \exp(\bsalpha(x, t)^\T \bm)} \,, \quad
 (x, t) \in [-1, 1] \times [0, \tf] \:,
 \label{eq:v_man}
\end{equation}
where%
\begin{subequations}
\begin{align}
 \alpha_1(x, t) &= 0.1 + \frac{K}{2}\left( \cos(\pi (x - t)) + 1 \right) 
 \:, \qquad  \qquad \quad K>0 \:,
 \label{eq:alpha1}
  \\
 \alpha_0(x, t) &= \log\left( \frac{(1 + \frac{1}{2}\cos(\pi (x - t)))
  \alpha_1(x, t)}{2 \sinh(\alpha_1(x, t))} \right) \:. 
  \label{eq:manufactured-alpha0}
 \end{align}
\end{subequations}
and $\alpha_2(x, t) \equiv \alpha_3(x, t) \equiv  \ldots \equiv
\alpha_N(x,t)\equiv 0$, so that integrals of the form $\Vint{\mu^k
\exp(\bsalpha^\T \bm)}$ can be computed explicitly. In particular, 
\begin{equation}
w_0(x, t) = 1 + \frac{1}{2}\cos(\pi (x - t)).
\end{equation}

It is clear from \eqref{eq:alpha1} that $\alpha_1$ is always positive, which
means that particles are always moving to the right.  Meanwhile, the parameter
$K$ controls the distance between $\bw$ and $\p \Rm$.  Indeed, as $K$ increases,
the ansatz $G_{\alphahat(\bw)}$ looks more and more like a single delta function
at $\mu = 1$, particularly when $x = t$, where $\alpha_1(x, t)$ reaches its
maximum.  The offset value $0.1$ is included in \eqref{eq:alpha1} order to bound
$\alpha_1$ away from zero, where the exact evaluation of integrals of the form
$\Vint{\mu^k \exp(\alpha_1 \mu)}$ is numerically unstable.  The profile of the
particle density $u_0$ is plotted at a few different times in Figure
\ref{fig:manufactured-u}.  For the errors and statistics presented below, we
use $N = 3$, $K = 10$, and $\tf = 0.2$.

The kinetic scheme requires cell averages of the known right-hand side of
\eqref{eq:moment-sys-source}. This requires the integral of $\p_t \bw$ over
$I_j$ as well as pointwise evaluation of the flux $ \bff(\bw(x, t)) = \Vint{\mu
\bm \exp(\bsalpha(x, t)^\T \bm)} $ at cell edges. For the target solution in
\eqref{eq:v_man}, the integral of $\p_t \bw$ does not have an analytical form,
so we approximate it using 16-point Curtis-Clenshaw quadratures on
$(x_{j-1/2},\, x_{j+1/2})$. Periodic boundary conditions are enforced using
ghost cells.


\begin{figure}
 \centering
 \includegraphics[width = .48\textwidth]{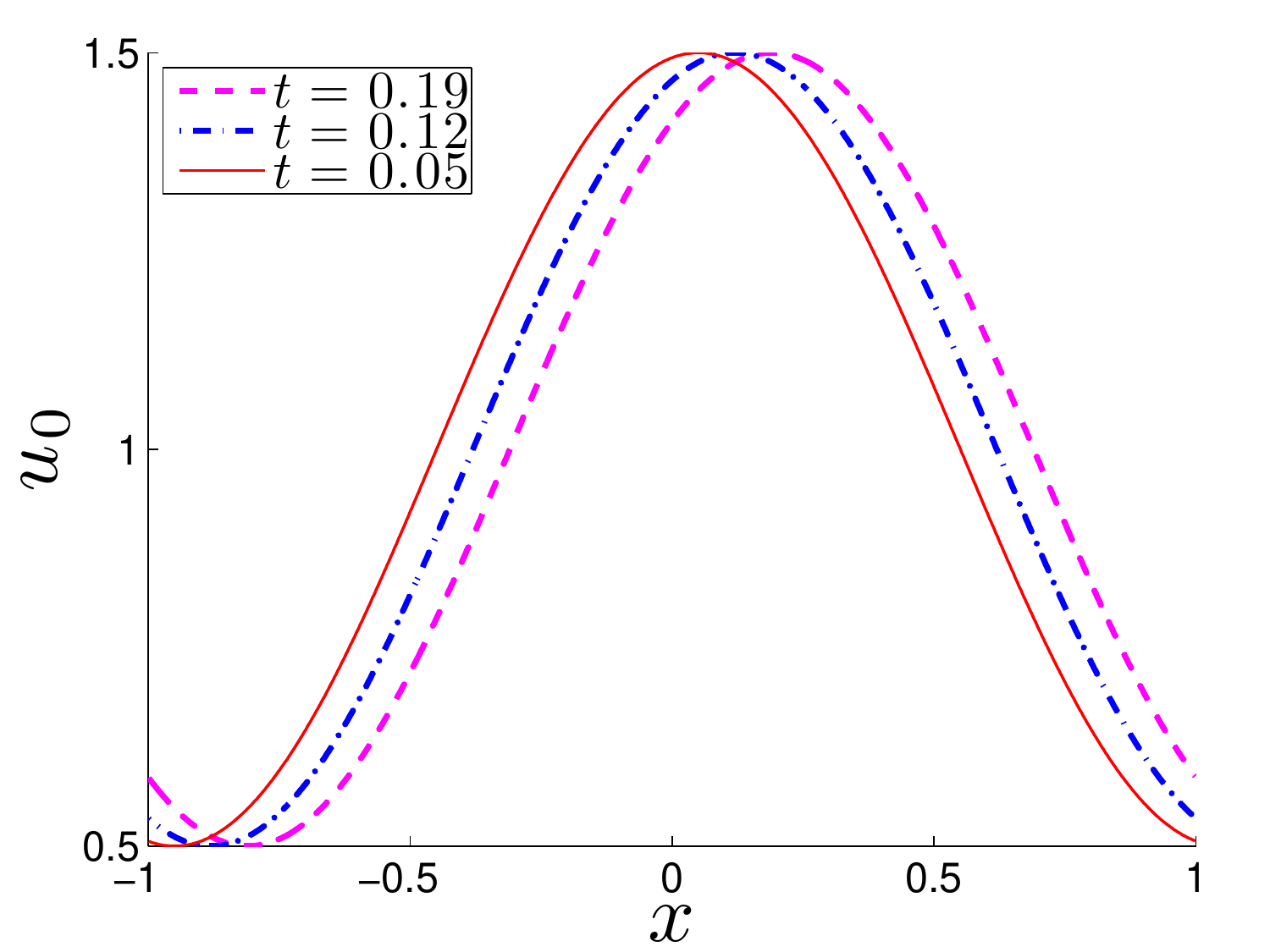}
 \caption{The particle density $u_0(x, t)$ for
  the manufactured-solution system
  \eqref{eq:moment-sys-source}-\eqref{eq:manufactured-alpha0} with $\M_3$ and $K
  = 10$, computed using $N_x = 800$ cells.}
 \label{fig:manufactured-u}
\end{figure}

Having the exact solution $\bw(x, t)$ allows us to calculate errors
for every numerical simulation.  We first interpolate the cell averages using
second-order affine reconstructions in each spatial cell:
\begin{equation}
 \bu_{\dx}(x) = \bu_j +(x-x_j)\frac{\bu_{j+1}-\bu_{j-1}}{2\dx}\,,
  \qquad x\in I_j, \qquad j \in \{1, \ldots, N_x\},
\label{eq:linear-reconstruction}
\end{equation}
where the moments are all taken at 
time $\tf$.
Then the $L^1$ and $L^\infty$ errors are given as
\begin{equation}
 \be^1_{\dx} 
:= \int_{-1}^1 | \bw(x, \tf)-\bu_{\dx}(x) | dx
\quand
   \be^\infty_{\dx} 
:= \max_{x\in[-1, 1]} |\bw(x, \tf)-\bu_{\dx}(x)| \:,
\label{eq:e1eInfty}
\end{equation}
respectively, where the absolute value is taken component-wise.  To approximate
the integral in $\be^1_{\dx}$, we split each spatial cell $I_j$ into 100 equally
sized subintervals, and then apply a twenty-point Gaussian quadrature on each
subinterval.  We approximate $\be^\infty_{\dx}$ with the maximum value of $|
\bw(x, \tf)-\bu_{\dx}(x) |$ over those same quadrature points.  Below, we only report
errors in the particle density $u_0$, i.e. the zero-th component of
$\be^1_{\dx}$ and $\be^\infty_{\dx}$. For $q \in \{1,\infty\}$, the order of
convergence $\bsnu$ between two successive grids of size $\dx_1$ and $\dx_2$ is
defined by the equality
\begin{equation}
 \be^q_{\dx_2} / \be^q_{\dx_1} 
  = \left({\dx_2} / {\dx_1}\right) ^{\bsnu} \:,
\end{equation}
where all operations are performed component-wise.

We now use the manufactured solution to compare the adaptive- and
fixed-basis methods. The results of Section \ref{subsec:static-results} suggest
that less regularization is needed with the adaptive-basis method.  Because
regularization introduces errors, we expect
the adaptive-basis solution to be more accurate.  To test this hypothesis, we
consider 
three cases:  the adaptive-basis method with $\nQ = 40$ quadrature
points (AB-Q40), the fixed-basis method with $\nQ = 40$ quadrature
points; (FB-Q40); and the fixed-basis method
with $\nQ = 240$ quadrature points (FB-Q240).  In all three cases, we use
Curtis-Clenshaw quadrature and
a regularization parameter $k_0 = 40$.

We first examine the difference in regularizations needed by each method.
\tabref{tab:AFstats-r-aggressive} shows (i) the fraction of problems regularized,
(ii) the average regularization parameter, and (iii) the statistic 
\begin{equation}
\label{eq:E}
E := \sum_{n=0}^{N_t - 1}  \sum_{m=1}^{2} \sum_{j=1}^{N_{x}}  r^{n,m}_{j} \, u_{j, 0}^{n,m} \,,
\end{equation} 
where $r^{n,m}_{j}$ is the value of $r$ used by the optimization algorithm
in cell $j$ at time $t^n = n\dt$ and Runge-Kutta stage $m$, and $u_{j,
0}^{n,m}$ 
is the corresponding cell average of the particle density $u_0$.
Within a single cell, the $L^1$-norm of the error introduced by
regularization is bounded by $r N u_0$, since (see~\eqref{eq:v(r)})

\begin{equation}
 \| \bu - \bv(\bu, r) \|_1 = r \| Q\bu - \bu \|_1
  = r \| (0, -u_1, \ldots , -u_N )^\T \|_1
  \le r N u_0 \:,
\end{equation}
where we have used the fact that,
since $\| m_i \|_{\infty} = 1$,
$|u_i| \le u_0$ for $i \in \{1, \ldots , N \}$. 
Thus $E$ gives us an estimate of the error introduced by
regularization. By each of these three metrics, the adaptive-basis method
uses less regularization than either fixed-basis method.

\begin{table}
\caption{Manufactured solution: Use of regularization for adaptive-basis with
$\nQ = 40$ (AB-Q40), fixed-basis with $\nQ = 40$ (FB-Q40), and fixed-basis with
$\nQ = 240$ (FB-Q240).  Recall that $E$ is defined in \eqref{eq:E}.}

\begin{center}
\begin{tabular}{cccccccccc}
\label{tab:AFstats-r-aggressive}

 & \multicolumn{3}{c}{fraction regularized} &
\multicolumn{3}{c}{mean $r$}
 & \multicolumn{3}{c}{$E$} \\
\cmidrule(r){2-4} \cmidrule(r){5-7} \cmidrule(r){8-10}
$N_x$ & AB-Q40 & FB-Q40 & FB-Q240 & AB-Q40 & FB-Q40 & FB-Q240
& AB-Q40 & FB-Q40 & FB-Q240 \\ \midrule


 100 & 4.55e-04 & 6.14e-03 & 4.32e-03 & 4.35e-12 & 4.89e-10 & 9.02e-10 &
8.89e-08 & 9.78e-06 & 1.82e-05 \\ 
 200 & 7.67e-03 & 2.99e-02 & 2.97e-02 & 5.25e-10 & 5.02e-09 & 5.47e-09 &
4.09e-05 & 3.85e-04 & 4.18e-04 \\ 
 400 & 6.98e-04 & 6.37e-03 & 6.90e-03 & 3.53e-11 & 5.32e-10 & 2.03e-09 &
1.06e-05 & 1.36e-04 & 4.76e-04 \\ 
 800 & 4.24e-04 & 2.72e-03 & 2.63e-03 & 5.10e-11 & 5.74e-10 & 5.87e-10 &
6.30e-05 & 7.09e-04 & 7.25e-04 \\ 
1200 & 6.77e-04 & 4.09e-03 & 3.96e-03 & 5.65e-11 & 8.63e-10 & 9.05e-10 &
1.57e-04 & 2.39e-03 & 2.51e-03 \\ 
1600 & 8.60e-04 & 5.27e-03 & 5.20e-03 & 9.63e-11 & 1.16e-09 & 1.13e-09 &
4.74e-04 & 5.72e-03 & 5.53e-03 \\ 
2000 & 1.08e-03 & 6.62e-03 & 6.36e-03 & 1.12e-10 & 1.50e-09 & 1.38e-09 &
8.57e-04 & 1.15e-02 & 1.06e-02

\end{tabular}
\end{center}
\end{table}

In \tabref{tab:AFerrors-aggressive}, we show $L^1$ and $L^\infty$ errors for
each method.  Along with the results in Table~\ref{tab:AFstats-r-aggressive}, these errors
suggest the following conclusions:
\begin{enumerate}
\itembf{The regularization causes the convergence to slow
significantly.}  While a different regularization strategy might improve these
results, we expect that both adaptive-basis and fixed-basis methods will
eventually stall.  This is because it is not possible to scale the
regularization with the mesh.  Indeed in many problems, the optimization may
become more difficult as the mesh is refined.  

\itembf{The adaptive-basis method shows better convergence properties.}  Even when the
convergence is not second-order, the adaptive-basis method has smaller error and
higher order of convergence.   This is to be expected since the adaptive-basis
method regularizes less and the values of the regularization  parameter
are smaller. Indeed, the
convergence of the fixed-basis method stalls around $1200$ cells, while the
adaptive-basis continues to converge (albeit slowly) up to $2000$ cells (the
finest mesh). This translates to an error that is a factor of eight smaller at
$2000$ cells.

\itembf{More quadrature points do not improve the fixed-basis method.} 
Increasing the number of quadrature points in the fixed-basis method from $40$
to $240$ does nothing to improve the convergence.  In addition, we have tried
using the adaptive-quadrature routine from \cite{AHT10}, which refines
the quadrature used to evaluate the dual objective function 
until the tolerance on a specified accuracy criterion is met. 
However, even with a tolerance of $10^{-13}$, the
adaptive-quadrature routine never refines the 40-point quadrature.  Thus the numerical
results are
the same as for FB-Q40, but the cost is much higher.
\end{enumerate}

\begin{table}
\caption{Manufactured solution: Convergence in space of $L^1$ and $L^\infty$
errors for adaptive-basis and fixed-basis optimization
methods. Errors are only computed for $u_0$.}

\begin{center}
\begin{tabular}{ccccccc}
\label{tab:AFerrors-aggressive}

 & \multicolumn{2}{c}{AB-Q40} & \multicolumn{2}{c}{FB-Q40} &
   \multicolumn{2}{c}{FB-Q240}\\
\cmidrule(r){2-3} \cmidrule(r){4-5} \cmidrule(r){6-7}
$N_x$ & $L^1$      & $\nu$ & $L^1$      & $\nu$ & $L^1$      & $\nu$ \\ \midrule


 100 & 5.63e-04 & ---  & 5.63e-04 &  ---  & 5.60e-04 &  ---  \\ 
 200 & 1.33e-04 & 2.08 & 1.33e-04 &  2.08 & 1.32e-04 &  2.09 \\ 
 400 & 3.11e-05 & 2.09 & 3.14e-05 &  2.08 & 3.26e-05 &  2.02 \\ 
 800 & 7.27e-06 & 2.10 & 9.02e-06 &  1.80 & 9.12e-06 &  1.84 \\ 
1200 & 3.58e-06 & 1.74 & 9.75e-06 & -0.19 & 9.92e-06 & -0.21 \\ 
1600 & 3.11e-06 & 0.49 & 1.40e-05 & -1.27 & 1.32e-05 & -0.99 \\ 
2000 & 2.83e-06 & 0.43 & 1.79e-05 & -1.09 & 1.73e-05 & -1.22 \\ \midrule

 & & & & & & \\

 & $L^\infty$ & $\nu$ & $L^\infty$ & $\nu$ & $L^\infty$ & $\nu$ \\ \midrule
 
 100 & 1.84e-03 & ---  & 1.84e-03 &  ---  & 1.85e-03 &  ---  \\ 
 200 & 7.75e-04 & 1.25 & 7.66e-04 &  1.26 & 7.47e-04 &  1.30 \\ 
 400 & 2.72e-04 & 1.51 & 2.79e-04 &  1.46 & 2.77e-04 &  1.43 \\ 
 800 & 8.35e-05 & 1.70 & 1.04e-04 &  1.42 & 1.03e-04 &  1.43 \\ 
1200 & 4.86e-05 & 1.33 & 1.36e-04 & -0.66 & 1.14e-04 & -0.26 \\ 
1600 & 3.73e-05 & 0.92 & 1.67e-04 & -0.72 & 1.66e-04 & -1.31 \\ 
2000 & 3.38e-05 & 0.45 & 2.31e-04 & -1.44 & 2.10e-04 & -1.05 

\end{tabular}
\end{center}
\end{table}

Finally, \tabref{tab:AF-cpu-iters} shows CPU times and mean iteration counts.
While the iteration counts are roughly the same for all, the CPU times for
AB-Q40 are $25\%-30\%$ larger than the times for FB-Q40. We attribute this
difference to the more expensive stopping criterion and extra matrix
computations (for example, updating the basis polynomials) in the adaptive-basis
method. 
The CPU times for FB-Q240 are roughly $10\%$ less than the times for AB-Q40.  We
expect that adding even more quadrature will eventually produce a more expensive
method, even though the convergence behavior will not improve.

\begin{table}
\caption{Manufactured solution time and iterations for adaptive-basis
with $\nQ = 40$ (AB-40), fixed-basis with $\nQ = 40$ (FB-40), and
fixed-basis with $\nQ = 240$ (FB-Q240).}

\begin{center}
\begin{tabular}{ccccccc}
\label{tab:AF-cpu-iters}

 & \multicolumn{3}{c}{CPU time (s)} & \multicolumn{3}{c}{mean iterations} \\
\cmidrule(r){2-4} \cmidrule(r){5-7}
$N_x$ & AB-Q40 & FB-Q40 & FB-Q240 & AB-Q40 & FB-Q40 & FB-Q240 \\ \midrule


 100 & 7.72e+00 & 6.28e+00 & 7.17e+00 & 2.45 & 2.46 & 2.46 \\ 
 200 & 4.31e+01 & 2.73e+01 & 3.13e+01 & 2.73 & 2.57 & 2.58 \\ 
 400 & 1.07e+02 & 8.60e+01 & 1.02e+02 & 2.04 & 2.05 & 2.06 \\ 
 800 & 3.58e+02 & 2.83e+02 & 3.24e+02 & 1.52 & 1.52 & 1.52 \\ 
1200 & 8.20e+02 & 6.42e+02 & 7.39e+02 & 1.53 & 1.53 & 1.53 \\ 
1600 & 1.48e+03 & 1.15e+03 & 1.31e+03 & 1.54 & 1.53 & 1.53 \\ 
2000 & 2.35e+03 & 1.80e+03 & 2.05e+03 & 1.55 & 1.54 & 1.54

\end{tabular}
\end{center}
\end{table}

From the results in this section we conclude that the adaptive-basis,
fixed-quadrature method, while more expensive than a fixed-basis,
fixed-quadrature method, reduces the need for regularization, thereby allowing
further convergence before regularization errors dominate.  We also see, from
the results using a fixed-basis, fixed-quadrature method with a much
higher-order quadrature, that simply adding quadrature points is not as
effective at reducing the need for regularization.

\subsection{Results on standard test problems}
We now revisit the two standard test problems considered in \cite{AHT10,
Hauck-2011}.
Throughout, we use $\nQ = 40$. We consider two values of $k_0$, the number of iterations of \algref{alg:r} before
regularization is increased. We use $k_0 = 6$ to illustrate a more aggressive
regularization scheme, and $k_0 = 40$ to illustrate a less aggressive
regularization scheme.  One would expect a more aggressive regularization scheme
to introduce more errors due to regularization, but also to solve the problem
more quickly since fewer iterations may be used.

\subsubsection{Plane-source problem}
\label{subsubsec:plane}

In this problem, we model particles in an infinite domain with a purely
scattering medium $\sig{t}=\sig{s}=1$.  We consider an initial condition
\begin{equation}
 \bu(x,0) = \delta (x) + 2F_{\rm floor}\:,
\end{equation}
where $F_{\rm floor}=0.5 \times 10^{-8}$ is used to keep moments away from the
realizable boundary. Although the problem is posed on an infinite domain, a
finite domain is required for practical computation and boundary conditions must
be specified. As in \cite{Hauck-2011,AHT10}, we approximate the infinite domain by the
interval $[\xL, \xR] = [-D/2,\, D/2]$, where $D := 2t_{\rm f} + 0.2$ is
chosen to ensure that the boundary has negligible effects on the solution.  At the
right and left ends of the boundary, we enforce the boundary conditions
\begin{equation}
 \bu(\xL,t) = \bu(\xR,t) = 2F_{\rm floor}
\end{equation}
for $t \ge 0$.

In \figref{fig:planeN15} we present results from a $N_x = 1000$ cell simulation
of the $\M_{15}$ system with the adaptive-basis method and $k_0 = 6$. The
corresponding figures for simulations using $k_0 = 40$ or the fixed-basis method
are qualitatively similar.  Sample profiles of the resulting particle density
$u_0(x,t)$ presented in \figref{subfig:planeN15u} agree with what was presented
in \cite{Hauck-2011,AHT10}. Figure \ref{subfig:planeN15iter} shows the iteration
profile.  The mean number of iterations, excluding trivially solvable isotropic
problems, was about $2.03$ for $k_0 = 6$, while for $k_0 = 40$ that mean was
$2.02$.  The iteration histogram in Figure \ref{subfig:planeN15iter-hist} shows
that indeed nearly $99\%$ of the optimization problems were solved in three
iterations or fewer with $k_0 = 6$. Figure \ref{subfig:planeN15r} shows the
points in space-time where moments were regularized.  These moments were
encountered as particles from the boundary push out into the vacuum along the
front $x = \pm t$ for $t > 0$.  

Figure \ref{fig:planeN15r-hist} shows a histogram of the
regularization parameter $r$.  For the adaptive basis, only $0.36\%$ of the
nontrivial optimization problems%
\footnote{%
Isotropic moments are not included in the statistics since the optimization is
trivial in this case.%
}
were regularized when $k_0 = 6$ and only $0.066\%$ when $k_0 = 40$. The figure
also includes the statistics from simulations using the fixed-basis method: 
when $k_0 = 6$, the results were similar, but when $k_0 = 40$, more
regularization was needed than when using the adaptive basis. All results are a
significant improvement over \cite{AHT10}, where the algorithm regularized about
$2.25\%$ of the problems.  This is due, in large part, to the change in
regularization strategy:  In \cite{AHT10}, regularization is applied when an
adaptive quadrature routine cannot satisfy a prescribed tolerance. Here we
instead regularize after a prescribed number of iterations, and this turns
out to be a less aggressive strategy.

\begin{figure}
 \centering
 \subfigure[Snapshots of the particle density $u_0(x,t)$ at $t=0.5$, $1.5$, and
  $3.5$.] {\label{subfig:planeN15u}
  \includegraphics[width = .4\textwidth]
  {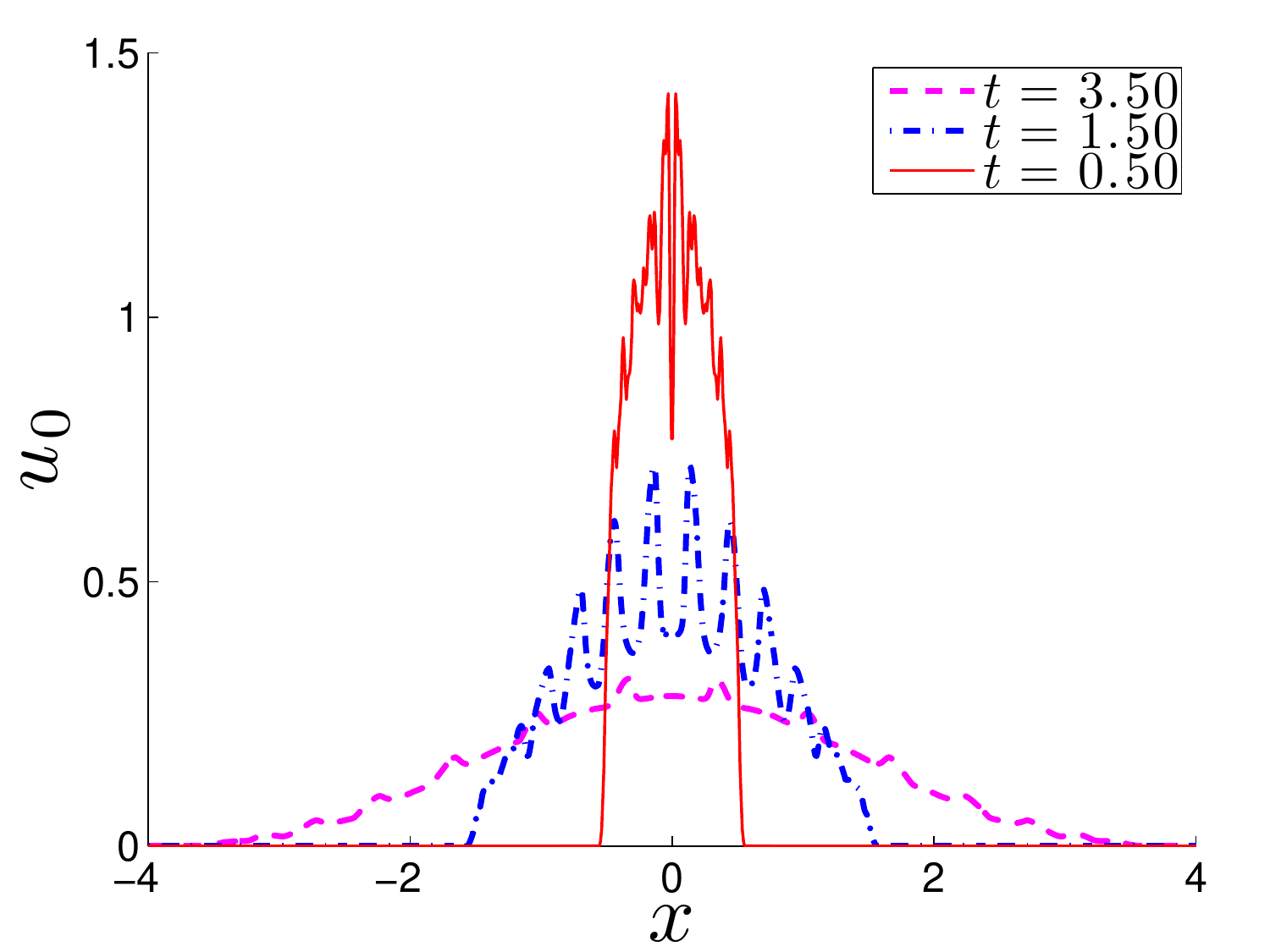}}
 \hspace{.2in}
 \subfigure[The total number of iterations (over the two Runge-Kutta stages) 
  needed to solve the optimization problem at each point in space and time. 
  The maximum number of iterations needed for one time step was 454 (off
  scale).]{\label{subfig:planeN15iter}
  \includegraphics[width = .4\textwidth]
   {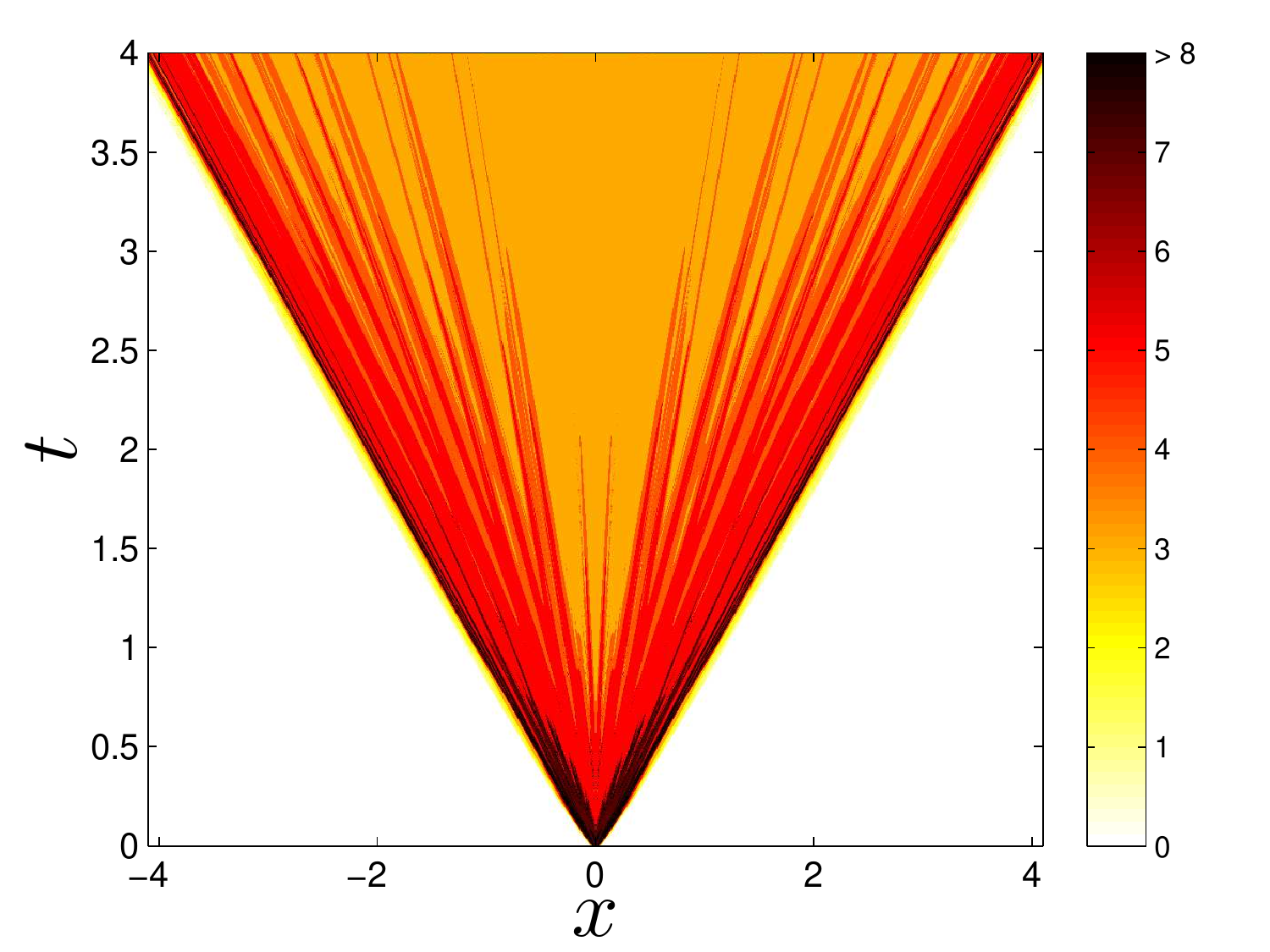}}\\
 \vspace{0.5in}
 \subfigure[A histogram of number of iterations needed to solve each
  optimization problem.  About $0.36\%$ of the problems needed more  six
  iterations.] {\label{subfig:planeN15iter-hist}
  \includegraphics[width = .4\textwidth]
   {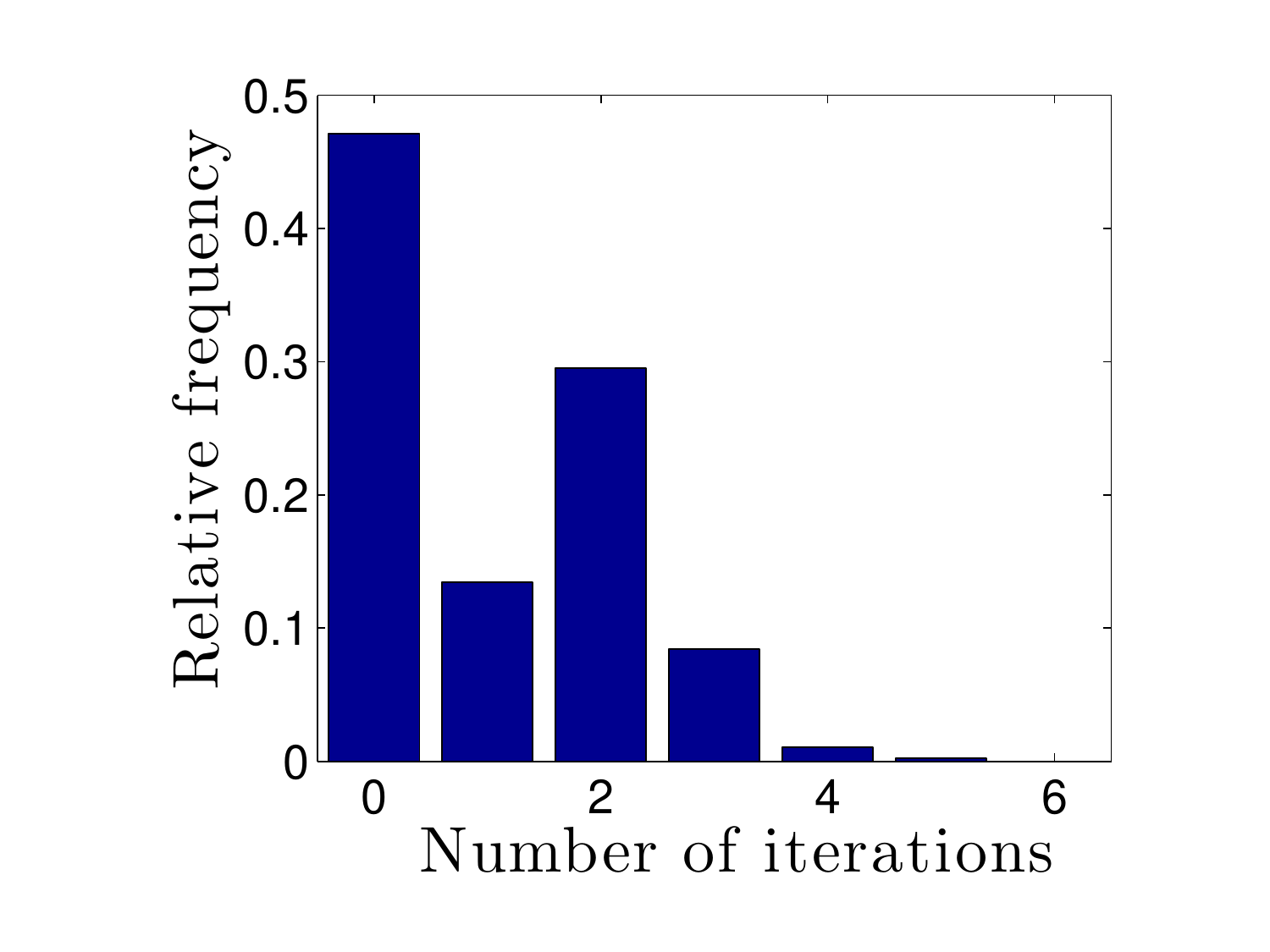}}
 \hspace{.2in}
 \subfigure[The locations of regularizations.]{\label{subfig:planeN15r}
  \includegraphics[width = .4\textwidth]
   {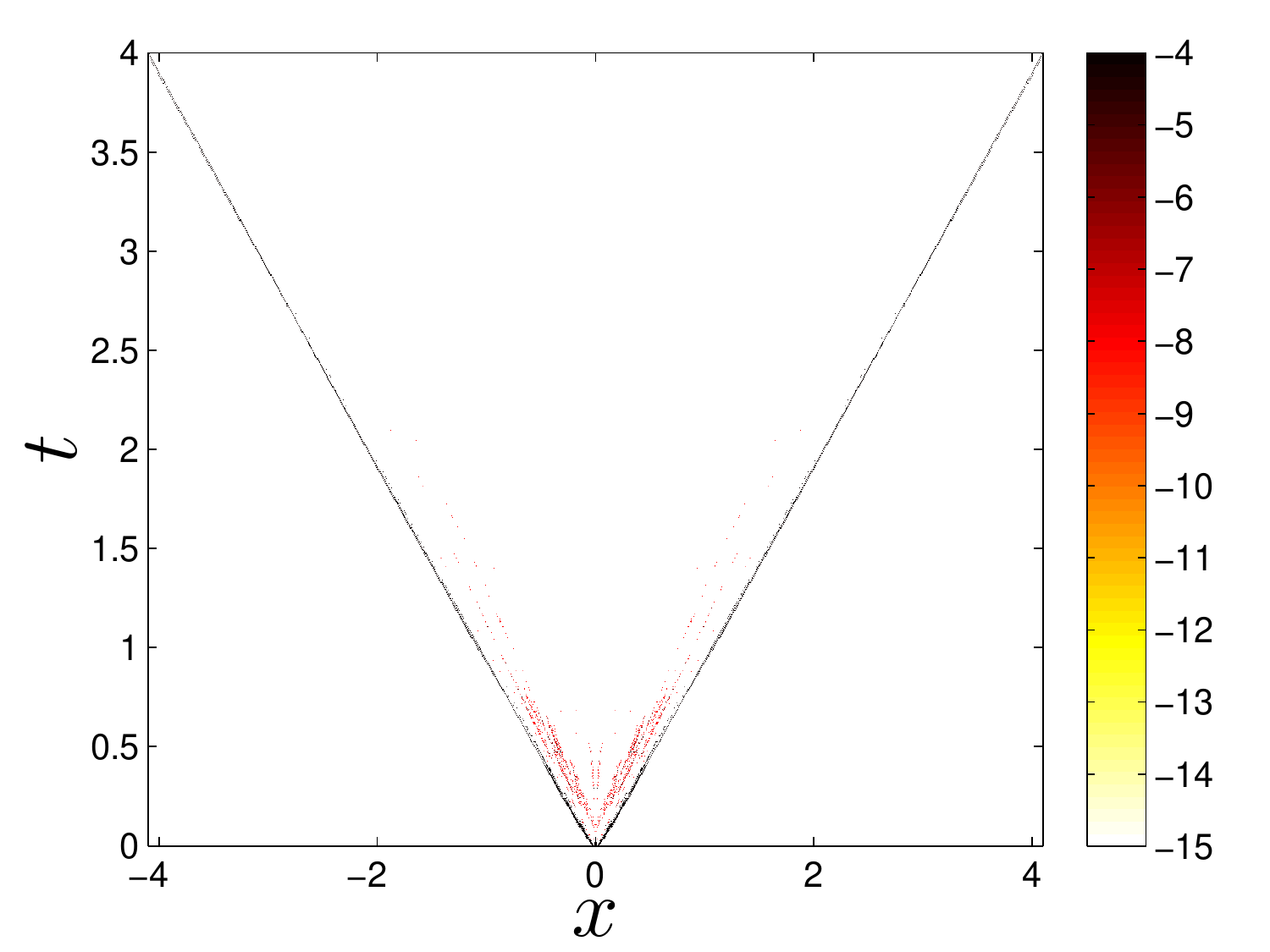}}
 \caption{Results for the $\M_{15}$ model of the plane-source problem 
  using the adaptive-basis method with
  $k_0 = 6$ and $N_x = 1000$ spatial cells.  For this
  simulation, excluding trivial cases such as cells with isotropic
  distributions, the optimization problem is solved about $1.1 \times 10^6$
  times.}
 \label{fig:planeN15}
\end{figure}

\begin{figure}
 \centering
 \includegraphics[width = 0.6\textwidth]{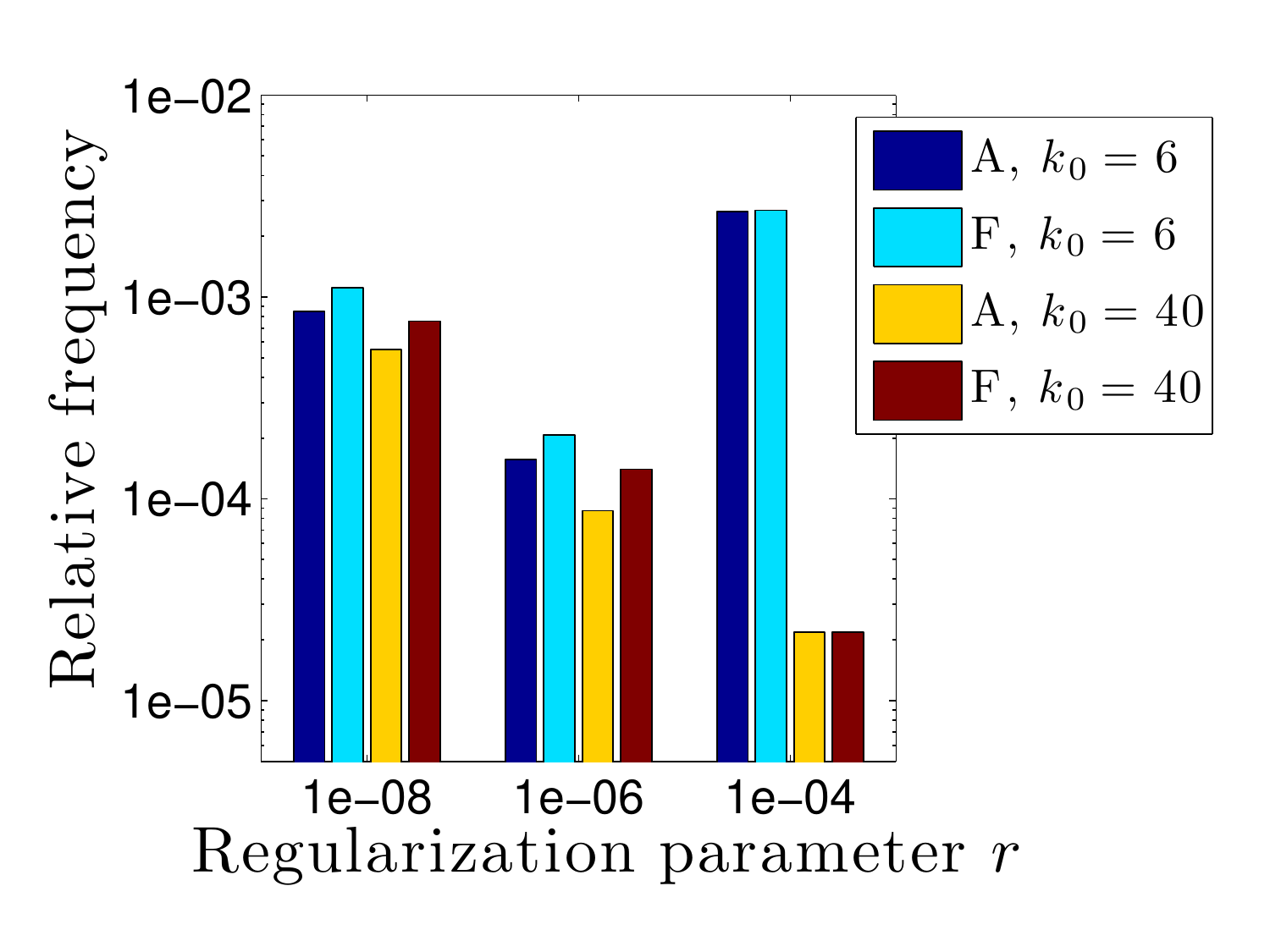}
 \caption{Regularization for the plane source problem.}
 \label{fig:planeN15r-hist}
\end{figure}

\begin{figure}
 \centering
 \subfigure[$k_0 = 40$ vs. $k_0 = 6$ with the adaptive basis.]
  {\label{subfig:planeN15u-diff-c-40-vs-6}
  \includegraphics[width = .4\textwidth]
   {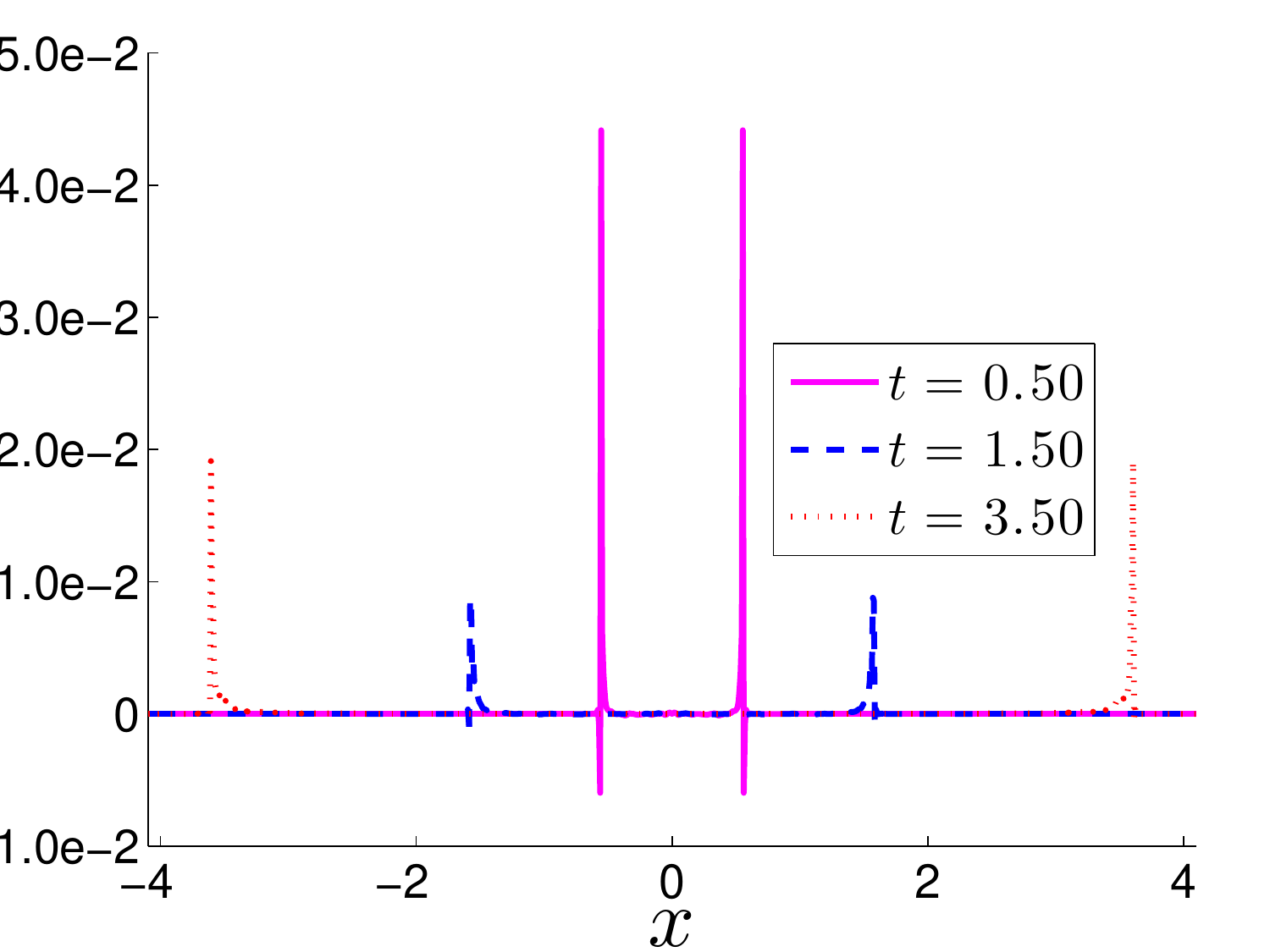}}
 \hspace{.2in}
 \subfigure[Adaptive vs. fixed basis with $k_0 = 40$.  ]
  {\label{subfig:planeN15u-diff-40-c-vs-f}
  \includegraphics[width = .4\textwidth]
   {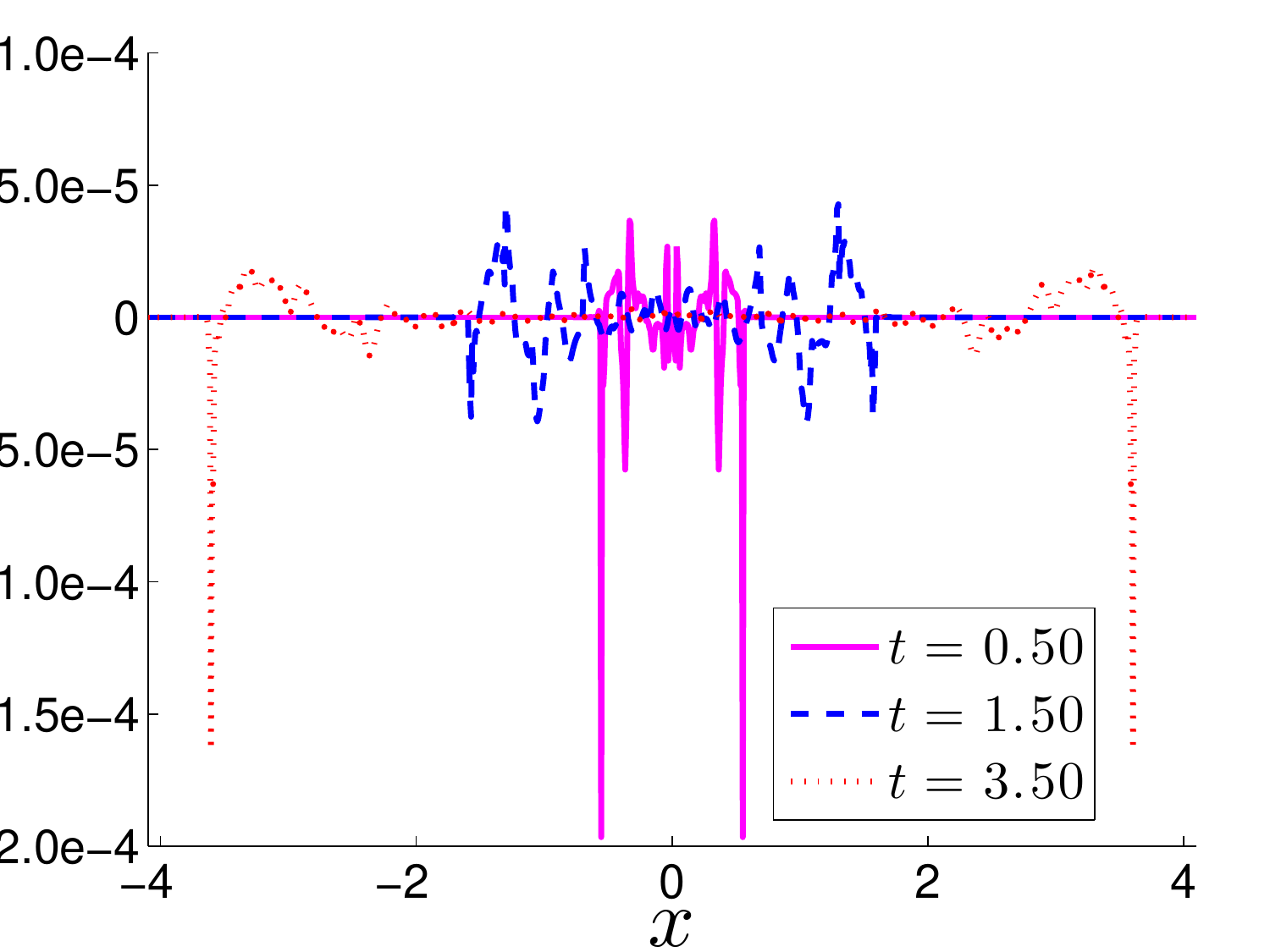}} \\
 \subfigure[$k_0 = 40$ vs. $k_0 = 6$ with the fixed basis.]
  {\label{subfig:planeN15u-diff-f-40-vs-6}
  \includegraphics[width = .4\textwidth]
   {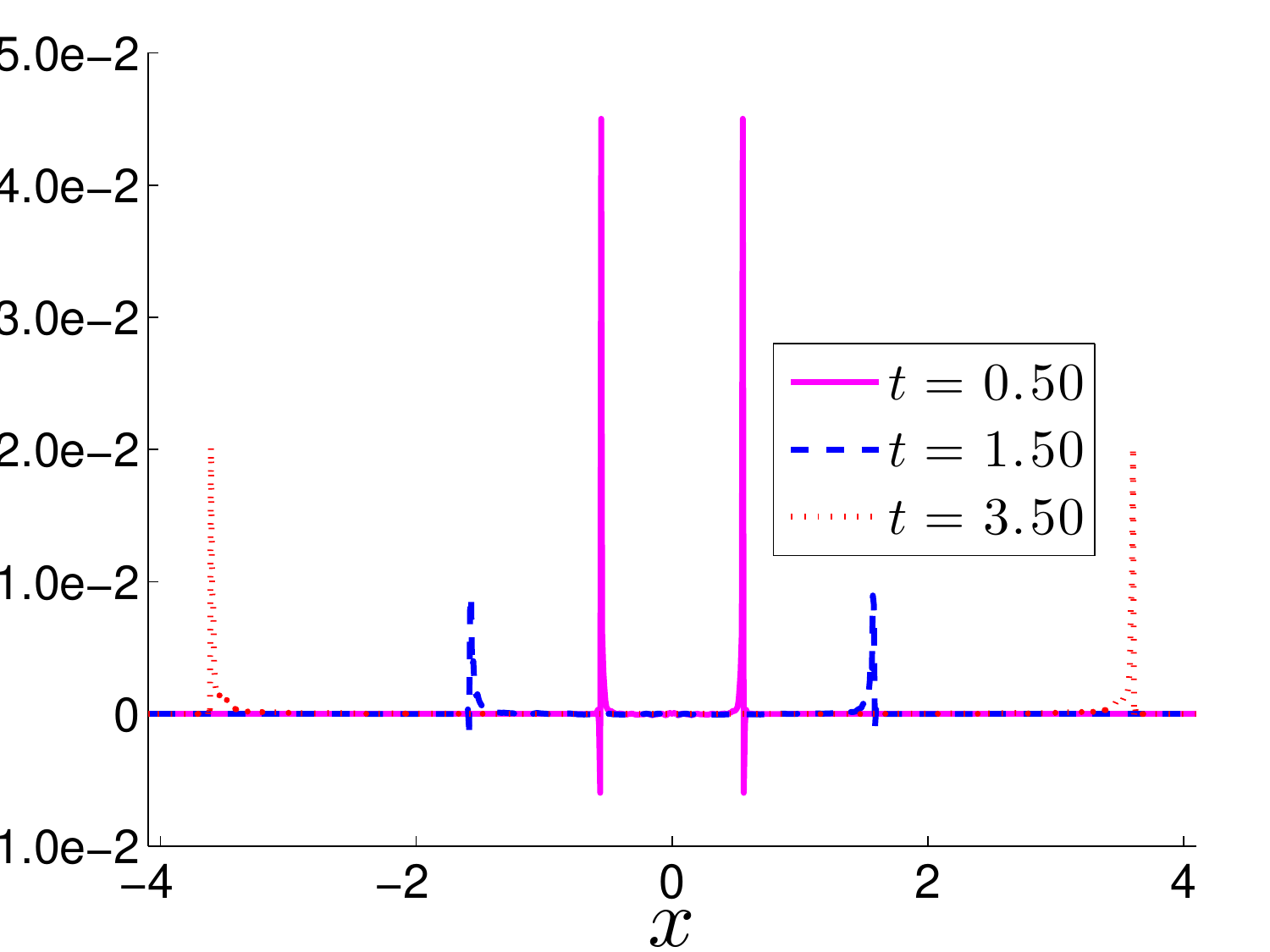}}
 \hspace{.2in}
 \subfigure[Adaptive vs. fixed basis with $k_0 = 6$.  ]
  {\label{subfig:planeN15u-diff-6-c-vs-f}
  \includegraphics[width = .4\textwidth]
   {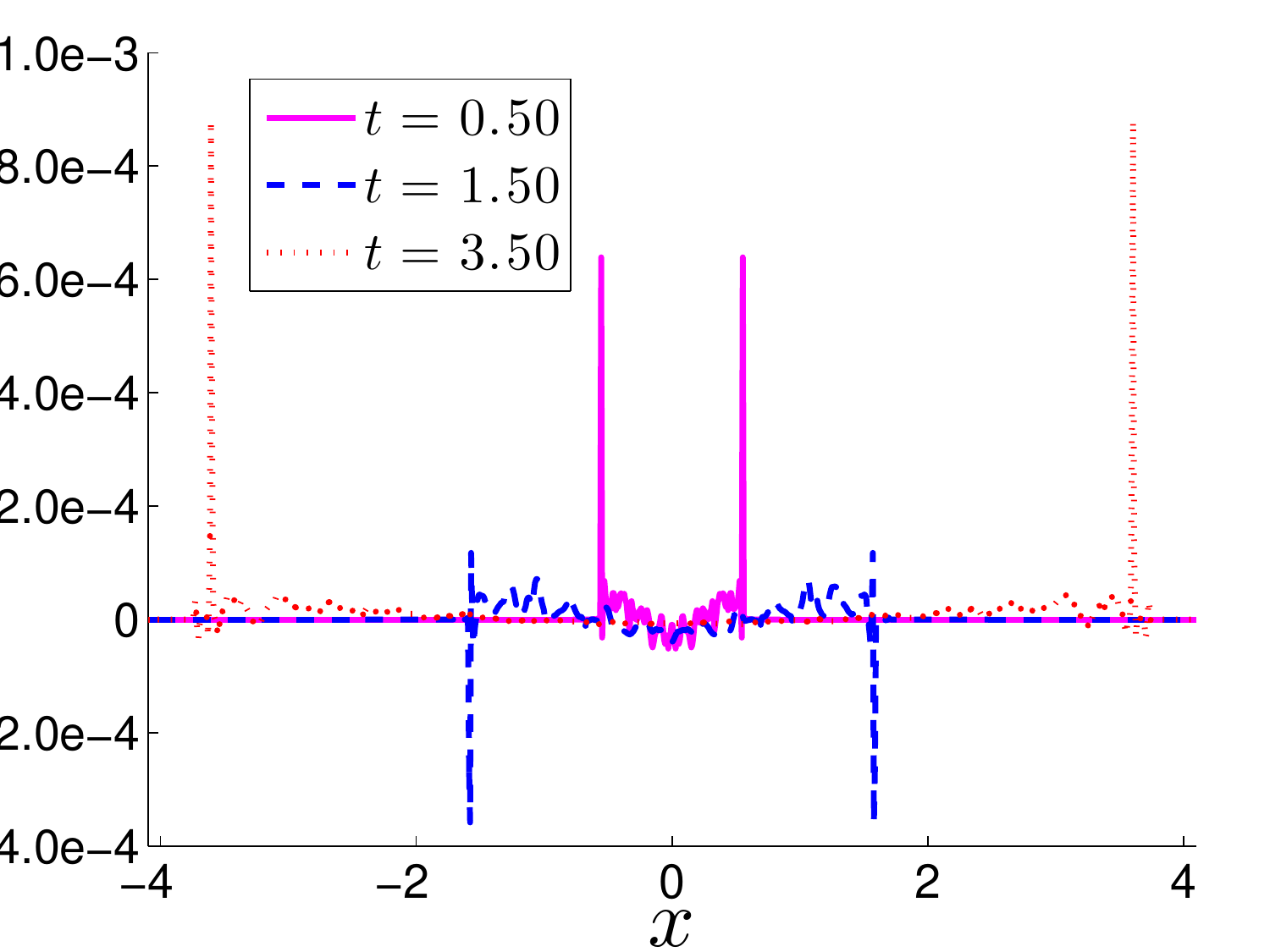}}
 \caption{Relative differences between plane-source solutions. To compare
  $u_0^{(1)}$ vs. $u_0^{(2)}$ we plot $(u_0^{(1)} - u_0^{(2)}) 
  / (0.5(u_0^{(1)} + u_0^{(2)}))$.}
 \label{fig:plane-diff}
\end{figure}

\figref{fig:plane-diff} shows the relative differences between the solutions
using the two different regularization-scheme parameters, $k_0 = 6$ and $k_0 =
40$, with both adaptive- and fixed-basis methods. The figures show that the
biggest differences appear just as the particles enter the surrounding vacuum. 
In each case, the signs of the errors indicate that the solution computed using
less regularization is larger at these points.  This indicates that the solution
with this scheme is advancing slightly faster, though we note that this
difference decreases with time when comparing regularization parameters. When
comparing basis methods, the relative errors are a few orders of magnitude
smaller.

\begin{figure}
 \subfigure[$k_0 = 40$]
  {\label{subfig:planeN15iter-hist-log-k0-40}
  \includegraphics[width = .4\textwidth]
   {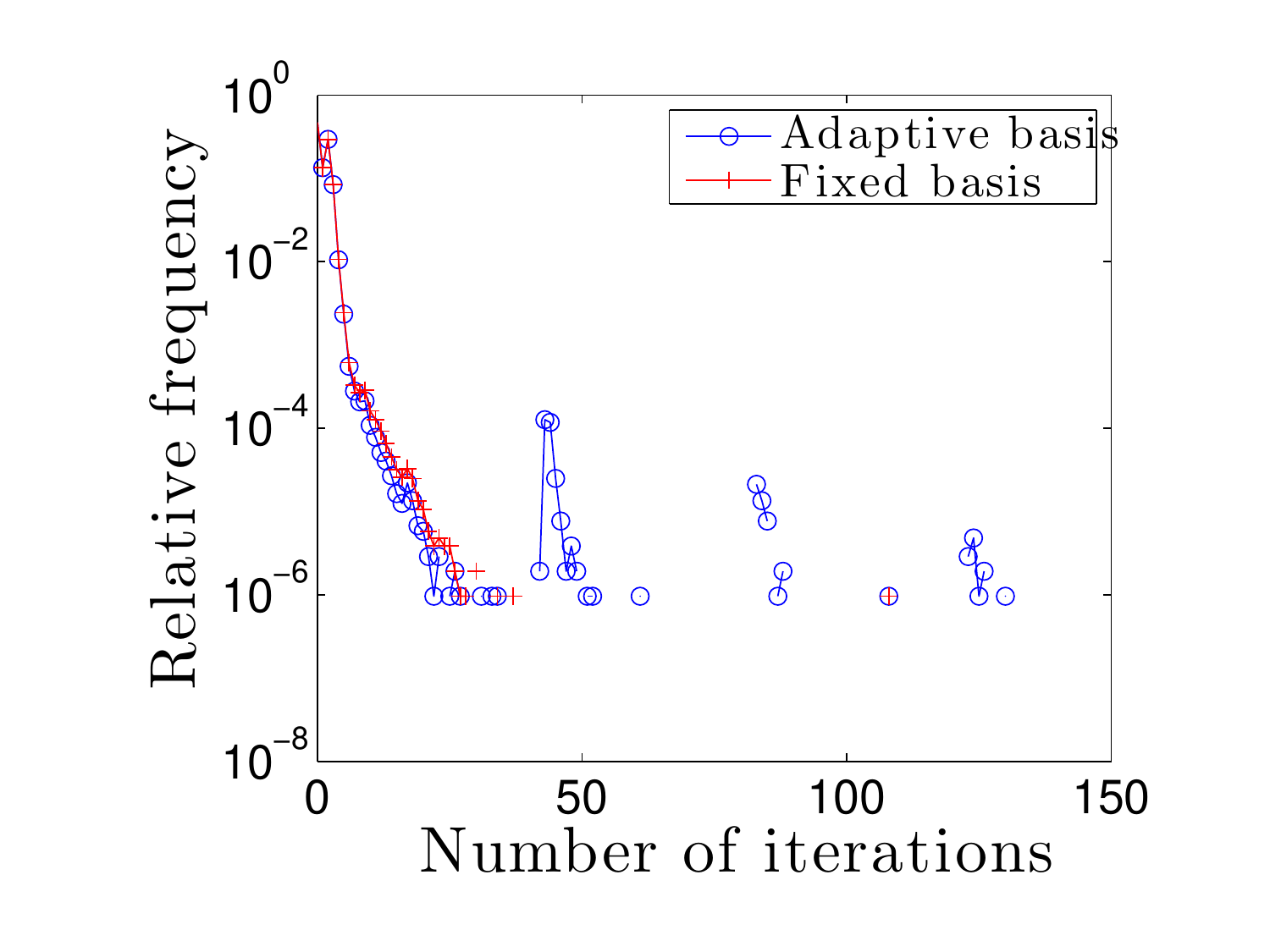}}
 \hspace{.2in}
 \subfigure[$k_0 = 6$]
  {\label{subfig:planeN15iter-hist-log-k0-6}
  \includegraphics[width = .4\textwidth]
   {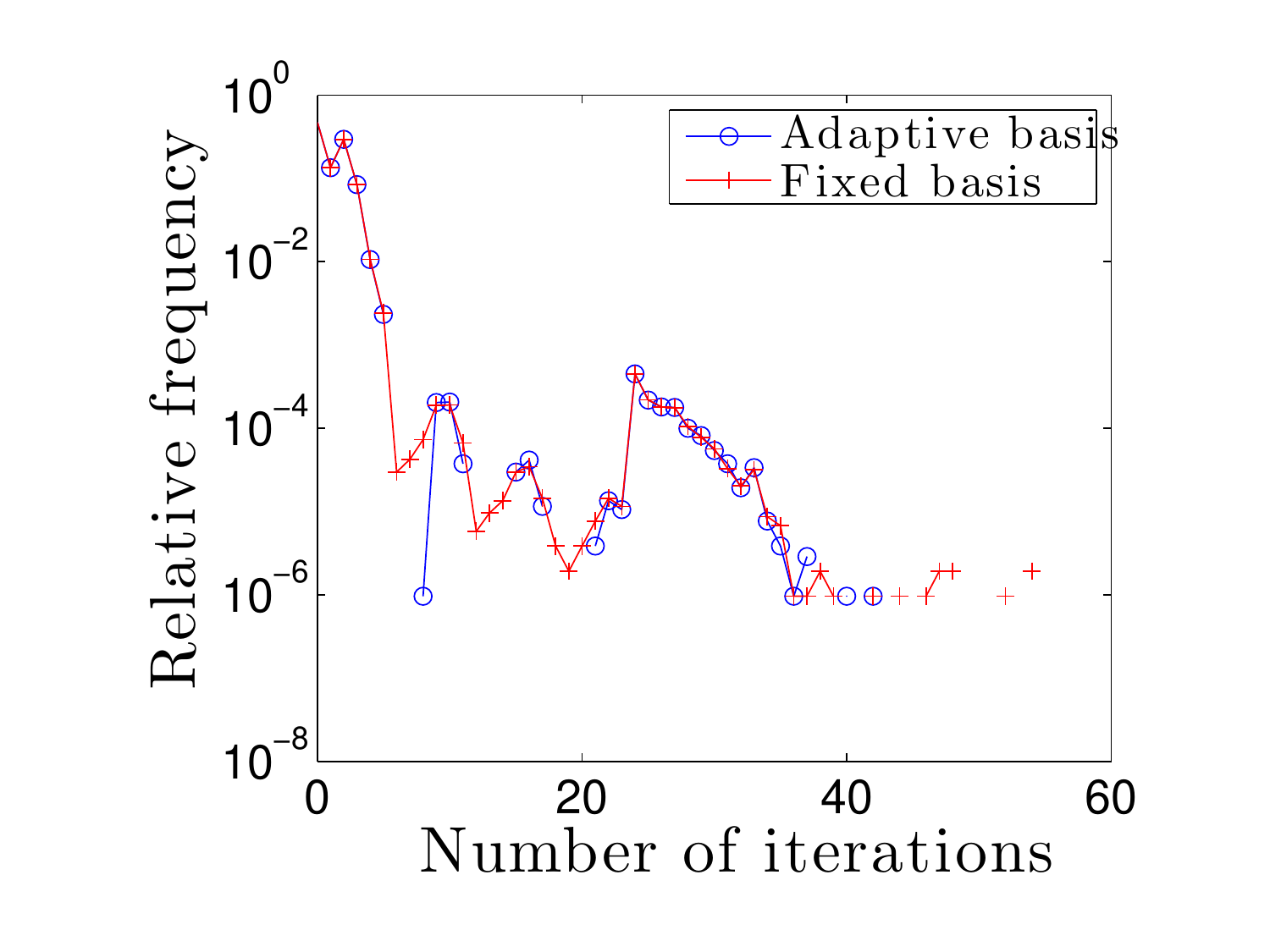}}
 \caption{Comparing iteration histograms between adaptive- and fixed-basis
  methods on the plane-source problem.}
 \label{fig:plane-log-iter-hist}
\end{figure}

In \figref{fig:plane-log-iter-hist} we compare more closely the iteration
histograms of adaptive- and fixed-basis methods. In a parallel implementation,
optimization problems requiring many iterations become a bottleneck, so an
optimizer that needs fewer iterations has a significant advantage.  However,
the results here show that the iteration histograms are nearly identical.  This
is consistent with our numerical experience that when both methods can solve a
problem, they typically take the same number of iterations.  (The figure does
not include a few outlying problems from the adaptive basis: with $k_0 = 40$,
one problem took $529$ iterations and three took $530$ iterations; with $k_0 =
6$, one problem took $446$ iterations and one took $448$ iterations.)

\subsubsection{Two-beam instability}
\label{subsubsec:beams}

In this problem, particles constantly stream into the domain from the left at
$\xL=-0.5$ and the right at $\xR=0.5$ into the initially (almost) vacuous
interior. There is no scattering: $\sig{s}=0$, while $\sig{t}=2$. We use
`forward-peaked' boundary conditions,
\begin{equation}
 \bu(\xL,t) = \Vint{\bm \exp(-10(\mu-1)^2)},
 \quad
 \bu(\xR,t) = \Vint{\bm \exp(-10(\mu+1)^2)} \,.
\end{equation}
On the interior, the initial condition is isotropic with $u_0(x,0) \equiv F_{\rm
floor}\Vint{1}$.

\figref{fig:beamsN15} presents results from a $N_x = 1000$ cell simulation of
the $\M_{15}$ system.  For the this figure we have again used the adaptive-basis
method and $k_0 = 6$, but the corresponding figures using $k_0 = 40$ or the
fixed-basis method are qualitatively similar. The difference between the final
solutions at steady-state using these two different pairs of parameters values
is $O(10^{-6})$ in the $L^\infty$--norm.  The transient profile of the particle
density $u_0(x,t)$ is shown in Figure \ref{subfig:beamsN15u}, where we can also
see that the steady-state is qualitatively indistinguishable from the
steady-state particle density of the kinetic system. These results again agree
qualitatively with what was presented in \cite{AHT10, Hauck-2011}.
\figref{subfig:beamsN15iter} shows the iteration profile.  The mean number of
iterations (excluding trivially solvable isotropic problems and cells which had
already converged) was about $1.37$ for $k_0 = 6$, while for $k_0 = 40$ that
mean was $1.43$. The iteration histogram in Figure
\ref{subfig:beamsN15iter-hist} shows that indeed about $99\%$ of the
optimization problems are solved in three iterations or fewer with $k_0 = 6$.
(With $k_0 = 40$, about $98\%$ are solved in three iterations or fewer.)
Finally, \figref{subfig:beamsN15r} shows that regularization occurred mostly
where particles from the boundary push into the interior vacuum along the front
$x = \pm 0.5 \mp t$ for $t \in [0, 0.5]$.

The histogram in Figure \ref{fig:beamsN15r-hist} shows that the values of the
regularization parameter $r$ used for all nontrivial optimization problems are
similar to those for the plane-source simulations.  For the adaptive basis,
about $0.12\%$ of the problems were regularized with $k_0 = 6$ and about
$0.087\%$ of the problems were regularized with $k_0 = 40$, with smaller values
of $r$. The fixed-basis method, on the other hand, used significantly more
regularization for both values of $k_0$. Again, the amount of regularization in
all cases is significantly less than in \cite{AHT10}, where roughly 1.3\% of the
problems were regularized.

\begin{figure}
 \centering
 \subfigure[Snapshots of the solution, $u_0(x,t)$ at $t=0.2$, $0.4$, $0.8$, and
  $4$.  A green curve shows the true steady-state solution, which is
  qualitatively indistinguishable from the numerical moment solution at $t =
  4$.]
  {\label{subfig:beamsN15u}
  \includegraphics[width = .4\textwidth]
  {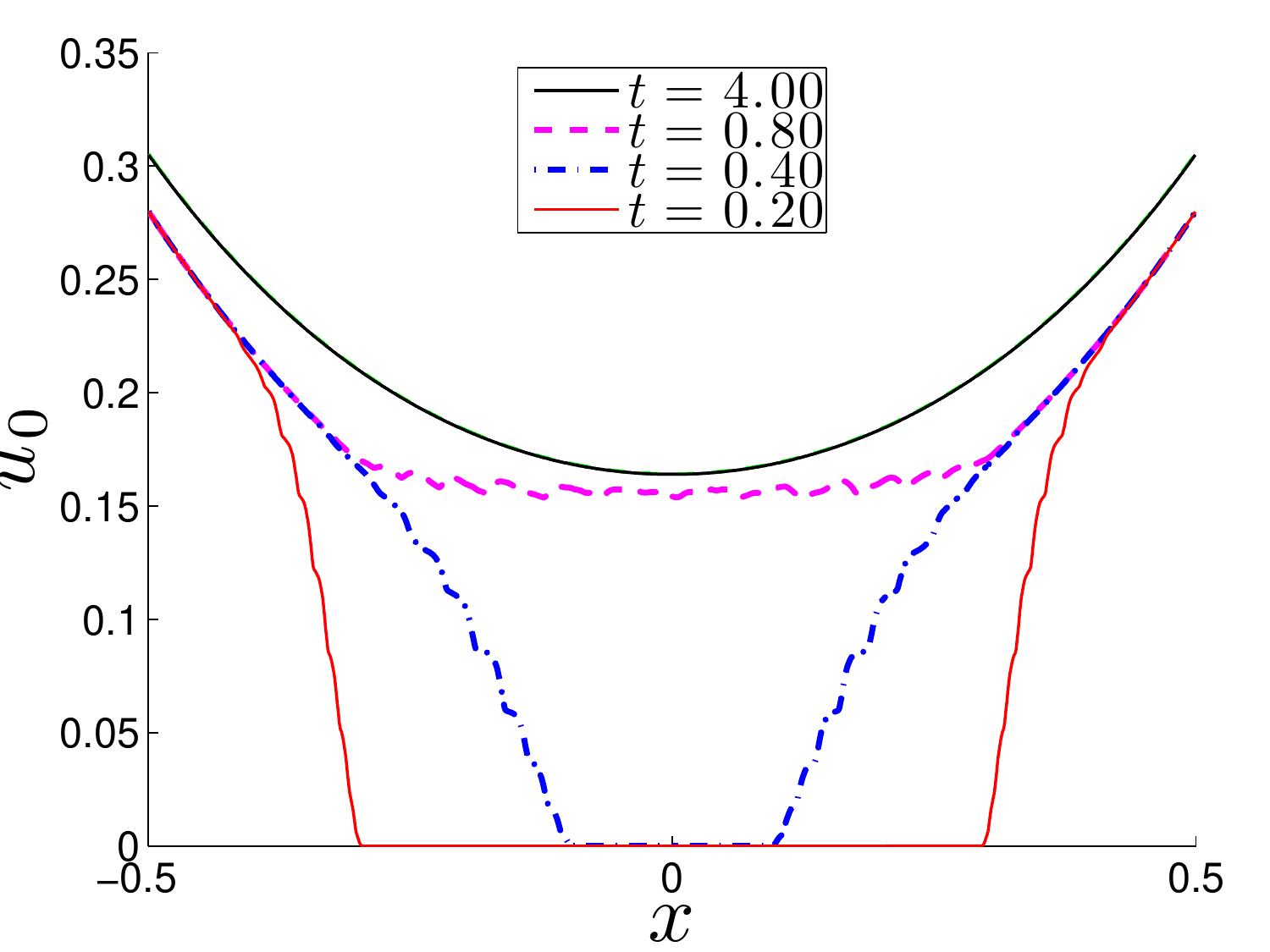}}
 \hspace{.2in}
 \subfigure[The total number of iterations (over the two Runge-Kutta stages) 
  needed to solve the optimization problem at each point in space and time. 
  The maximum number of iterations needed for one time step was $70$ (off
  scale).]{\label{subfig:beamsN15iter}
  \includegraphics[width = .4\textwidth]
   {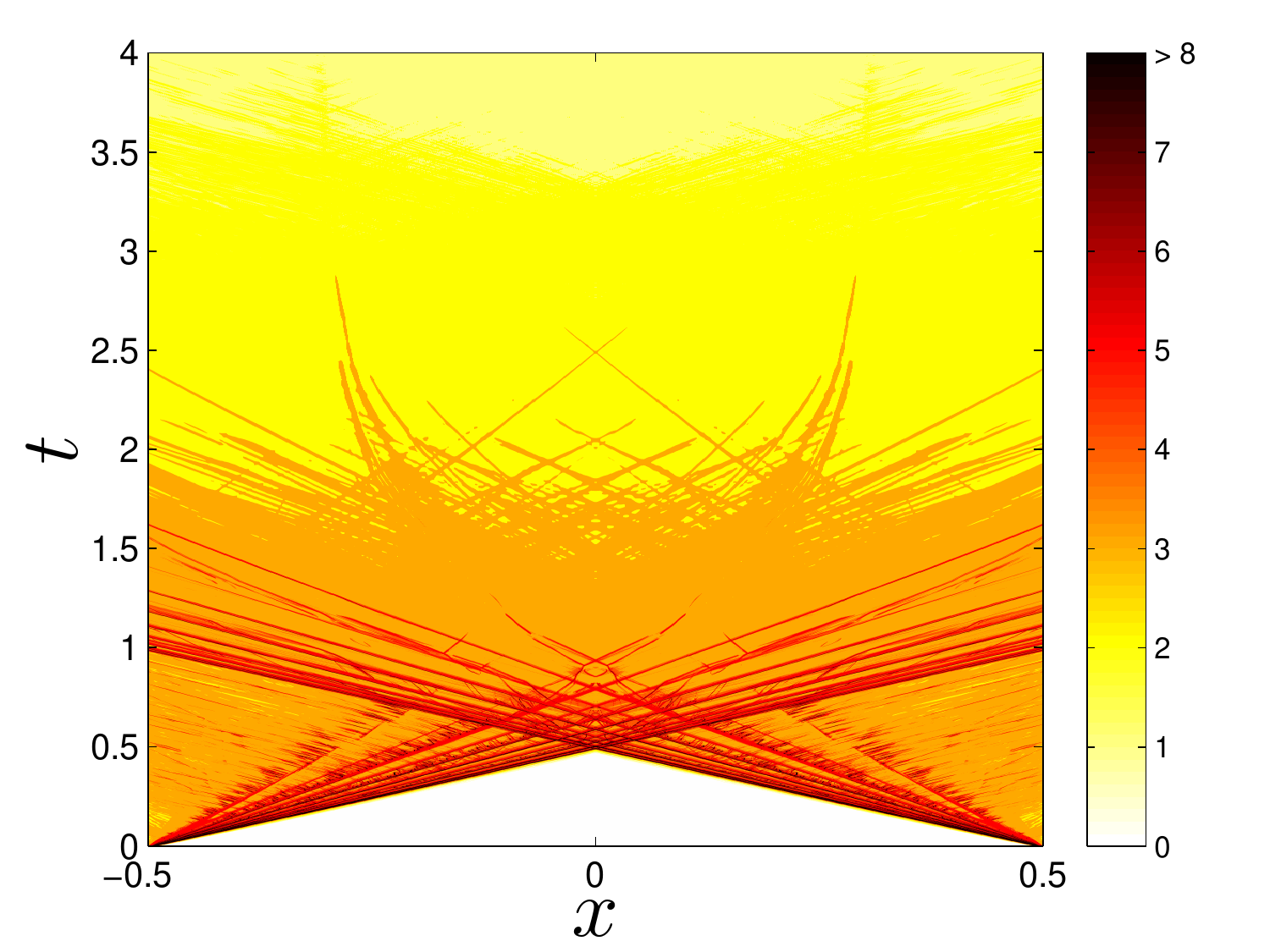}} \\
 \vspace{0.5in}
 \subfigure[A histogram of number of iterations needed to solve each
  optimization problem.  About $0.13\%$ of the nontrivial problems
  needed more than six iterations.] {\label{subfig:beamsN15iter-hist}
  \includegraphics[width = .4\textwidth]
   {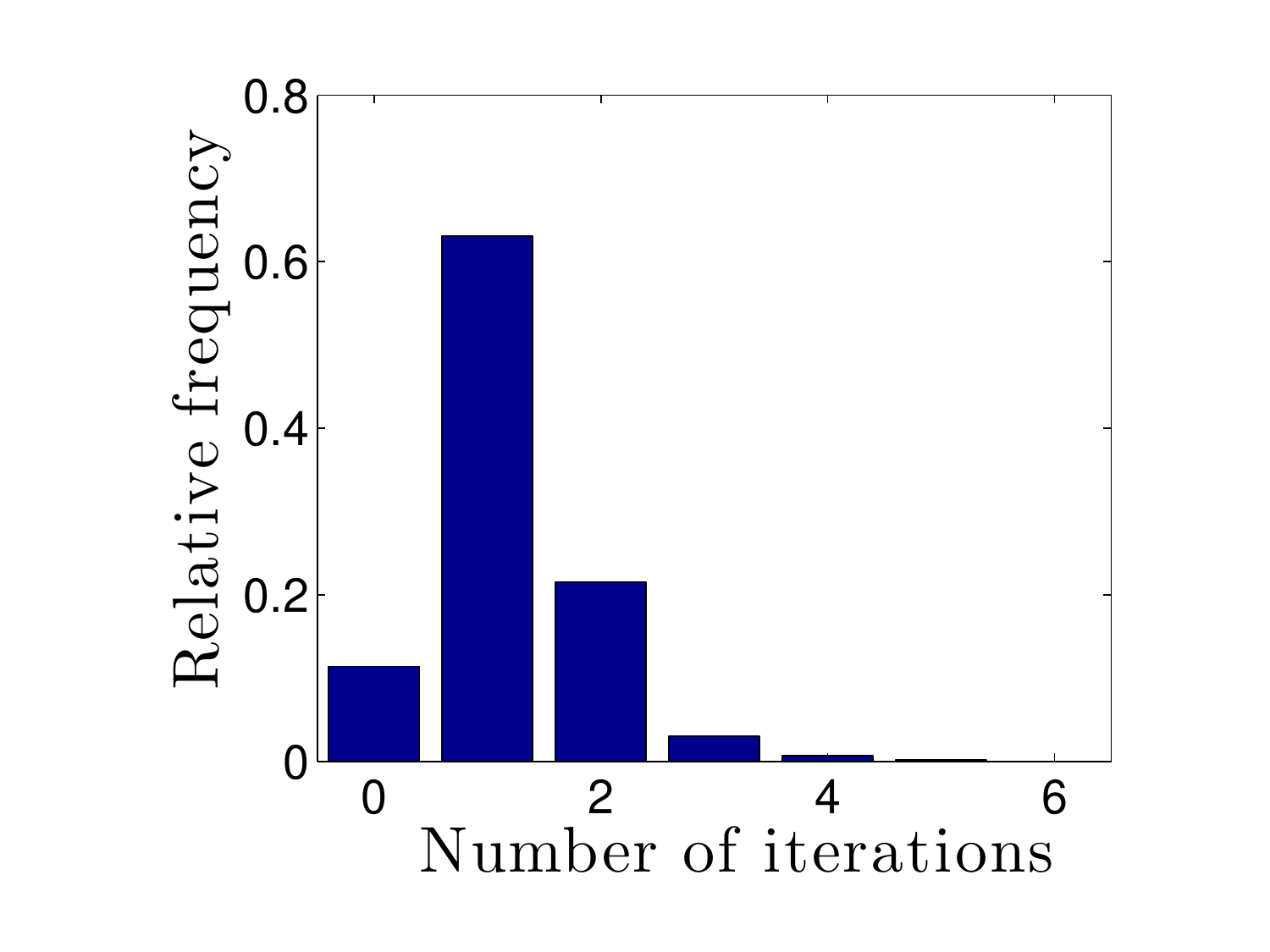}}
 \hspace{.2in}
 \subfigure[The locations of regularizations.]{\label{subfig:beamsN15r}
  \includegraphics[width = .4\textwidth]
   {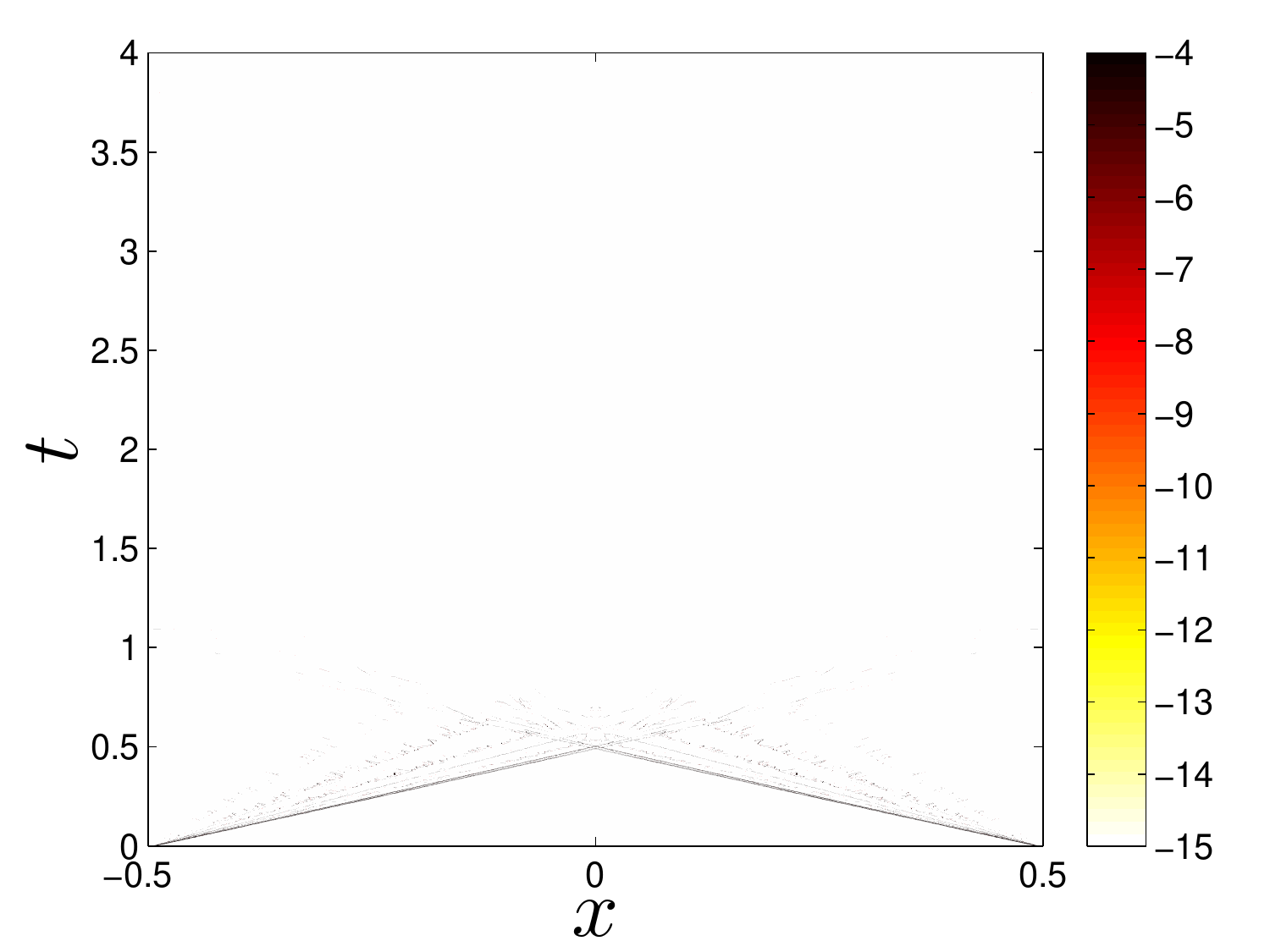}}
 \caption{Results for the $\M_{15}$ model of the two-beam instability with
  $N_x = 1000$ cells for $k_0 = 6$.  For this simulation,
  excluding trivial cells, the optimization problem is solved about $15
  \times 10^6$ times.}
 \label{fig:beamsN15}
\end{figure}

\begin{figure}
 \centering
 \includegraphics[width = 0.6\textwidth]{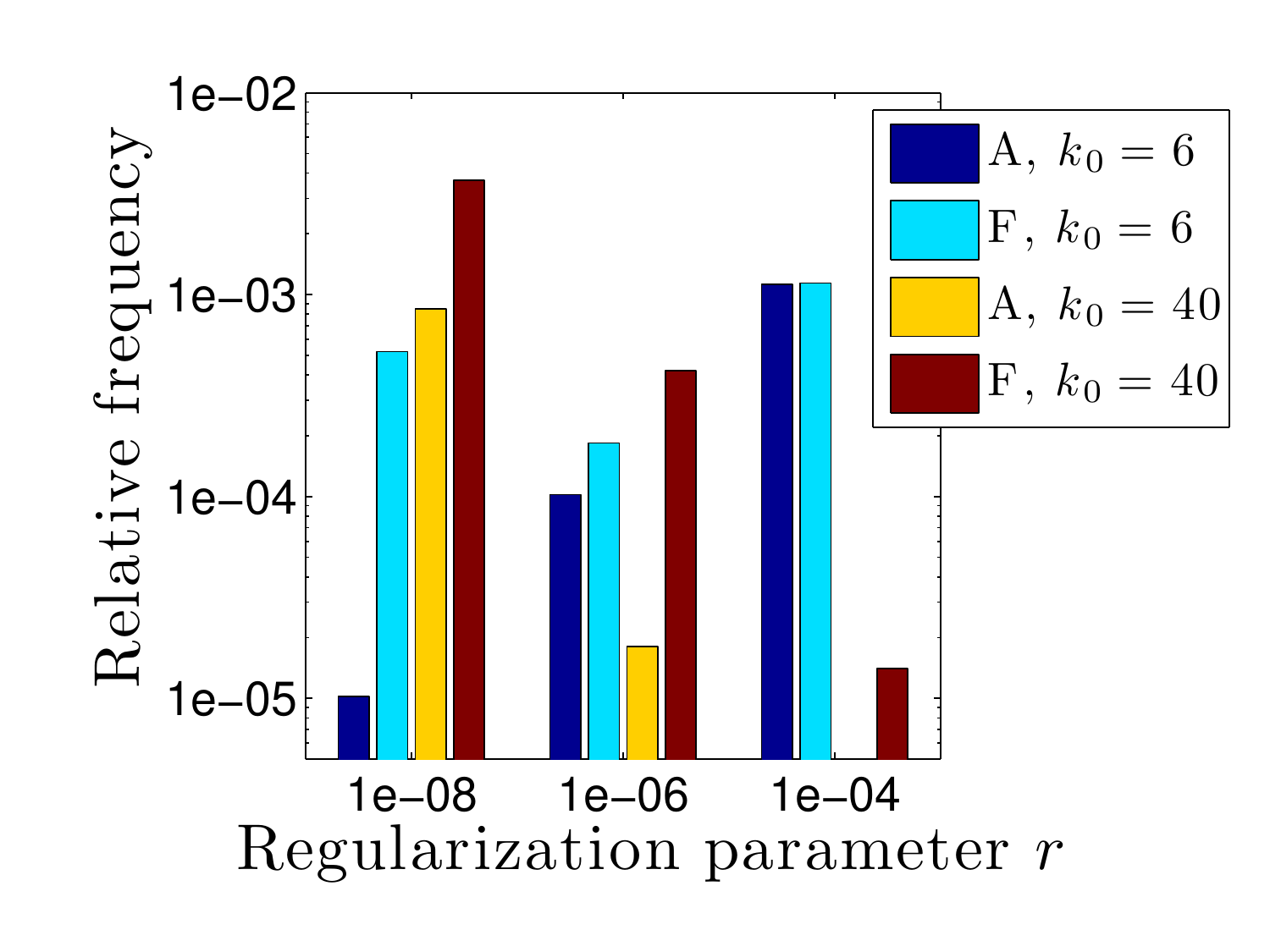}
 \caption{Regularization for the two-beam problem.}
 \label{fig:beamsN15r-hist}
\end{figure}

\begin{figure}
 \centering
 \subfigure[$k_0 = 40$ vs. $k_0 = 6$ with the adaptive basis.  Off-scale peaks
  at $t = \pm 0.2$ are $O(1)$.]
  {\label{subfig:beamsN15u-diff-c-40-vs-6}
  \includegraphics[width = .4\textwidth]
   {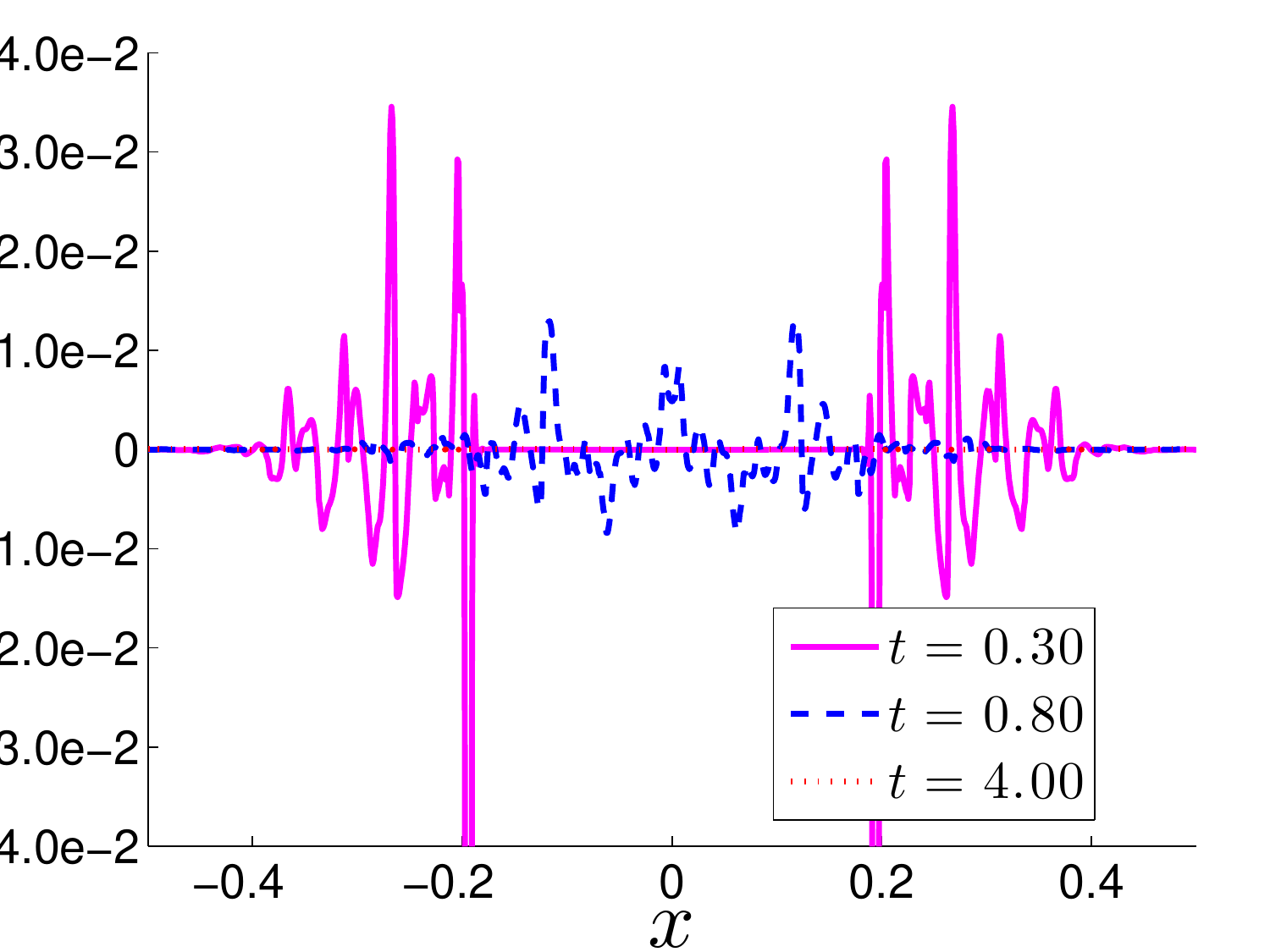}}
 \hspace{.2in}
 \subfigure[Adaptive vs. fixed basis with $k_0 = 40$.  Off-scale peaks
  at $t = \pm 0.2$ are $O(100)$.]
  {\label{subfig:beamsN15u-diff-40-c-vs-f}
  \includegraphics[width = .4\textwidth]
   {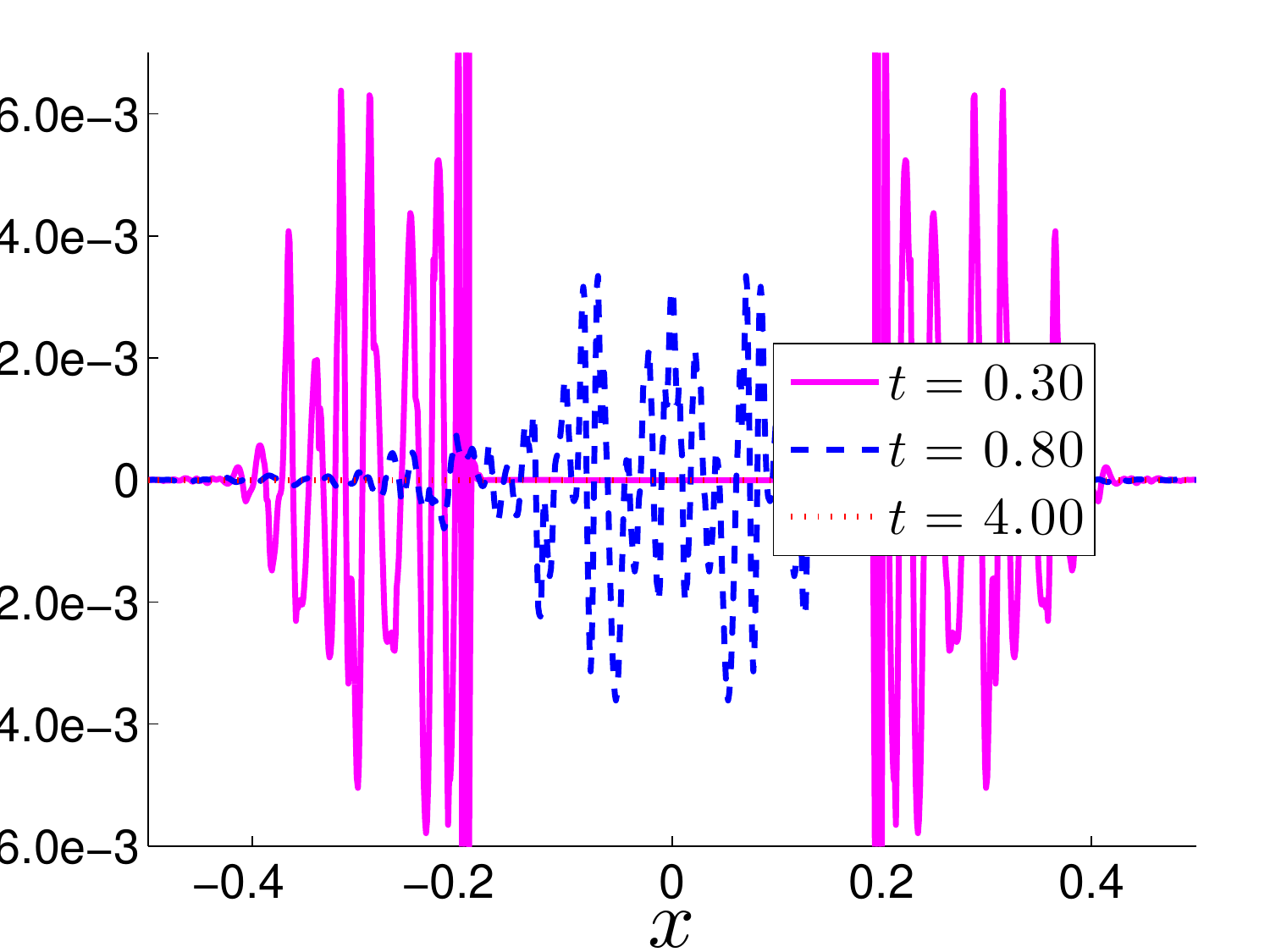}} \\
 \subfigure[$k_0 = 40$ vs. $k_0 = 6$ with the fixed basis .  Off-scale peaks
  at $t = \pm 0.2$ are $O(1)$.]
  {\label{subfig:beamsN15u-diff-f-40-vs-6}
  \includegraphics[width = .4\textwidth]
   {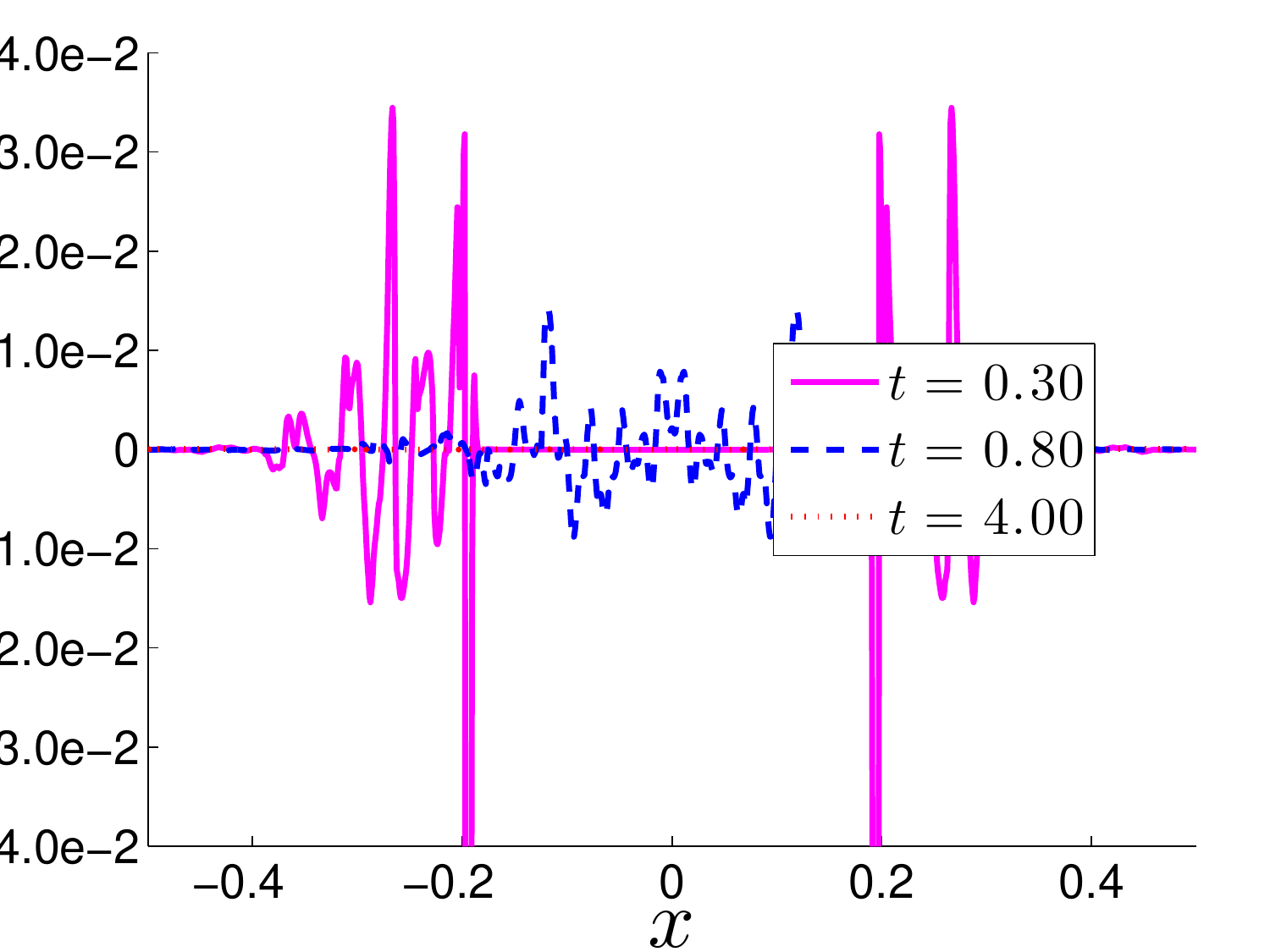}}
 \hspace{.2in}
 \subfigure[Adaptive vs. fixed basis with $k_0 = 6$.  Off-scale peaks
  at $t = \pm 0.2$ are $O(0.1)$.]
  {\label{subfig:beamsN15u-diff-6-c-vs-f}
  \includegraphics[width = .4\textwidth]
   {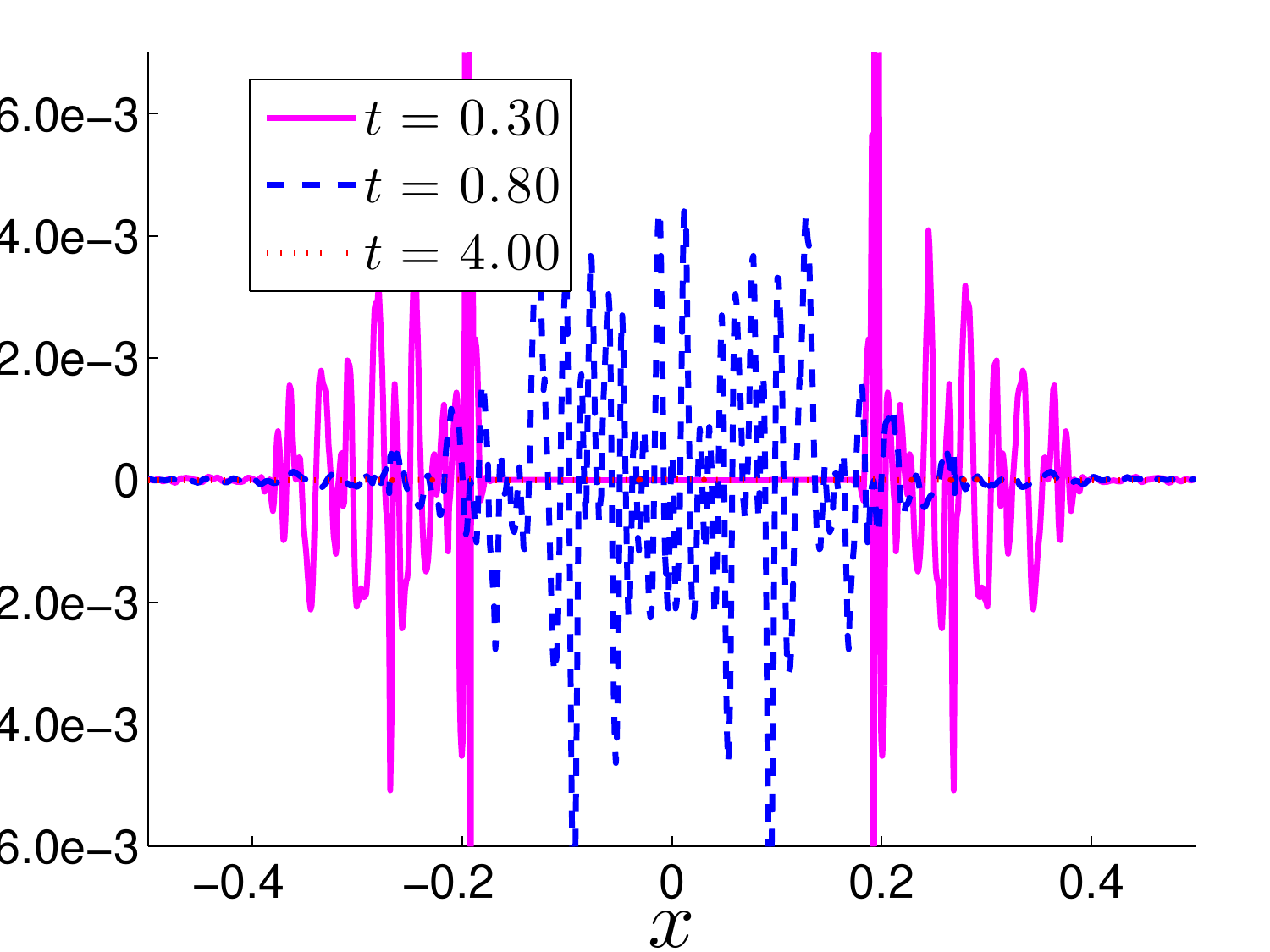}}
 \caption{Relative differences between two-beam solutions: let $u_0^{(1)}$
  and $u_0^{(2)}$ represent the particle densities for the solutions from two
  different methods. Above we plot $(u_0^{(1)} -
  u_0^{(2)})/(0.5(u_0^{(1)} + u_0^{(2)}))$.}
 \label{fig:beams-diff}
\end{figure}

\figref{fig:beams-diff} shows the relative difference in the solutions from the
two different regularization schemes with both basis methods.  The results here
are not as easily interpreted as in the plane source problem.  For $t = 0.3$,
the sign of the largest two peaks indicate that the solution with more
aggressive regularization has advanced more quickly, but again for later time,
these differences decrease.

\begin{figure}
 \subfigure[$k_0 = 40$]
  {\label{subfig:beamsN15iter-hist-log-k0-40}
  \includegraphics[width = .4\textwidth]
   {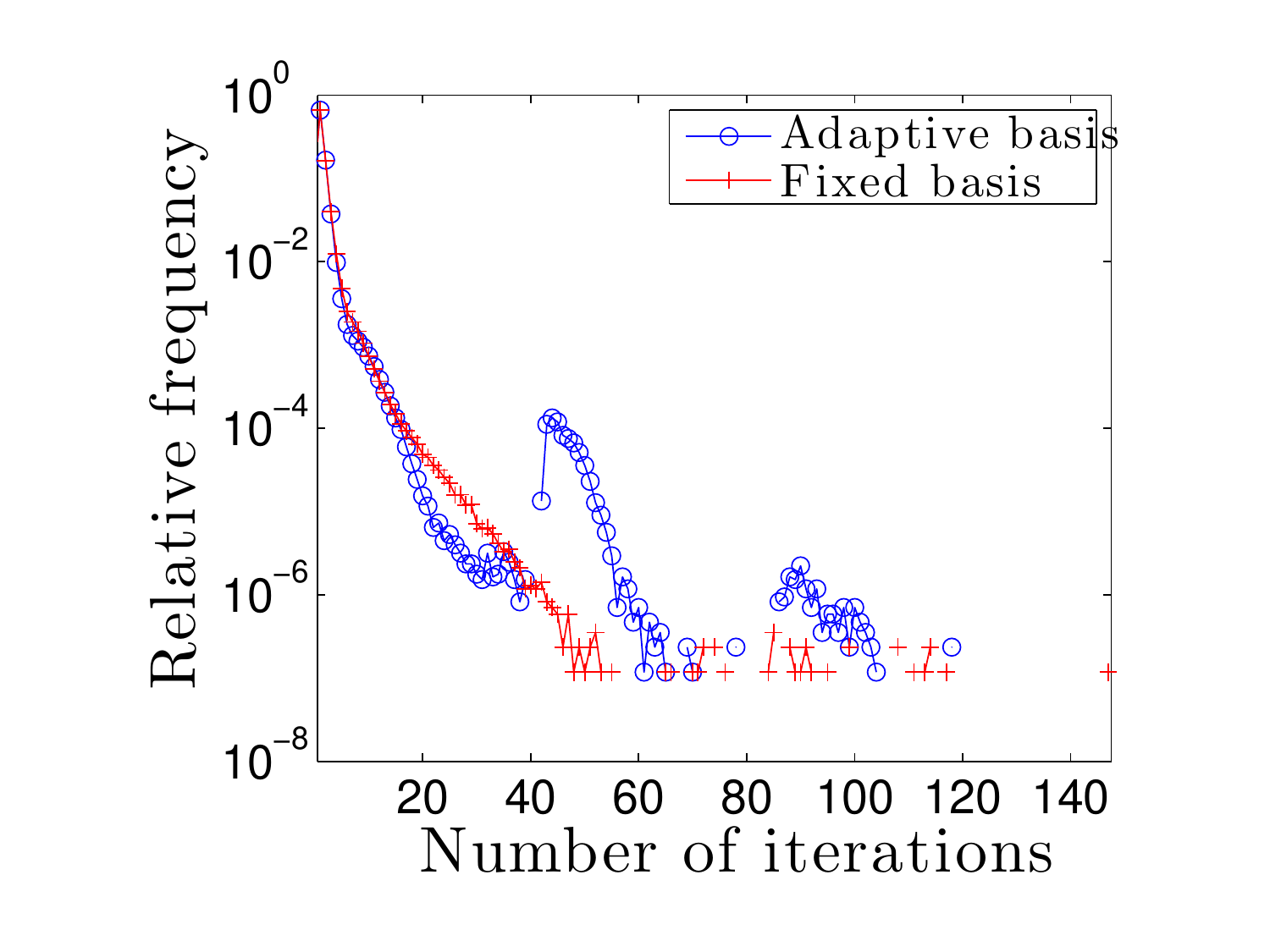}}
 \hspace{.2in}
 \subfigure[$k_0 = 6$]
  {\label{subfig:beamsN15iter-hist-log-k0-6}
  \includegraphics[width = .4\textwidth]
   {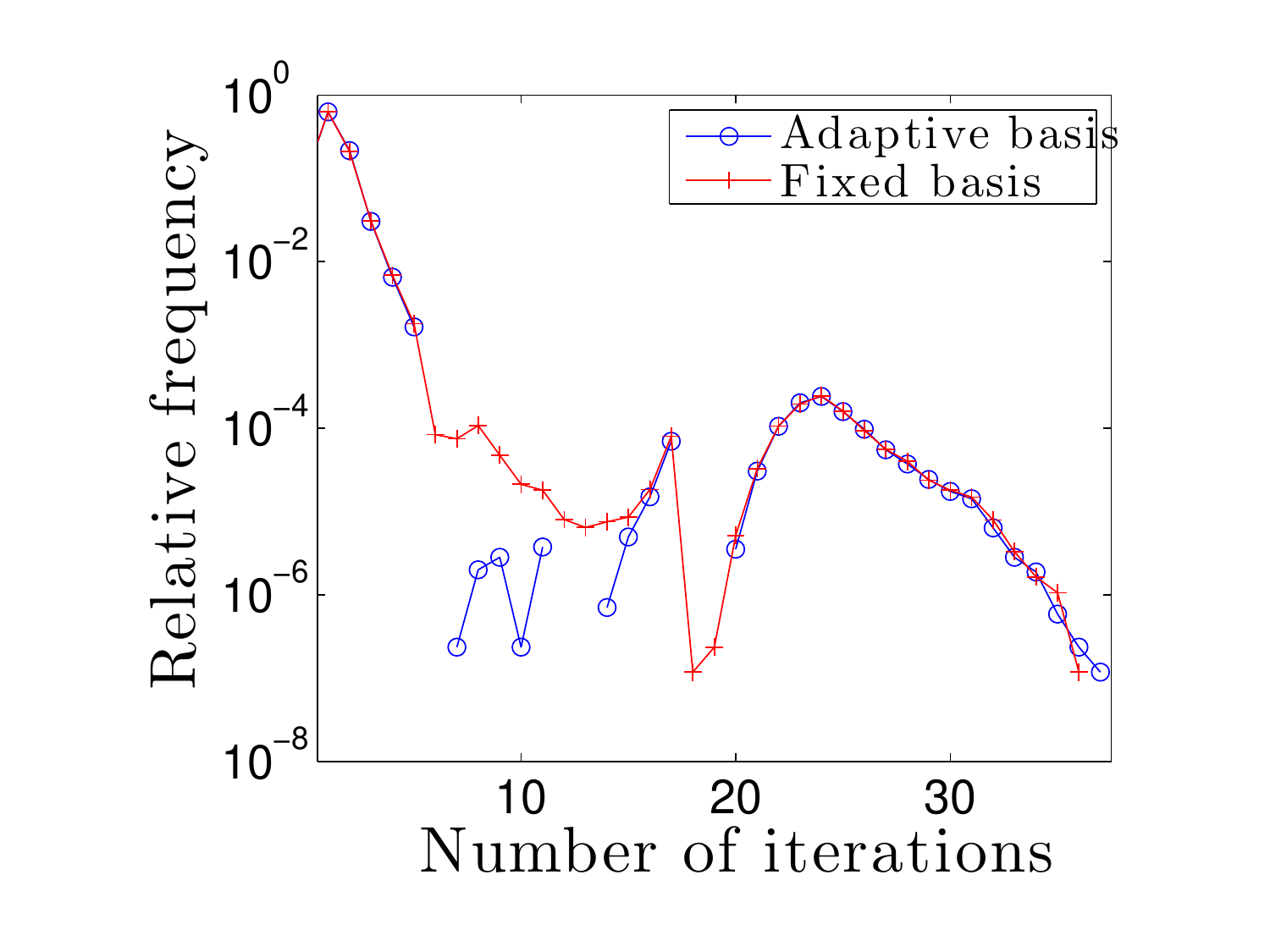}}
 \caption{Comparing iteration histograms between adaptive- and fixed-basis
  methods on the two-beam problem.}
 \label{fig:beams-log-iter-hist}
\end{figure}

In \figref{fig:beams-log-iter-hist} we again compare more closely the iteration
histograms of adaptive- and fixed-basis methods.  Unlike in the plane-source
problem, here with $k_0 = 40$ the adaptive-basis optimizer uses a high number of
iterations on significantly more problems.  This is clearly a consequence of our
fairly rudimentary regularization scheme: indeed, notice the increases in
histogram near $40$ and $80$ iterations for the adaptive-basis method in
\figref{subfig:beamsN15iter-hist-log-k0-40}.  They exactly lie where the
regularization scheme increases $r$.  On the other hand, with the fixed-basis
method, $r$ is often increased before $k_0$ iterations have passed because the
condition number of the Hessian becomes unacceptably high.  With adaptive-basis,
the condition number of the Hessian stays close to one.

\section{Conclusions}
\label{sec:conclusions}

We have presented a complete and practical numerical algorithm for solving the
$\M_N$ entropy-based moment closure model in slab geometry. For the optimization
at each space-time grid point, our method uses a change of polynomial basis to
keep the Hessian matrix near the identity. This method is closely related to
that presented in \cite{Abramov-2009}, although we have used
the Cholesky factorization instead of the Gram-Schmidt method to define
the change of basis. This former is more efficient and, in a series of tests, has
been shown to perform comparably.
We tested our method on challenging test problems including
a new set of manufactured solutions as well as the standard plane source and
two-beam problems. Numerical results indicate that the new method has many
advantages over the use of a fixed basis such as Legendre polynomials. First,
the adaptive basis allows solution of optimization problems closer to the
boundary of realizability. This leads to a decreased use of regularization.  We
show, using the manufactured solution, that regularization introduces errors
which can significantly slow or even stall convergence.  Therefore, the
decreased use of regularization observed with the adaptive-basis method leads to
improved accuracy. Finally we show that the adaptive-basis method performs
better with low-resolution quadrature than a fixed-basis method with high
resolution quadrature.  Thus we can avoid the use of adaptive quadrature (an
alternative to improve conditioning of the Hessian), and by using a fixed
quadrature the numerical solution remains in a constant computational region of
realizability.

Of course, the one-dimensional kinetic model here is simply a testbed.  Real
problems of practical interest are in two and three dimensions in space
and velocity, so future work should test these methods there.  A parallel
implementation is also necessary to fully exploit the advantages of
entropy-based moment closures while minimizing the computational burden of the
numerical optimization.

\bibliographystyle{model1-num-names}
\bibliography{entropy_kinetic}

\begin{thebibliography}{34}
\expandafter\ifx\csname natexlab\endcsname\relax\def\natexlab#1{#1}\fi
\providecommand{\url}[1]{\texttt{#1}}
\providecommand{\href}[2]{#2}
\providecommand{\path}[1]{#1}
\providecommand{\DOIprefix}{doi:}
\providecommand{\ArXivprefix}{arXiv:}
\providecommand{\URLprefix}{URL: }
\providecommand{\Pubmedprefix}{pmid:}
\providecommand{\doi}[1]{\href{http://dx.doi.org/#1}{\path{#1}}}
\providecommand{\Pubmed}[1]{\href{pmid:#1}{\path{#1}}}
\providecommand{\bibinfo}[2]{#2}
\ifx\xfnm\relax \def\xfnm[#1]{\unskip,\space#1}\fi
\bibitem[{Minerbo(1978)}]{Minerbo-1978}
\bibinfo{author}{G.~N. Minerbo},
\newblock \bibinfo{title}{Maximum entropy {E}ddington factors},
\newblock \bibinfo{journal}{J. Quant. Spectrosc. Radiat. Transfer}
  \bibinfo{volume}{20} (\bibinfo{year}{1978}) \bibinfo{pages}{541–--545}.
\bibitem[{Dubroca and Fuegas(1999)}]{Dubroca-Feugas-1999}
\bibinfo{author}{B.~Dubroca}, \bibinfo{author}{J.-L. Fuegas},
\newblock \bibinfo{title}{{\'E}tude th\'eorique et num\'erique d'une
  hi\'erarchie de mod\`eles aux moments pour le transfert radiatif},
\newblock \bibinfo{journal}{C.R. Acad. Sci. Paris} \bibinfo{volume}{I. 329}
  (\bibinfo{year}{1999}) \bibinfo{pages}{915--920}.
\bibitem[{Alldredge et~al.(2012)Alldredge, Hauck, and Tits}]{AHT10}
\bibinfo{author}{G.~W. Alldredge}, \bibinfo{author}{C.~D. Hauck},
  \bibinfo{author}{A.~L. Tits},
\newblock \bibinfo{title}{{High-order entropy-based closures for linear
  transport in slab geometry II: A computational study of the optimization
  problem}},
\newblock \bibinfo{journal}{SIAM J. Sci. Comput.} \bibinfo{volume}{34}
  (\bibinfo{year}{2012}) \bibinfo{pages}{B361--B391}.
\bibitem[{Levermore(1996)}]{Levermore-1996}
\bibinfo{author}{C.~D. Levermore},
\newblock \bibinfo{title}{Moment closure hierarchies for kinetic theory},
\newblock \bibinfo{journal}{J. Stat. Phys.} \bibinfo{volume}{83}
  (\bibinfo{year}{1996}) \bibinfo{pages}{1021--1065}.
\bibitem[{Junk(1998)}]{Junk-1998}
\bibinfo{author}{M.~Junk},
\newblock \bibinfo{title}{Domain of definition of {L}evermore's five moment
  system},
\newblock \bibinfo{journal}{J. Stat. Phys.} \bibinfo{volume}{93}
  (\bibinfo{year}{1998}) \bibinfo{pages}{1143--1167}.
\bibitem[{Junk(2000)}]{Junk-2000}
\bibinfo{author}{M.~Junk},
\newblock \bibinfo{title}{Maximum entropy for reduced moment problems},
\newblock \bibinfo{journal}{Math. Models Meth. Appl. Sci.} \bibinfo{volume}{10}
  (\bibinfo{year}{2000}) \bibinfo{pages}{1001--1025}.
\bibitem[{Hauck et~al.(2008)Hauck, Levermore, and
  Tits}]{Hauck-Levermore-Tits-2008}
\bibinfo{author}{C.~D. Hauck}, \bibinfo{author}{C.~D. Levermore},
  \bibinfo{author}{A.~L. Tits},
\newblock \bibinfo{title}{Convex duality and entropy-based moment closures:
  Characterizing degenerate densities},
\newblock \bibinfo{journal}{SIAM J. Control Optim.} \bibinfo{volume}{47}
  (\bibinfo{year}{2008}) \bibinfo{pages}{1977--2015}.
\bibitem[{Schneider(2004)}]{Schneider-2004}
\bibinfo{author}{J.~Schneider},
\newblock \bibinfo{title}{Entropic approximation in kinetic theory},
\newblock \bibinfo{journal}{Math. Model. Numer. Anal.} \bibinfo{volume}{38}
  (\bibinfo{year}{2004}) \bibinfo{pages}{541--561}.
\bibitem[{{Mead} and {Papanicolaou}(1984)}]{Mead-Papanicolaou-1984}
\bibinfo{author}{L.~R. {Mead}}, \bibinfo{author}{N.~{Papanicolaou}},
\newblock \bibinfo{title}{{Maximum entropy in the problem of moments}},
\newblock \bibinfo{journal}{Journal of Mathematical Physics}
  \bibinfo{volume}{25} (\bibinfo{year}{1984}) \bibinfo{pages}{2404--2417}.
\bibitem[{Turek(1988)}]{Turek-1988}
\bibinfo{author}{I.~Turek},
\newblock \bibinfo{title}{A maximum-entropy approach to the density of states
  within the recursion method},
\newblock \bibinfo{journal}{Journal of Physics C: Solid State Physics}
  \bibinfo{volume}{21} (\bibinfo{year}{1988}) \bibinfo{pages}{3251}.
\bibitem[{{Abramowitz} and {Stegun}(1972)}]{abramowitz-stegun-1964}
\bibinfo{author}{M.~{Abramowitz}}, \bibinfo{author}{I.~A. {Stegun}},
  \bibinfo{title}{Handbook of Mathematical Functions with Formulas, Graphs, and
  Mathematical Tables}, \bibinfo{edition}{10\textsuperscript{th}} ed.,
  \bibinfo{publisher}{National Bureau of Standards},
  \bibinfo{address}{Washington, DC}, \bibinfo{year}{1972}.
\bibitem[{Wheeler(1974)}]{Wheeler-1974}
\bibinfo{author}{J.~C. Wheeler},
\newblock \bibinfo{title}{Modified moments and {G}aussian quadratures},
\newblock \bibinfo{journal}{Rocky Mountain Journal of Mathematics}
  \bibinfo{volume}{4} (\bibinfo{year}{1974}) \bibinfo{pages}{287--296}.
\bibitem[{{Abramov}(2009)}]{Abramov-2009}
\bibinfo{author}{R.~V. {Abramov}},
\newblock \bibinfo{title}{{The multidimensional moment-constrained maximum
  entropy problem: A BFGS algorithm with constraint scaling}},
\newblock \bibinfo{journal}{Journal of Computational Physics}
  \bibinfo{volume}{228} (\bibinfo{year}{2009}) \bibinfo{pages}{96--108}.
\bibitem[{Abramov(2009)}]{Abramov-2009-2}
\bibinfo{author}{R.~Abramov},
\newblock \bibinfo{title}{The multidimensional maximum entropy moment problem:
  {A} review on numerical methods},
\newblock \bibinfo{journal}{Communications in Mathematical Sciences}
  \bibinfo{volume}{8} (\bibinfo{year}{2009}) \bibinfo{pages}{377--392}.
\bibitem[{Nocedal and Wright(2006)}]{NW:06}
\bibinfo{author}{J.~Nocedal}, \bibinfo{author}{S.~Wright},
  \bibinfo{title}{Numerical Optimization}, \bibinfo{edition}{2nd} ed.,
  \bibinfo{publisher}{Springer}, \bibinfo{address}{New York},
  \bibinfo{year}{2006}.
\bibitem[{Decarreau et~al.(1992)Decarreau, Hilhorst, Lemaréchal, and
  Navaza}]{Decarreau-Hilhorst-Lemarichal-Navaza-1992}
\bibinfo{author}{A.~Decarreau}, \bibinfo{author}{D.~Hilhorst},
  \bibinfo{author}{C.~Lemaréchal}, \bibinfo{author}{J.~Navaza},
\newblock \bibinfo{title}{Dual methods in entropy maximization. {A}pplication
  to some problems in crystallography},
\newblock \bibinfo{journal}{SIAM Journal on Optimization} \bibinfo{volume}{2}
  (\bibinfo{year}{1992}) \bibinfo{pages}{173--197}.
\bibitem[{Vi{\'e} et~al.(2011)Vi{\'e}, Laurent, and
  Massot}]{Vie-Laurent-Massot-2012}
\bibinfo{author}{A.~Vi{\'e}}, \bibinfo{author}{F.~Laurent},
  \bibinfo{author}{M.~Massot},
\newblock \bibinfo{title}{Size-velocity correlations in high order moment
  methods for polydisperse evaporating sprays: modelling and numerical issues},
\newblock \bibinfo{journal}{J. Comput. Phys.}  (\bibinfo{year}{2011})
  \bibinfo{pages}{1--32}.
\bibitem[{Borwein and Huang(1995)}]{Borwein-Huang-1995}
\bibinfo{author}{J.~Borwein}, \bibinfo{author}{W.~Huang},
\newblock \bibinfo{title}{A fast heuristic method for polynomial moment
  problems with {B}oltzmann-–{S}hannon entropy},
\newblock \bibinfo{journal}{SIAM Journal on Optimization} \bibinfo{volume}{5}
  (\bibinfo{year}{1995}) \bibinfo{pages}{68--99}.
\bibitem[{{Huang}(1996)}]{Huang-1996}
\bibinfo{author}{W.~{Huang}},
\newblock \bibinfo{title}{{Heuristic solutions to polynomial moment problems
  with some convex entropic objectives}},
\newblock \bibinfo{journal}{Numerical Algorithms} \bibinfo{volume}{12}
  (\bibinfo{year}{1996}) \bibinfo{pages}{297--308}.
\bibitem[{Bandyopadhyay et~al.(2005)Bandyopadhyay, Bhattacharya, Biswas, and
  Drabold}]{Bandyopadhyay-Bhattacharya-Biswas-Parthapratim-Drabold-2005}
\bibinfo{author}{K.~Bandyopadhyay}, \bibinfo{author}{A.~K. Bhattacharya},
  \bibinfo{author}{P.~Biswas}, \bibinfo{author}{D.~A. Drabold},
\newblock \bibinfo{title}{Maximum entropy and the problem of moments: A stable
  algorithm},
\newblock \bibinfo{journal}{Phys. Rev. E} \bibinfo{volume}{71}
  (\bibinfo{year}{2005}) \bibinfo{pages}{057701}.
\bibitem[{Fang et~al.(1997)Fang, Rajasekera, and
  Tsao}]{Fang-Rajasekera-Tsao-1997}
\bibinfo{author}{S.-C. Fang}, \bibinfo{author}{J.~R. Rajasekera},
  \bibinfo{author}{H.-S.~J. Tsao}, \bibinfo{title}{Entropy Optimization and
  Mathematical Programming}, \bibinfo{publisher}{Kluwer Academic Publishers},
  \bibinfo{address}{Boston}, \bibinfo{year}{1997}.
\bibitem[{Bregman(1967)}]{Bregman-1967}
\bibinfo{author}{L.~M. Bregman},
\newblock \bibinfo{title}{The relaxation method of finding the common point of
  convex sets and its application to the solution of problems in convex
  programming},
\newblock \bibinfo{journal}{USSR Computational Mathematics and Mathematical
  Physics} \bibinfo{volume}{7} (\bibinfo{year}{1967})
  \bibinfo{pages}{200--217}.
\bibitem[{Gordon et~al.(1970)Gordon, Bender, and
  Herman}]{Gordon-Bender-Herman-1970}
\bibinfo{author}{R.~Gordon}, \bibinfo{author}{R.~Bender},
  \bibinfo{author}{G.~Herman},
\newblock \bibinfo{title}{Algebraic reconstruction techniques ({ART}) for
  three-dimensional electron microscopy and x-ray photography},
\newblock \bibinfo{journal}{Journal of theoretical biology}
  \bibinfo{volume}{29} (\bibinfo{year}{1970}).
\bibitem[{Hauck(2011)}]{Hauck-2011}
\bibinfo{author}{C.~D. Hauck},
\newblock \bibinfo{title}{High-order entropy-based closures for linear
  transport in slab geometry},
\newblock \bibinfo{journal}{Comm. Math. Sci.} \bibinfo{volume}{9}
  (\bibinfo{year}{2011}) \bibinfo{pages}{187--205}.
\bibitem[{Lewis and W.~F.~Miller(1984)}]{Lewis-Miller-1984}
\bibinfo{author}{E.~E. Lewis}, \bibinfo{author}{J.~W.~F.~Miller},
  \bibinfo{title}{Computational Methods in Neutron Transport},
  \bibinfo{publisher}{John Wiley and Sons}, \bibinfo{address}{New York},
  \bibinfo{year}{1984}.
\bibitem[{Shohat and Tamarkin(1943)}]{Shohat-Tamarkin-1943}
\bibinfo{author}{J.~A. Shohat}, \bibinfo{author}{J.~D. Tamarkin},
  \bibinfo{title}{The Problem of Moments}, \bibinfo{publisher}{American
  Mathematical Society}, \bibinfo{address}{New York}, \bibinfo{year}{1943}.
\bibitem[{Borwein and Lewis(1991)}]{Borwein-Lewis-1991}
\bibinfo{author}{J.~M. Borwein}, \bibinfo{author}{A.~S. Lewis},
\newblock \bibinfo{title}{Duality relationships for entropy-like minimization
  problems},
\newblock \bibinfo{journal}{SIAM J. Control Optim.} \bibinfo{volume}{1}
  (\bibinfo{year}{1991}) \bibinfo{pages}{191--205}.
\bibitem[{Curto and Fialkow(1991)}]{Curto-Fialkow-1991}
\bibinfo{author}{R.~E. Curto}, \bibinfo{author}{L.~A. Fialkow},
\newblock \bibinfo{title}{{Recursiveness, positivity and truncated moment
  problems}},
\newblock \bibinfo{journal}{Houston Journal of Mathematics} \bibinfo{volume}{4}
  (\bibinfo{year}{1991}) \bibinfo{pages}{603--635}.
\bibitem[{Gottlieb et~al.(2001)Gottlieb, Shu, and
  Tadmor}]{Gottlieb-Shu-Tadmor-2001}
\bibinfo{author}{S.~Gottlieb}, \bibinfo{author}{C.-W. Shu},
  \bibinfo{author}{E.~Tadmor},
\newblock \bibinfo{title}{Strong stability-preserving high-order time
  discretization methods},
\newblock \bibinfo{journal}{SIAM Review} \bibinfo{volume}{43}
  (\bibinfo{year}{2001}) \bibinfo{pages}{89--112}.
\bibitem[{Armijo(1966)}]{armijo}
\bibinfo{author}{L.~Armijo},
\newblock \bibinfo{title}{Minimization of functions having {L}ipschitz
  continuous first partial derivatives},
\newblock \bibinfo{journal}{Pacific J. Math} \bibinfo{volume}{16}
  (\bibinfo{year}{1966}) \bibinfo{pages}{1--3}.
\bibitem[{Olbrant et~al.(2012)Olbrant, Hauck, and
  Frank}]{Olbrant-Hauck-Frank-2012}
\bibinfo{author}{E.~Olbrant}, \bibinfo{author}{C.~Hauck},
  \bibinfo{author}{M.~Frank},
\newblock \bibinfo{title}{A realizability-preserving discontinuous {G}alerkin
  method for the {M}1 model of radiative transfer},
\newblock \bibinfo{journal}{Journal of Computational Physics}
  (\bibinfo{year}{2012}).
\bibitem[{Lyness(1983)}]{Lyness-1983}
\bibinfo{author}{J.~Lyness},
\newblock \bibinfo{title}{When not to use an automatic quadrature routine},
\newblock \bibinfo{journal}{SIAM Review} \bibinfo{volume}{25}
  (\bibinfo{year}{1983}) \bibinfo{pages}{63--87}.
\bibitem[{Knupp and Salari(2000)}]{Knupp-Salari-2000}
\bibinfo{author}{P.~Knupp}, \bibinfo{author}{K.~Salari}, \bibinfo{title}{Code
  Verification by the Method of Manufactured Solutions},
  \bibinfo{type}{Unlimited Release} \bibinfo{number}{SAND20001444}, Sandia
  National Laboratory, \bibinfo{year}{2000}.
\bibitem[{Roache(2002)}]{Roache-2002}
\bibinfo{author}{P.~J. Roache},
\newblock \bibinfo{title}{Code verification by the method of manufactured
  solutions},
\newblock \bibinfo{journal}{J. Fluids Eng.}  (\bibinfo{year}{2002})
  \bibinfo{pages}{4--10}.

\end{thebibliography}

\end{document}